\DeclareSIUnit\Molar{M}
\newcommand{\children}{\mathrm{children}}
\newcommand{\other}{\mathsf{other}}
\newtheorem{lemma}{Lemma}
\newtheorem{assumption}{Assumption}
\newtheorem{definition}{Definition}
\newtheorem{theorem}{Theorem}
\newtheorem{approximation}{Approximation}
\DeclareMathOperator*{\Order}{Order}
\DeclareMathOperator*{\Orderlist}{Orderlist}
\DeclareMathOperator*{\Leaves}{Leaves}
\DeclareMathOperator*{\Desc}{Desc}
\DeclareMathOperator{\N}{\ensuremath{\mathtt{N}}}
\DeclareMathOperator*{\LL}{\ensuremath{\mathtt{L}}}
\DeclareMathOperator{\C}{\ensuremath{\mathtt{C}}}
\DeclareMathOperator*{\argmax}{arg\,max}
\DeclareMathOperator*{\pos}{pos}
\DeclareMathOperator*{\Alleles}{Alleles}
\DeclareMathOperator*{\Apply}{Apply}
\DeclareMathOperator*{\AncState}{AncState}
\DeclareMathOperator*{\TargStat}{TargStat}
\DeclareMathOperator*{\IT}{IT}
\DeclareMathOperator*{\WC}{WC}
\DeclareMathOperator*{\TT}{TT}
\DeclareMathOperator*{\SG}{SG}
\DeclareMathOperator*{\SGWC}{SGWC}
\DeclareMathOperator*{\ow}{otherwise}
\DeclareMathOperator{\Pen}{Pen}
\DeclareMathOperator*{\shared}{shared}
\DeclareMathOperator*{\NBinom}{NB}
\DeclareMathOperator*{\boost}{boost}
\DeclareMathOperator*{\lump}{lump}
\title{Estimation of cell lineage trees by maximum-likelihood phylogenetics}
\author[1,2*]{Jean Feng}
\author[2,3]{William S DeWitt III}
\author[3]{Aaron McKenna}
\author[1]{Noah Simon}
\author[1]{Amy Willis}
\author[2,3*]{Frederick A Matsen IV}
\affil[1]{Department of Biostatistics, University of Washington, Seattle, United States}
\affil[2]{Computational Biology Program, Fred Hutchinson Cancer Research Center, Seattle, United States}
\affil[3]{Department of Genome Sciences, University of Washington, Seattle, United States}
\begin{document}
\maketitle

\section*{Abstract}

CRISPR technology has enabled large-scale cell lineage tracing for complex multicellular organisms by mutating synthetic genomic barcodes during organismal development.
However, these sophisticated biological tools currently use ad-hoc and outmoded computational methods to reconstruct the cell lineage tree from the mutated barcodes.
Because these methods are agnostic to the biological mechanism, they are unable to take full advantage of the data's structure.
We propose a statistical model for the mutation process and develop a procedure to estimate the tree topology, branch lengths, and mutation parameters by iteratively applying penalized maximum likelihood estimation.
In contrast to existing techniques, our method estimates time along each branch, rather than number of mutation events, thus providing a detailed account of tissue-type differentiation.
Via simulations, we demonstrate that our method is substantially more accurate than existing approaches.
Our reconstructed trees also better recapitulate known aspects of zebrafish development and reproduce similar results across fish replicates.

\section{Introduction}

Recent advancements in genome editing with CRISPR (clustered regularly interspaced short palindromic repeats) have renewed interest in the construction of large-scale cell lineage trees for complex organisms \citep{McKennaaaf7907, Woodworth2017-jc, Spanjaard2018-ge, Schmidt2017-ol}.
These lineage-tracing technologies, such as the GESTALT method \citep{McKennaaaf7907} that we focus on here\footnote{Genome Editing of Synthetic Target Arrays for Lineage Tracing}, inject Cas9 and single-guide RNA (sgRNA) into the embryo of a transgenic organism harboring an array of CRISPR/Cas9 targets separated by short linker sequences (barcodes).
These barcodes accumulate mutations because Cas9 cuts are imperfectly repaired by non-homologous end joining (NHEJ) during development while Cas9 and sgRNA are available.
The resulting mutations are passed from parent cell to daughter cell, which thereby encodes the ontogeny.
Mutated barcodes are later sequenced from the organism, and computational phylogenetic methods are then used to estimate the cell lineage tree.
Because these barcodes have great diversity, GESTALT provides researchers with rich data with the potential to reveal organism and disease development in high resolution.

The most common phylogenetic methods used to analyze GESTALT data are Camin-Sokal (C-S) parsimony \citep{Camin1965-zt} and the neighbor-joining distance-based method \citep{Saitou1987-bw}.
However these methods are blind to the operation of the GESTALT mutation process, so the accuracy of the estimated trees are poor \citep{Salvador-Martinez2018-dw}.
In addition, existing methods supply branch length estimates in terms of an abstract notion of distance rather than time, limiting their interpretability.
Therefore, these estimated trees only provide ordering information between nodes on the same lineage, but not for nodes on parallel lineages.
In addition, C-S parsimony is unable to distinguish between equally parsimonious trees, so obtaining a single tree estimate is difficult in practice: We find over ten thousand parsimony-optimal trees for existing datasets.
To address these challenges, we set out to develop a statistical model of the mutation process, allowing us to estimate branch lengths that correspond to time as well as the mutation parameters.

No appropriate likelihood model is currently available for GESTALT because CRISPR arrays violate many classical statistical phylogenetic assumptions.
First, Cas9 enzymes may cut two targets simultaneously with the entire intervening sequence deleted during NHEJ.
In addition, once the nucleotide sequence for a target is modified, Cas9 is no longer able to cut the target.
Thus sites are not independent, the mutation process is irreversible, and cuts can introduce long insertion and/or deletions.
In contrast, the classical phylogenetic assumptions are that individual nucleotide positions are independent and that the mutation process is reversible and only introduces point mutations \citep{Felsenstein2004-qq,Yang2014-ka}.
Finally, these types of methods assume that there are many independent observations --- their estimates are unstable when the effective sample size is much smaller than the number of parameters to estimate \citep{Goolsby2016-xg, Adams2018-uu, Julien2018-yt}.

In this paper, we introduce GAPML (\underline{G}ESTALT \underline{a}nalysis using \underline{p}enalized \underline{M}aximum \underline{L}ikelihood), a statistical model for GESTALT and tree-estimation method (including topology and branch lengths) by an iterative procedure based on maximum likelihood estimation.
We model barcode mutations as a two-step process: Targets are cut according to a continuous time Markov chain, immediately followed by random insertions or deletions of nucleotides (indels).
Our method does not rely on the aforementioned assumptions.
Instead we introduce the following assumptions tailored to the GESTALT setting:
\begin{itemize}
\item(\emph{outermost-cut}) an indel is introduced by cuts at the outermost cut sites
\item(\emph{target-rate}) the cut rates only depend on which targets are active (i.e.\ able to be cut)
\item(\emph{indel-probability}) the conditional probability that an indel is introduced only depends on which targets were cut.
\end{itemize}
From these assumptions, we show that the Markov process is ``lumpable`` and the aggregated process is compatible with Felsenstein's pruning algorithm, thereby enabling efficient computation of the likelihood \citep{Kemeny1976-ll, Felsenstein1981-zs}.
Since only a small number of barcodes are usually available in practice, we propose a regularization method on the branch length and mutation parameters to stabilize and improve estimates.
Our method extends maximum-likelihood phylogenetic methods with branch-length penalties \citep{Kim2008-rk} to the setting where the tree topology is unknown.

We validate our method on simulated and empirical data.
In simulations, our method is more accurate than current tree-estimation methods.
In addition, we reconstruct cell lineage trees of transgenic zebrafish from \citet{McKennaaaf7907} and show that our trees better reflect the known biology of zebrafish development.
Based on these results, we conclude that with appropriate statistical techniques it is possible to reconstruct an accurate cell lineage tree with current GESTALT technology, which addresses some concerns raised in \citet{Salvador-Martinez2018-dw}.
Our simulation engine and estimation method are available on Github (\url{https://github.com/matsengrp/gestaltamania}).

\section{Results}

\subsection{Brief description of our probabilistic GESTALT evolution model}
\begin{figure}
	\centering
	\includegraphics[width=0.7\linewidth]{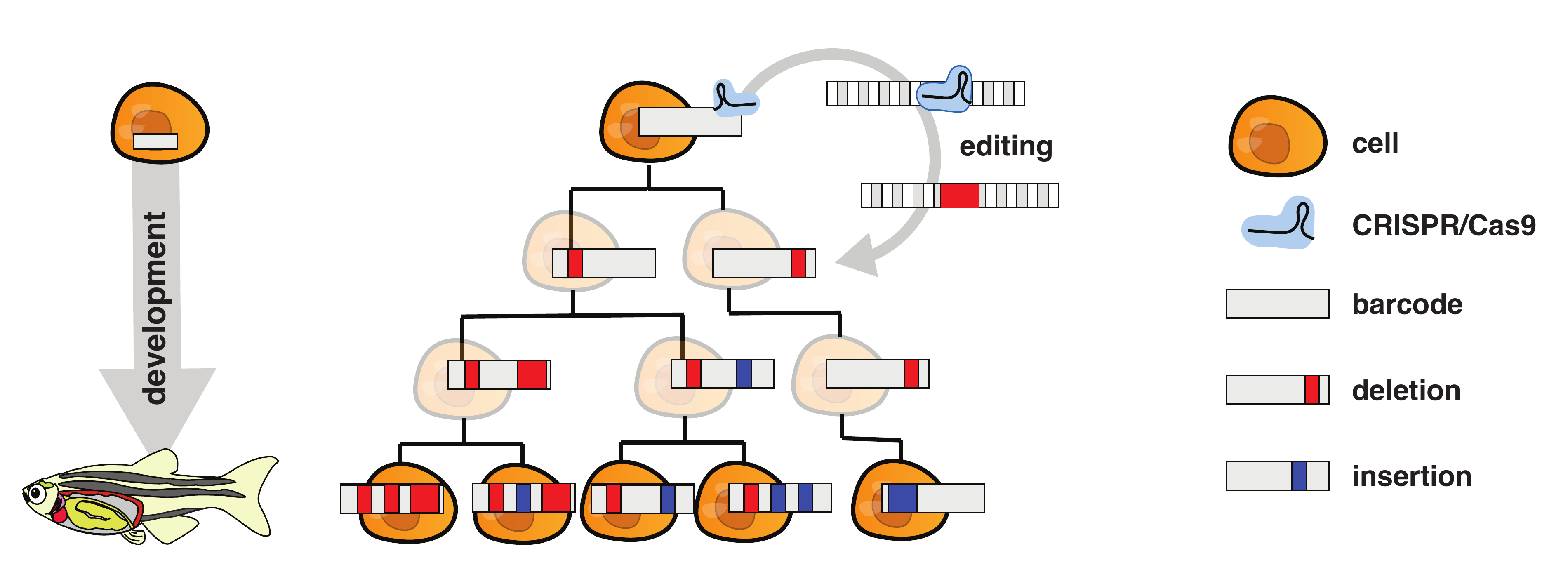}
	\caption{
		An unmodified array of CRISPR/Cas9 target sites (i.e., a GESTALT barcode) is engineered into an organism's genome. CRISPR/Cas9 enzyme complex with corresponding guide sequences are directed to make double-stranded breaks in the barcode. These breaks are repaired in an error-prone fashion resulting in insertions and deletions at target sites. These insertions and deletions will accumulate in a lineage specific fashion, passed from mother to daughter cell, and further insertions and deletions can add additional information. These integrated barcodes can then be recovered by DNA sequencing at the timepoint of interest.
	}
	\label{fig:gestaltexplain}
\end{figure}

We model the GESTALT barcode (see Figure~\ref{fig:gestaltexplain}) as a continuous time Markov chain where the state space is the set of all nucleotide sequences.
A state transition is an instantaneous event where either (1) an unmodified target is cut then the repair process inserts/deletes nucleotides around the cut site, or (2) two unmodified targets are cut, the intervening sequence is removed, and the repair process inserts/deletes nucleotides around the cut sites.
The transition rate between barcode sequences depends on the entire sequence and each target is associated with a separate cut rate.
If multiple copies of the barcode are used, we assume the barcodes are on separate chromosomes or are sufficiently far apart that they act in an independent and identically distributed (iid) manner.

We use this Markov model for GESTALT barcodes evolving along a cell lineage tree where the vertices represent cell divisions and the edge lengths represent time between cell divisions.
The full cell lineage tree describes the relationships of all cells in the organism.
Since we only collect a small sample of all the cells, our goal is to recover the subtree describing the development of the observed sequences.

To estimate this subtree, our method needs to calculate the likelihood of possible trees and model parameters, which requires an enumeration of the possible barcodes at each internal node.
However a full enumeration is infeasible.
For example, a double cut (transition (2) described above) could remove one or more targets, which could have themselves been modified in an infinite number of possible ways before the double cut erased this history.

We have carefully chosen our assumptions to make this likelihood tractable yet maintain biological realism.
Briefly, under the \emph{target-rate} and \emph{indel-probability} assumptions, we can group states together if they share the same set of unmodified targets to calculate the likelihood more efficiently, a property known more formally known as ``lumpability'' (Figure~\ref{fig:lumpability}).
Since the number of targets in a barcode is typically small (e.g. 10 targets per barcode in \citet{McKennaaaf7907, Schmidt2017-ol}), calculating the likelihood becomes computationally feasible.
In addition, the \emph{outermost-cut} assumption allows us to exclude many groups from the likelihood computation so that the number of enumerated groups at most internal nodes is typically linear in the number of unique indels.
\begin{figure}
	\begin{center}
	\includegraphics[width=0.5\linewidth]{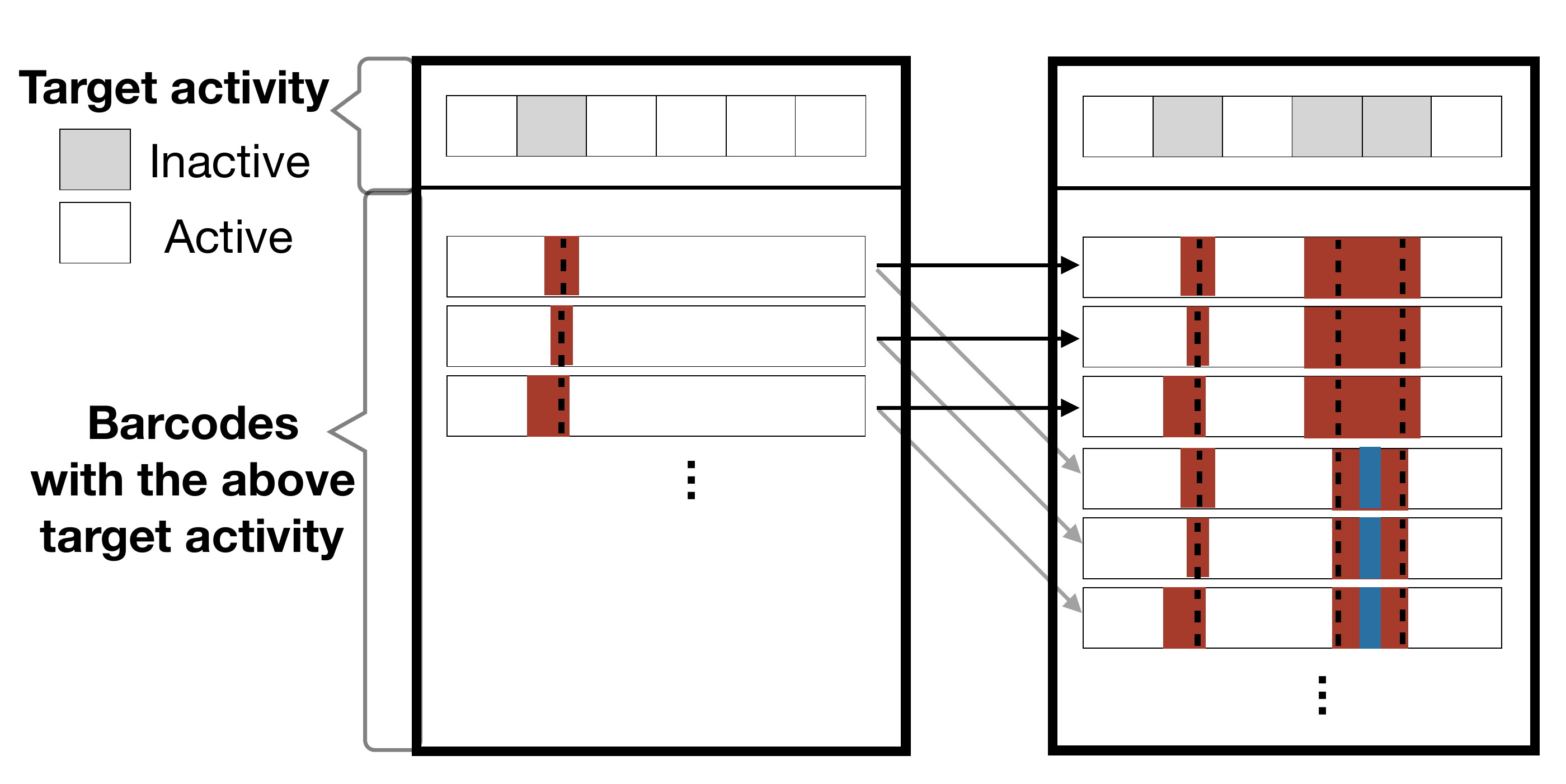}
	\end{center}
	\caption{
		An example of lumping together barcodes that share the same target activity.
		The two outer boxes correspond to two of the lumped states.
		The left box is the grouped state for possible ancestral barcode states where the second target is no longer active, while the right box represents when the second, fourth, and fifth targets are no longer active.
		The arrows represent possible transitions and the color represents the transition rates.
		Notice that each barcode in the left box has the same set of outgoing arrows.
		To show that the states are lumpable, we show that the total transition rate out of a barcode in the left box to the right box is the same for all barcodes in the left box.
	}
	\label{fig:lumpability}
\end{figure}

\subsection{A maximum-likelihood tree estimation procedure}
\label{sec:mle_overview}
We follow current best practice for maximum-likelihood phylogenetics by optimizing the tree and mutation parameters of our model using a hill-climbing iterative search over tree space.
First, we initialize the tree topology by selecting a random parsimony-optimal tree from C-S parsimony.
At each subtree prune and regraft (SPR) iteration, we select a random subtree and regraft where the penalized log likelihood is highest (Figure~\ref{fig:algo_spr}).
The method stops when the tree no longer changes.
At each iteration, we only consider SPR moves that preserve the parsimony score as we have found that the parsimony-optimal trees tend to have the highest likelihoods (Figure~\ref{fig:parsimony_log_lik}).
The entire algorithm is presented in Algorithm~\ref{algo:whole_thing}.
We discuss some important details of our method below.

We maximize a penalized log likelihood as opposed to the unpenalized version since the latter tends to give unstable and inaccurate estimates when the dataset is generated by a small number of barcodes.
In particular, the length of the leaf branches and the variance of the target rates tend to be overestimated in such settings.
Thus we use a penalty function that discourages large differences in branch lengths and target cut rates.
Penalization introduces a slight complication since certain candidate SPR moves have naturally larger penalties.
In order to make the penalty comparable between candidate SPR moves, we randomly select a leaf in the subtree and apply the candidate SPR moves only to that single leaf.
When scoring the SPR moves, the penalty is calculated for the shared subtree, i.e. the tree where we ignore the random leaf.
Finally, we regraft the entire subtree where the penalized log likelihood is highest.

Our method is able to estimate the tree at a finer resolution than existing methods (Figure~\ref{fig:tree_resolution}).
The most commonly used method, C-S parsimony, produces estimates at the coarsest resolution: For nodes where the ordering is ambiguous, the method simply groups them under a single parent node.
This commonly results in tree estimates with many multifurcating nodes (nodes with 3+ children) that have ten or more children.
Our method uses the estimated model parameters to estimate the order and time of ambiguous nodes by projecting the subtree onto the space of caterpillar trees (Figure~\ref{fig:caterpillar}).
By producing tree estimates at a finer resolution, our method allows researchers to learn more about the structure of the true cell lineage tree.
In addition, taking advantage of the irreversibility property, we efficiently estimate the branch ordering within the caterpillar trees by solving a single optimization problem, rather than considering each possible ordering separately.

\begin{figure}
	\begin{subfigure}{\textwidth}
		\begin{center}
			\includegraphics[width=0.6\textwidth]{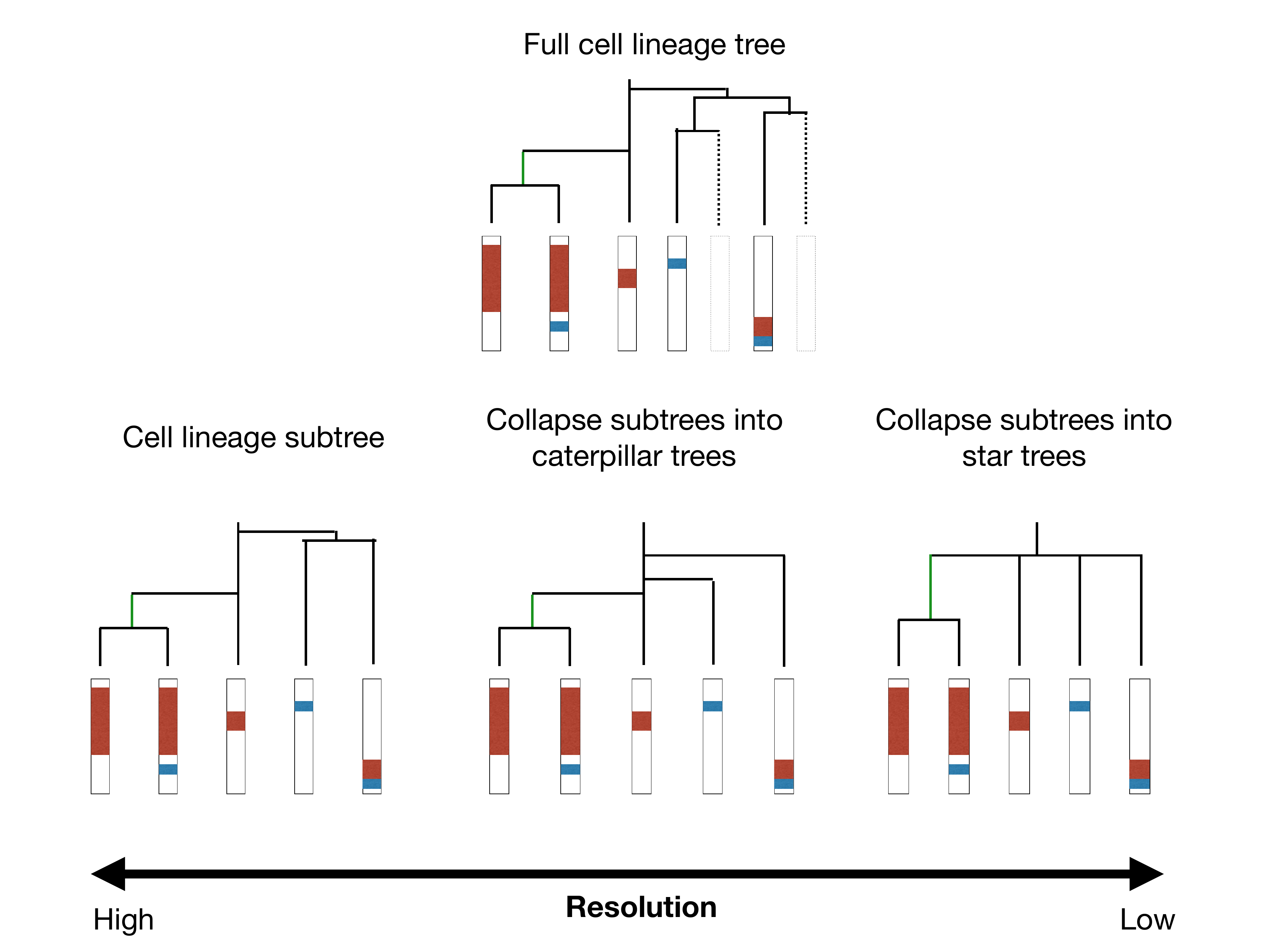}
		\end{center}
		\caption{
	            We show the subtree of a full cell lineage tree (top) at different resolutions.
		    The highest resolution preserves the bifurcating tree structure (left).
		    The lowest resolution preserves very coarse order information by collapsing a subtree into a multifurcating node (right).
		    In between these two resolutions, we can project the tree onto the space of caterpillar trees and preserve the ordering information between nodes (middle).
		}
		\label{fig:tree_resolution}
	\end{subfigure}
	\begin{subfigure}{0.45\textwidth}
		\begin{center}
			\includegraphics[width=\textwidth]{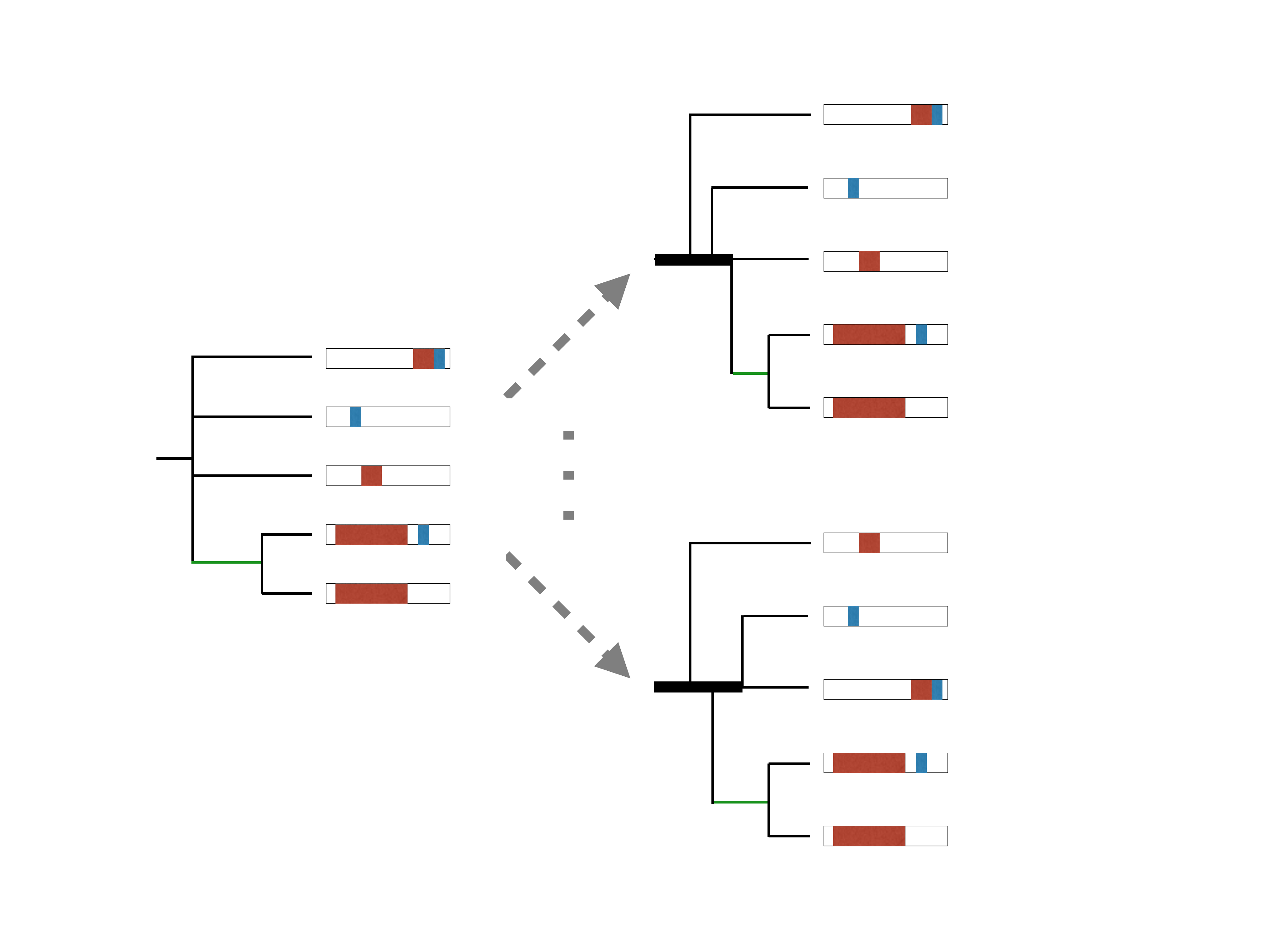}
		\end{center}
		\caption{
			We resolve each multifurcation in the tree into a caterpillar tree, which places all the children nodes along a central path.
			This central path, called a caterpillar spine, is indicated by the bold lines in the trees on the right.
			There are many possible orderings in a caterpillar tree.
			Here we show two such orderings.
			Our method chooses the ordering that maximizes the penalized log likelihood.
		}
		\label{fig:caterpillar}
	\end{subfigure}
	\hspace{0.1in}
	\begin{subfigure}{0.45\textwidth}
		\begin{center}
			\includegraphics[width=\textwidth]{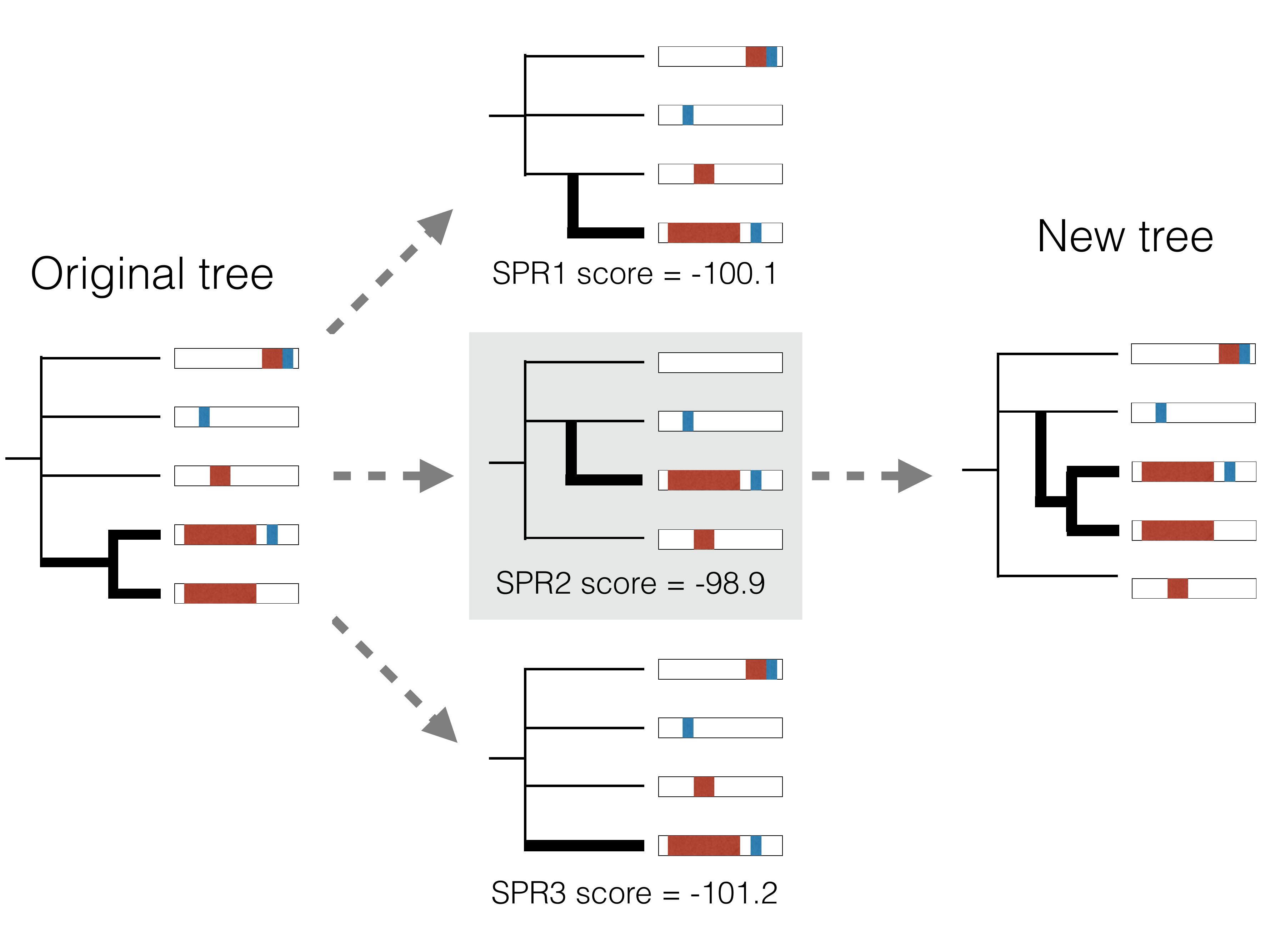}
		\end{center}
		\caption{To tune the tree topology, we select a random subtree (left) and score possible SPR moves that preserve the parsimony score by selecting a random subleaf and calculating the maximized penalized log likelihood of the resulting tree (middle).
			We then update the tree by applying the SPR move with the highest score (right).}
		\label{fig:algo_spr}
	\end{subfigure}
\caption{Overview of our tree estimation method.}
\end{figure}

\begin{algorithm}
	\caption{Cell lineage tree reconstruction for penalty parameter $\boldsymbol{\kappa}$}
	\label{algo:whole_thing}
	\begin{algorithmic}
		\STATE{Initialize tree $\mathbb{T}$. Let the sequenced GESTALT barcodes be denoted $D$.}
		\FOR{Iteration $k$}
		\STATE{Pick a random subtree from $\mathbb{T}$. Select one of the leaves $\C$ of the subtree.}
		\FOR{each possible SPR move involving the subtree that doesn't change the parsimony score (including the no-op)}
		\STATE{Construct $\mathbb{T}'$ by applying the SPR to leaf $\C$; let $\mathbb{T}'_{\shared}$ be the subtree of $\mathbb{T}'$ when excluding $\C$}
		\STATE{
			Set the score of the SPR move as the penalized log likelihood maximized with respect to the branch length parameters $\ell$ and deactivation and indel process parameters $\theta$ and $\beta$, respectively:
			\begin{align*}
			\max_{\ell, \theta, \beta}
			&\log \underbrace{\Pr(D, \text{the barcode is constant along all caterpillar spines}; \mathbb{T}', \ell, \theta, \beta)}_{\text{Approximation to the likelihood}}\\
			& + \underbrace{\Pen_{\boldsymbol{\kappa}}(\mathbb{T}'_{\shared}, \ell, \theta, \beta)}_{\text{Penalty on branch lengths and mutation parameters}}
			\end{align*}
		}
		\ENDFOR
		\STATE{Update the tree $\mathbb{T}$ by performing the SPR move on the subtree that maximizes the score}
		\ENDFOR
	\end{algorithmic}
\end{algorithm}

\subsection{Simulation engine and results}
We built a simulation engine of the GESTALT mutation process during embryonic development.
Since cell divisions during embryonic development begin in a fast metasynchronous fashion and gradually become more asynchronous \citep{Moody1998-uq}, the simulation engine generates a cell lineage tree by performing a sequence of synchronous cell divisions followed by a birth-death process where the birth rate decays with time.
We mutate the barcode along this cell lineage tree according to our model of the GESTALT mutation process.
The simulation engine can generate data that closely resembles the data collected from zebrafish embryos in  \citet{McKennaaaf7907} (Figure~\ref{fig:simulator}).
We can input different barcode designs into the simulation engine to understand how they affect our ability to reconstruct the cell lineage tree.

We used our simulation engine to assess the validity and accuracy of the estimated model parameters and tree.
Because our method infers branch lengths, we evaluate the accuracy using two metrics that include branch length information: BHV distance \citep{Billera2001-ii} and internal node height correlation (see Figure~\ref{fig:internal_node_height}).
We compare our method to a simpler model-free approach: estimating the tree topology using C-S parsimony \citep{Camin1965-zt} or neighbor-joining (NJ) \citep{Saitou1987-bw} and then applying semiparametric rate smoothing (\texttt{chronos} in the \texttt{R} package \texttt{ape}) to estimate branch lengths \citep{Sanderson2002-gs}.
We will refer to these two approaches as ``CS+chronos'' and ``NJ+chronos.''
We do not compare against the original tree estimates from C-S parsimony and neighbor-joining since those branch lengths correspond to edit distance and have very poor performance according to our two metrics.
Our method consistently outperforms these alternative methods (Figure~\ref{table:big_sim}).
We note that previous \emph{in silico} analyses of GESTALT measure accuracy in terms of the Robinson-Foulds (R-F) distance, which only depends on the tree topology \citep{Salvador-Martinez2018-dw}.
However the R-F distance does not recognize that different tree topologies can be very similar depending on their branch lengths, and is therefore too coarse as a performance metric.

We find, based on the simulations, that our likelihood-based method improves in performance as the number of independent barcodes increases (Figures \ref{fig:consistency}).
In a simulation with a six-target barcode, the estimated tree from a single barcode has internal node height correlation of 0.5 with the true tree whereas using four barcodes increases the correlation to 0.9.
Even though other analyses of GESTALT have recommended increasing the number of targets in a single barcode to improve tree estimation \citep{Salvador-Martinez2018-dw}, it is more effective to increase the number of targets by introducing independent (and identical) barcodes (Figure~\ref{fig:many_vs_one}).

\begin{figure}
	\begin{subfigure}{0.46\textwidth}
		\begin{center}
            \includegraphics[height=0.22\textheight]{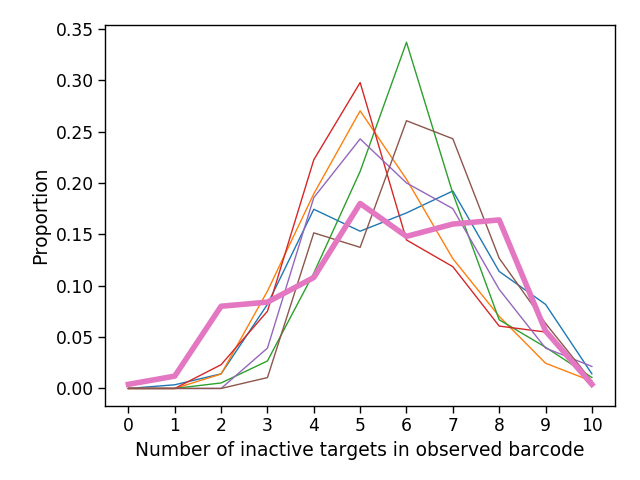}
			\includegraphics[height=0.22\textheight]{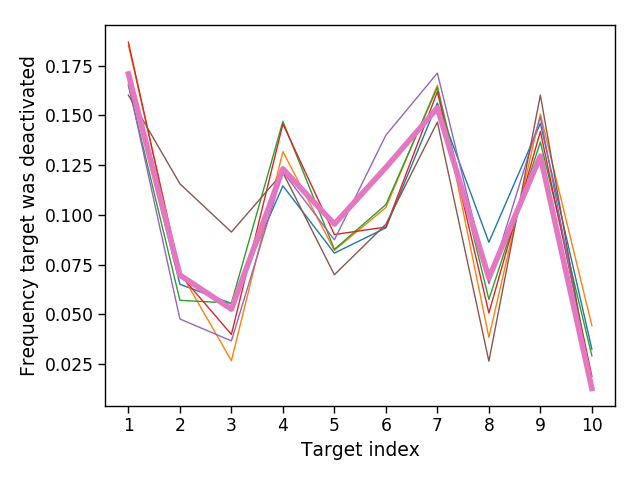}
			\includegraphics[height=0.22\textheight]{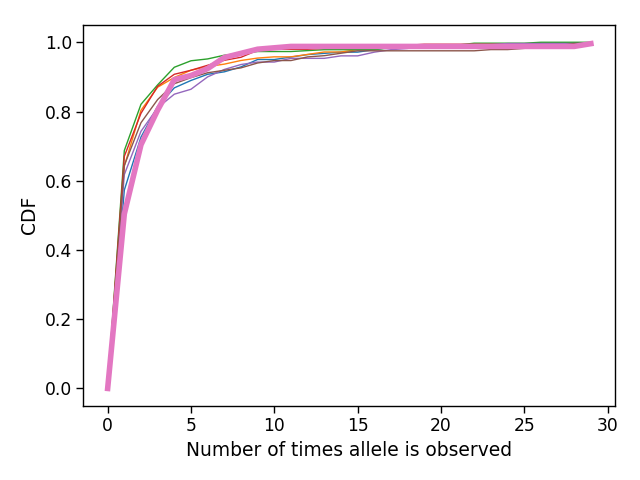}
		\end{center}
		\caption{
			A comparison of summary statistics on the simulated data (each thin line is a replicate; color used only to aid in distinguishing between replicates) vs. 250 randomly selected alleles from the first dome fish (bolded line).
			We generated data from our simulation engine and randomly sampled leaves to obtain around 250 unique alleles.
			The distribution of inactive targets and allele abundances (the number of times an allele is observed) are similar.
		}
		\label{fig:simulator}
	\end{subfigure}
	\hspace{0.4in}
	\begin{subfigure}{0.45\textwidth}
		\begin{center}
			\includegraphics[height=0.6\textheight]{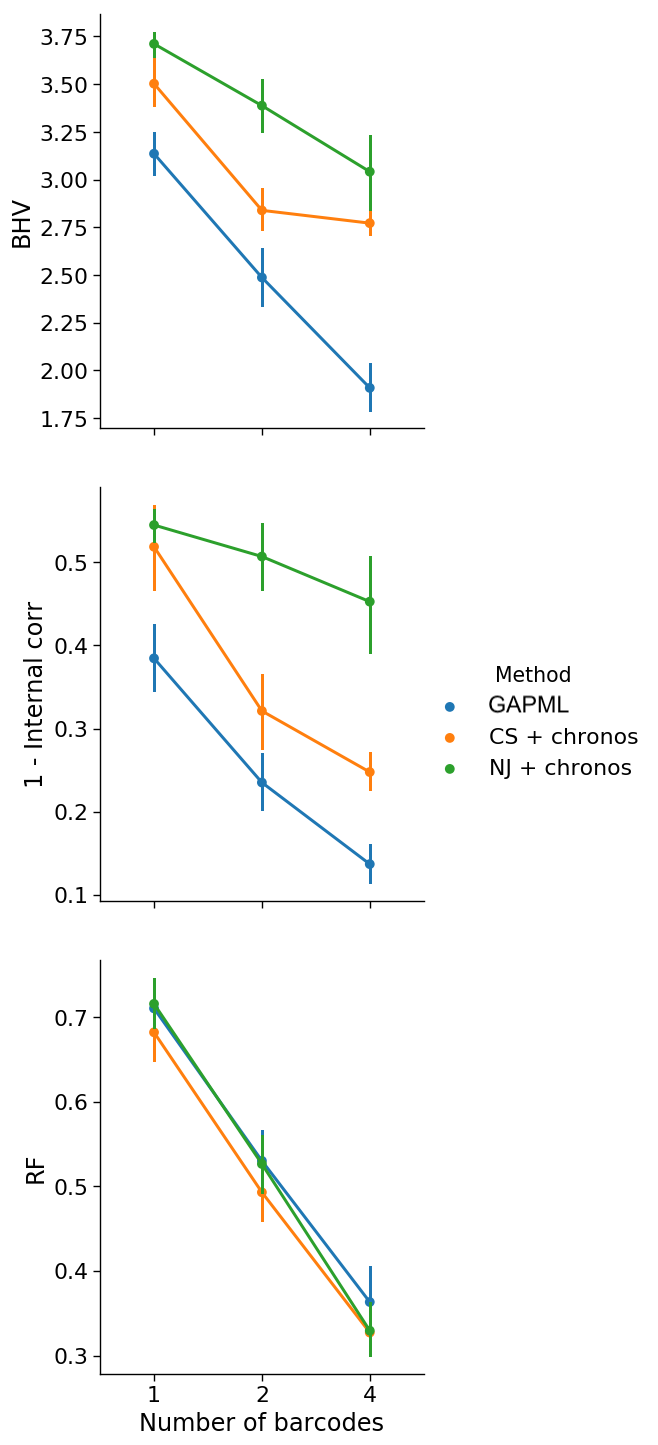}
		\end{center}
		\vspace{-0.1in}
		\caption{
			Results for data simulated from a barcode with six targets and randomly sampled to obtain roughly 100 unique alleles.
			The performance of GAPML improves with the number of barcodes.
			GAPML performs significantly better than the other methods in terms of BHV (top) and the internal node height correlation metrics (middle).
			The methods are hard to distinguish with respect to the Robinson-Foulds (RF) metric (bottom).
		}
		\label{fig:consistency}
	\end{subfigure}
	\begin{subfigure}{\textwidth}
		\begin{center}
        \begin{tabular}{c|cc}
        	Method & BHV & 1 - Internal node correlation\\
        	\toprule
        	GAPML & 5.68 (5.51, 5.85) & 0.45 (0.42, 0.48) \\
        	CS + chronos & 6.39 (6.20, 6.58) & 0.58 (0.52, 0.64) \\
        	NJ + chronos & 8.48 (8.38, 8.58) & 0.66 (0.64, 0.68) \\
        \end{tabular}
	\end{center}
        \caption{Comparison of methods on simulated data using a single barcode with ten targets and around 200 leaves. The 95\% confidence intervals are given in parentheses.}
        \label{table:big_sim}
	\end{subfigure}
	\caption{
       Simulation results. We denote Camin-Sokal parsimony and neighbor-joining with nonparametric rate smoothing as CS+chronos and NJ+chronos, respectively.
	}
	\label{fig:simulation_things}
\end{figure}

\subsection{Improved zebrafish lineage reconstruction}

To validate our method, we reconstructed cell lineages using our method and other tree-building methods on GESTALT data from zebrafish \citep{McKennaaaf7907}.
As the true cell lineage tree is not known for zebrafish, we employed more indirect measures of validity.
For each method, we asked (1) if similar conclusions could be made across different biological replicates and (2) if the tree estimates aligned with the known biology of zebrafish development.
The dataset includes two adult zebrafish where cells were sampled from dissected organs.
The organs were chosen to represent all germ layers: the brain and both eyes (ectodermal), the intestinal bulb and posterior intestine (endodermal), the heart and blood (mesodermal), and the gills (neural crest, with contributions from other germ layers).
The heart was further divided into four samples--- a piece of heart tissue, dissociated unsorted cells (DHCs), FACS- sorted GFP+ cardiomyocytes, and non-cardiomyocyte heart cells (NCs).
In addition, datasets were collected from embryos before gastrulation (dome stage, 4.3 hours post-fertilization (hpf)), at pharyngula stage (30 hpf), and from early larvae (72 hpf),  where the cell type assignments are unknown.

\paragraph{GAPML captures more similar developmental relationships between tissue types across the two adult fish replicates.}
For each estimated tree, we calculated the \emph{distance between tissues} --- the average tree distance between a leaf of one tissue to the closest internal node leading to a leaf from the other tissue, weighted by the allele abundance (Figure~\ref{fig:distance_matrices}).
(All alleles that were found in the blood were removed since blood is found in all dissected organs and can confound the relationship between organs \cite{McKennaaaf7907}.)
Recall that all of the fitting procedures are completely agnostic to any tissue source or cell abundance information.
For a good method, we expect the correlation between tissue distances from the two fish to be close to one.
We tested if the correlations were significant by permuting the cell types and abundances in the estimated trees.
The correlation was 0.770 ($p < 0.001$) using our method whereas `CS+chronos' and `NJ+chronos' had correlations of 0.306 ($p = 0.21$) and -0.325 ($p = 0.22$), respectively.
One might be concerned that our method is consistent across fish replicates because it returns very similar trees regardless of the data.
However, this is not the case: When we re-run our method with randomly permuted cell types and abundances, the average correlation between the tissue distances drops to zero.

\begin{figure}
	\centering
    \includegraphics[width=0.95\textwidth]{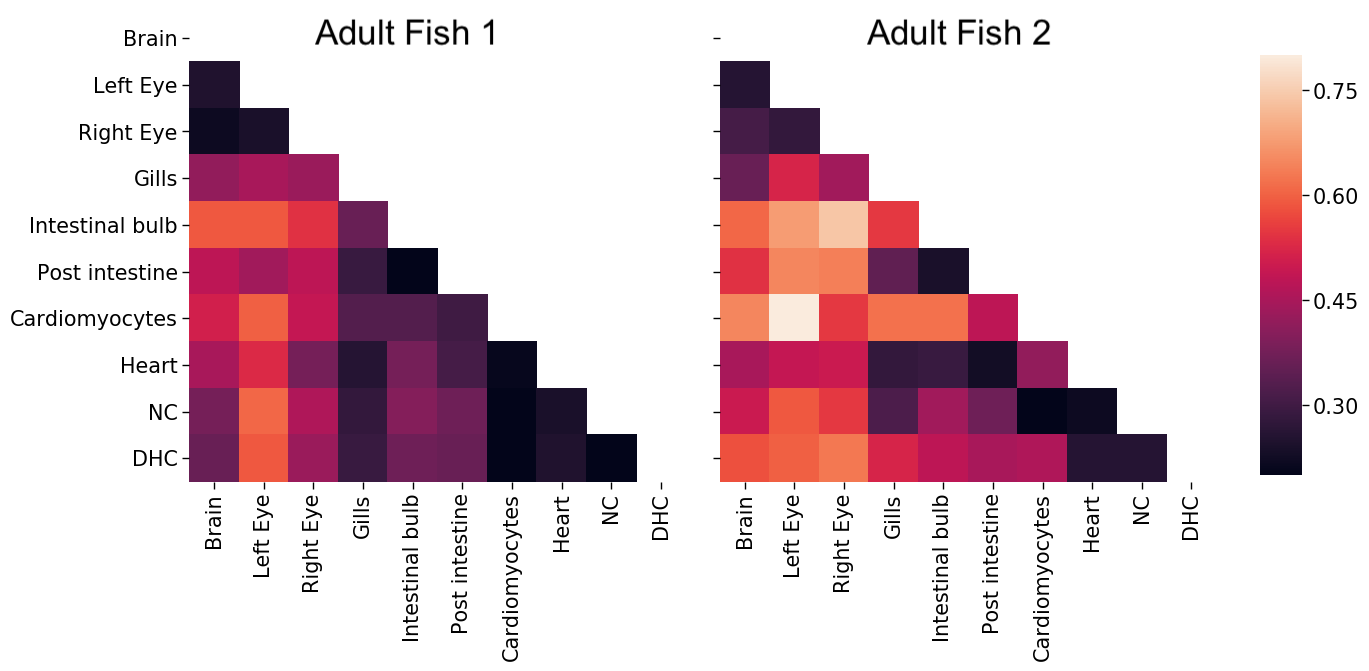}
	\caption{
		The average distance between tissue sources in the estimated trees for adult fish 1 (left) and 2 (right).
		The distance between tissues is the average time from a leaf of one tissue to the closest internal node with a descendant of the other tissue.
		The shading reflects distance, where bright means far and dark means close.
		The tissue distances share similar trends between the two fish.
		For example, the top (brain and eyes) and lower right (heart-related organs) tend to be the darker regions in both distance matrices.
	}
	\label{fig:distance_matrices}
\end{figure}

\begin{figure}
\begin{subfigure}{\textwidth}
\begin{center}
\begin{tabular}{c|c|c|c|c}
		Fish age & $n$ & Barcode version & GAPML Correlation & Empirical average correlation \\
		\toprule
		4 months & 2 & 7 & 0.891 & 0.685 \\
		3 days & 5 & 7 & 0.881 (0.839, 0.982) & 0.688 (0.610, 0.923) \\
		30 hpf & 4 & 6 & 0.309 (0.309, 0.794) & 0.052 (0.052, 0.727) \\
		4.3 hpf & 4 & 7 & 0.931 (0.931, 0.982) & 0.743 (0.717, 0.976) \\
\end{tabular}
\end{center}
\caption{
	Mean Spearman correlation between target lambda rates across fish replicates. 95\% confidence intervals (via bootstrap) shown in parentheses.
}
\label{table:target_lam_correlation}
\end{subfigure}
\begin{subfigure}{\textwidth}
\begin{center}
\includegraphics[width=0.32\linewidth]{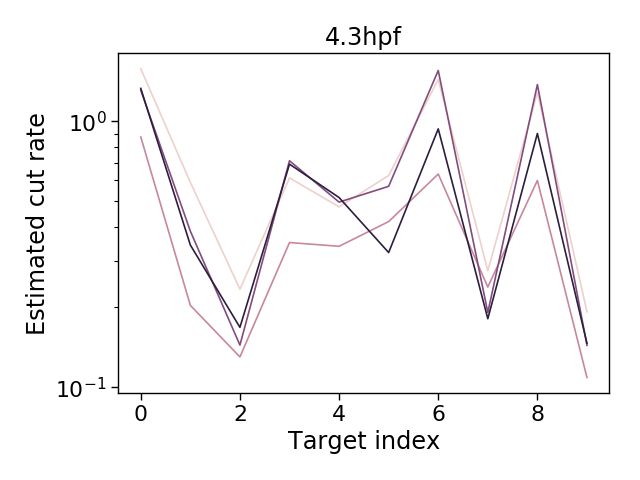}
\includegraphics[width=0.32\linewidth]{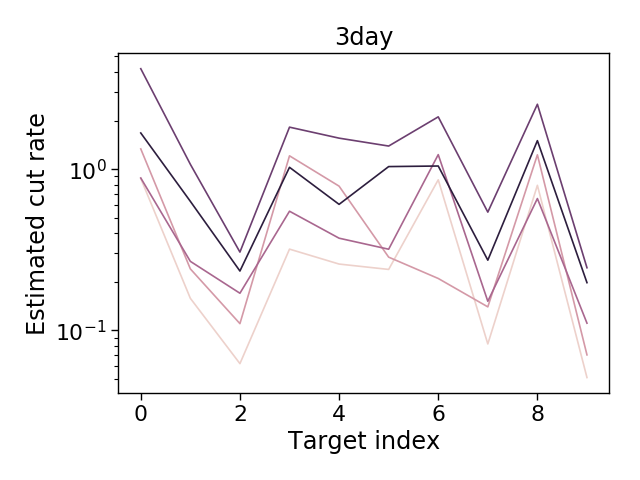}
\includegraphics[width=0.32\linewidth]{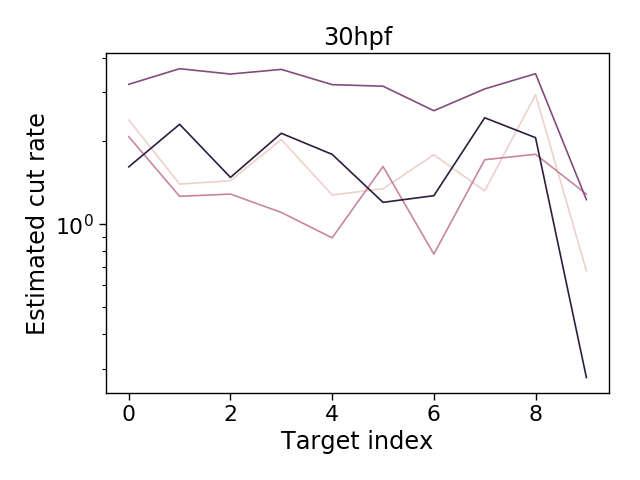}
\end{center}
\caption{
Fitted target lambda rates for fish sampled at 4.3hpf (left), 3 days (middle), and 30 hpf (right), where each colored line corresponds to the estimates for a single fish.
The fish sampled at 4.3hpf and 3 days had the same barcode and share similar target rates.
The 30hpf fish used a different barcode and have different estimated target rates.
}
\label{fig:target_rate}
\end{subfigure}
\caption{Target cut rate estimates are consistent across fish replicates.}
\end{figure}

\subsubsection{GAPML estimates similar mutation parameters across fish replicates.}

For each time point, the fish replicates were traced using the same GESTALT barcode and processed using the same experimental protocol (Table~\ref{table:target_lam_correlation}).
We compared the estimated target rates from our method to those estimated using a model-free empirical average approach where the estimated target cut rate is the proportion of times a cut was observed in that target in the set of unique observed indels.
The average correlation between the estimated target rates from our method were much higher than that for the alternate approach (Figure~\ref{table:target_lam_correlation}).
In fact, we can also compare target cut rates between fish of different ages that share the same barcode, even if the experimental protocols are slightly different.
The 4.3hpf and 3day fish share the same barcode version and we find that the target rate estimates are indeed similar (Figure~\ref{fig:target_rate}).
Again, a possible concern is that our method may have high correlation because it outputs very similar values regardless of the data.
However, the estimated target cut rates were different for fish with different barcode versions.
More specifically, the 30hpf fish used version 6 of the GESTALT barcode whereas the other fish used version 7.
Visually, the target rates look quite different between those in the 30hpf fish and the other fish with the version 7 barcode (Figure~\ref{fig:target_rate}).
Calculating the pairwise correlations between the estimated rates in 30hpf versus 3day fish, the average correlation, 0.416, is quite low and the bootstrap 95\% confidence interval, (0.046, 0.655), is very wide and nearly covers zero.

\paragraph{GAPML recovers both cell-type and germ-layer restriction.}
It is well known that cells are pluripotent initially and specialize during development.
To evaluate recovery of specialization by tissue type, we calculated the correlation between the estimated time of internal tree nodes and the number of descendant tissue types; to evaluate recovery of specialization by germ layer, we calculated the correlation between the estimated time of internal nodes and the number of germ layers represented at the leaves.
(As before, all the estimation methods do not use the tissue origin and germ layer labels.)
Since any tree should generally show a trend where parent nodes tend to have more descendant cell types than their children, we compared our tree estimate to the same tree but with random branch length assignments and randomly permuted tissue types.
Our method estimated much higher correlations compared to these random trees (Table~\ref{table:specialization}).
We show an example of the node times versus the number of descendant cell types and germ layers in Figure~\ref{fig:valid_cell}.
The estimated correlations from the other methods tended to be closer to zero compared to those in GAPML in all cases, except when using `NJ + chronos` to analyze the second adult fish.
However upon inspection, the correlation is high for `NJ + chronos` because it estimates that cells are pluripotent for over 90\% of the fish's life cycle and specialize during a small time slice at the very end.

\begin{figure}
	\begin{subfigure}{\textwidth}
		\begin{center}
		\begin{tabular}{c|c|ccc|ccc}
			Adult & Estimation  & \multicolumn{3}{c|}{\#  tissue types vs time} & \multicolumn{3}{c}{\# germ layers vs time} \\
			Fish & Method & Corr & Random corr & p-value & Corr  & Random corr & p-value \\
			\toprule
			\multirow{3}{0.01in} 1 & GAPML & -0.492 & -0.168 & $<$ 0.001 & -0.421 & -0.124 & $<$ 0.001 \\
			 & CS+chronos & -0.182 & 0.037 & 0.002 & -0.142 & 0.032 & 0.044 \\
			 & NJ+chronos & -0.271 & -0.126 & 0.003 & -0.179 & -0.094 & 0.084 \\
			\midrule
			\multirow{3}{0.01in} 2  & GAPML  & -0.493 & -0.220 & $<$0.001 & -0.410 & -0.151 & 0.002 \\
			& CS+chronos & -0.389 & 0.070 & 0.001  & -0.397 & 0.090 & $<$ 0.001 \\
			& NJ+chronos  & -0.621 & -0.236 & $<$0.001 & -0.475 & -0.183 & 0.001 \\
		\end{tabular}
		\end{center}
		\caption{
			Estimated correlations between the number of descendant cell types/germ layers vs. the time of internal nodes in the tree.
			Since some tree topologies naturally have higher correlations, we also show the correlation when cell types are shuffled and branch lengths are randomly assigned.
			The p-value for each tree is calculated with respect to their respective randomly shuffled trees.
		}
		\label{table:specialization}
	\end{subfigure}
	\begin{subfigure}{\textwidth}
		\includegraphics[width=0.49\textwidth]{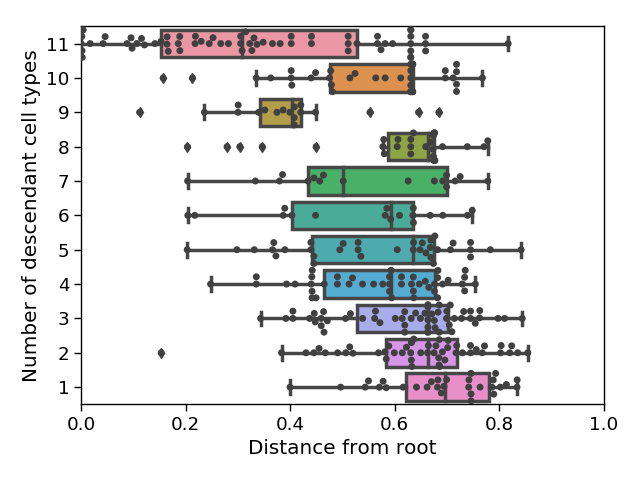}
		\includegraphics[width=0.49\textwidth]{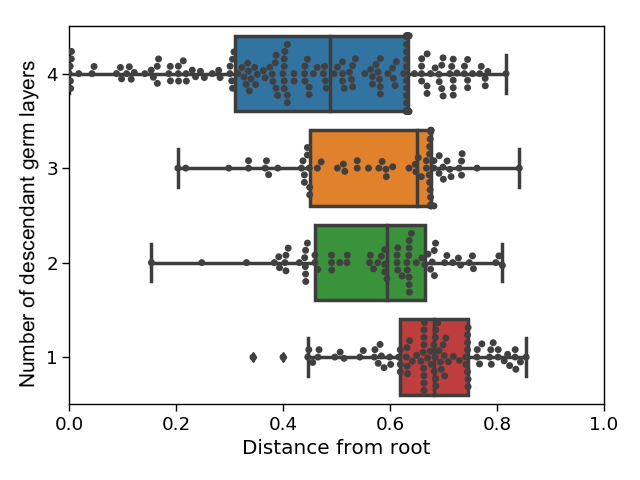}
		\caption{
			Box plots of the internal node times in the estimated tree for the first adult fish using GAPML, where nodes are grouped by the number of descendant cell types (left column) and the number of descendant germ layers (right column).
		}
		\label{fig:valid_cell}
	\end{subfigure}
	\caption{
		Estimated relationships between node times and number of descendant cell types and germ layers in the two adult fish for the different methods.
	}
	\label{fig:intern_cell}
\end{figure}

\subsection{Analysis of the zebrafish GESTALT data}

In this section, we analyze the fitted trees of the adult zebrafish in more detail.
Our primary goals are to (1) check if summaries concord with known zebrafish biology, (2) generate new hypotheses about zebrafish development, and (3) generate new hypotheses on how to improve the experimental procedure.
Again, as our method is agnostic to the tissue types, our trees have no prior assumptions or particular biases regarding the relationships between cell types.

Here we focus on the ordering and relative length of events.
We ignore absolute estimated times since our fitting procedure for a single barcode heavily penalizes large differences between branch lengths.
Though this procedure aids estimation accuracy, it also heavily biases the absolute time estimates.
Thus in the figures, we scale time to be between 0 to $T = 1$ to draw focus away from the absolute times.
(We anticipate the branch length estimates to be more accurate to improve with the GESTALT technology. According to our simulations, the absolute branch length estimates are much more reliable when the several barcodes are inserted into the organism.)

We begin with a coarse summary of the cell lineage tree: We plot the average distance between a leaf node of one tissue type to the most recent ancestor of each different tissue type (Figure~\ref{fig:distance_matrices}).
This matrix recapitulates some well-established facts about zebrafish development.
For example, we estimate that tissue types from the endoderm and mesoderm tended to have shorter shared lineage distances; these tissue types tended to separate from the ectodermal tissues earliest.
This signal potentially captures the migration of the mesoderm and endoderm through the blastopore, isolating them from the ectoderm \citep{SOLNICAKREZEL2005R213}.
In addition, previous studies have found that gills are formed when the anterior part of the intestine grows toward and fuses with the body integument \citep{Shadrin2002-xe}.
The distance matrix shows a large proportion of gill cells dividing late from the other endoderm and mesoderm layers.

The distance matrix also shows that the GFP+ cardiomyocytes tend to be farthest away from other tissue types, which could be either a developmental signal or an artifact of the experimental protocol.
GFP+ cardiomyocytes were sorted using fluorescence-activated cell (FACS) and this purity could drive their separation from the other more heterogeneous organ populations.
An interesting biological speculation would be that the heart is the first organ to form during vertebrate embryo development and, in particular, the myocardial cells are the first to develop, driving this observed signal \citep{Keegan2004-xs}.
These observations show GAPML's improved lineage distance estimation provide a more refined measure of the developmental process, and as our simulations show, will only improve as experimental approaches becomes more sophisticated.

The full cell lineage tree estimated using GAPML for the first adult zebrafish provides significantly more detail than the Camin-Sokal parsimony tree inferred for the original \citet{McKennaaaf7907} publication (Figure~\ref{fig:adr1_tree}).
Our tree has estimated branch lengths whereas the branches were all unit-length in \citet{McKennaaaf7907}.
In addition, the bolded lines in our tree correspond to the caterpillar spines where we have estimated the ordering between children of multifurcating nodes.
Since the original maximum parsimony tree estimate in \citet{McKennaaaf7907} contained many multifurcating nodes and our method converts any multifurcating node to a caterpillar tree, our final tree contains many caterpillar trees.
The longest caterpillar spine in our estimated tree starts from the root node and connects all the major subtrees that share no indel tracts.
As the zebrafish embryo rapidly divides from the single-cell stage, these initial CRISPR editing events establish the founding cell in each subtree.
GAPML estimates the target cut rates to order the events along the caterpillar trees, an impossible task in the original Camin-Sokal multifurcating trees.
Lastly, we observe that the last three subtrees at the end of this spine (farthest away from the root) are primarily composed of alleles only observed in the intestinal bulb and the posterior intestine.
This concords with our understanding of zebrafish development: Of the dissected organs, the digestive tract is the last to fully differentiate at day four \citep{Moody1998-uq}.
In aggregate, these examples again show the power of a refined lineage tree to establish new interesting biological questions and a refined map in which to answer them.

\begin{landscape}
\begin{figure}
\vspace{-2in}
\includegraphics[width=\linewidth]{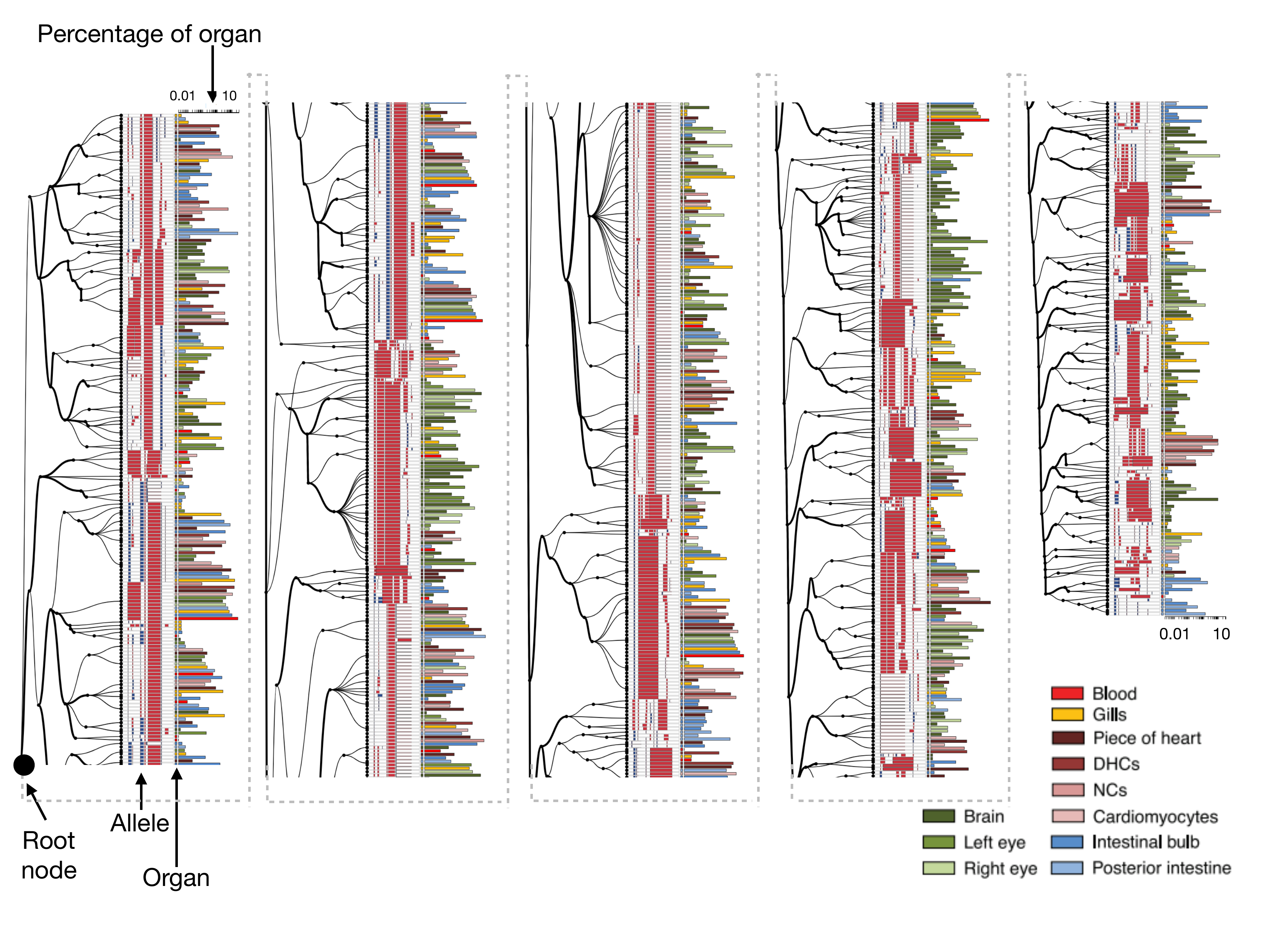}
\caption{
Estimated cell lineage tree for 400 randomly selected alleles from the first adult zebrafish.
Editing patterns in individual alleles are represented as shown previously.
Alleles observed in multiple organs are plotted on separate lines per organ and are connected with stippled branches.
Two sets of bars outside the alleles identify the organ in which the allele was observed and the proportion of cells in that organ represented by that allele (log10 scale).
The bolded lines correspond to the caterpillar spines.
}
\label{fig:adr1_tree}
\end{figure}
\end{landscape}

\subsection{Analysis of GESTALT barcode mutation parameters}

Finally, our model's estimated target cut rate parameters (Table~\ref{table:fitted_params}) provide an interesting resource when considering redesigns of the GESTALT barcode.
Here we focus on the GESTALT barcode in the adult fish.
The estimated target cut rates were very similar across the two fish replicates.

We estimated very different cut rates across the ten targets.
Target 1 and 9 had the highest cut rates; target 3 had the lowest cut rate.
The ratio between the highest and lowest cut rates is at least 10 in both fish, i.e. a deletion at target 1 is at least 10 times more likely to occur than at target 3.
In terms of the tree estimation, the targets with high cut rates mainly help capture early cell divisions whereas targets with low cut rates tend to capture late cell divisions.
Having a broad spectrum of target cut rates is useful for capturing cell divisions throughout the tree, though the specific details depends on the true tree.
Our simulation engine may be useful for understanding how variation in the target rates affects estimation accuracy under various conditions.

The double cut rate is similar across the fish.
The rate is quite high: For the first adult fish, the double cut rate of 0.076 means that the probability of introducing a single-target indel as opposed to an inter-target indel in an unmodified barcode is 59\%.
To counter this, we can decrease the number of long inter-target deletions (and the number of masked events) by placing high cut-rate targets closer together in the barcode.
One potentially helpful barcode design is to place the highest cut-rate targets in the center and the lowest cut-rate targets on the outside.
Alternatively, designers could arrange the targets from highest to lowest cut rate.
Table~\ref{table:fitted_params} shows that the current barcode design is counter to our suggestion, as the two targets with the highest cut rates in the two adult fish are targets 1 and 9.

The characterization of target efficiencies in a compact multi-target barcode is challenging problem.
Our method can help steer the next generation of CRISPR-based lineage recording technologies to have increased recording capacity.

\begin{table}
\begin{center}
\begin{tabular}{lrrrrr}
	\toprule
	{} &   Adult fish \#1 &   Adult fish \#2 &  3 day \#1 &  30 hpf \#5 &  4.3 hpf \#1 \\
	\midrule
	Target 1                      &  3.053 &  1.320 &  0.410 &       1.595 &  1.301 \\
	Target 2                      &  1.232 &  0.317 &  0.155 &       0.697 &  0.474 \\
	Target 3                      &  0.063 &  0.108 &  0.129 &       0.618 &  0.154 \\
	Target 4                      &  1.234 &  0.821 &  0.223 &       0.561 &  0.399 \\
	Target 5                      &  0.619 &  0.542 &  0.179 &       0.510 &  0.276 \\
	Target 6                      &  1.329 &  0.652 &  0.182 &       1.155 &  0.385 \\
	Target 7                      &  0.761 &  0.470 &  0.344 &       0.544 &  1.088 \\
	Target 8                      &  0.090 &  0.136 &  0.141 &       1.171 &  0.151 \\
	Target 9                      &  2.422 &  1.529 &  0.404 &       1.176 &  1.146 \\
	Target 10                     &  0.285 &  0.371 &  0.132 &       1.150 &  0.155 \\
	Double cut rate              &  0.076 &  0.084 &  0.052 &       0.090 &  0.065 \\
	\midrule
	Left trim zero prob   &  0.015 &  0.028 &  0.015 &       0.258 &  0.025 \\
	Left trim length mean  &  6.330 &  6.634 &  5.165 &      12.000 &  6.405 \\
	Left trim length SD    &  4.956 &  5.184 &  4.245 &       6.633 &  4.998 \\
	Right trim zero prob   &  0.906 &  0.834 &  0.890 &       0.238 &  0.818 \\
	Right trim length mean &  4.945 &  3.759 &  3.716 &       4.800 &  3.478 \\
	Right trim length SD   &  6.173 &  5.534 &  5.529 &       4.011 &  5.360 \\
	Insertion zero prob          &  0.400 &  0.411 &  0.401 &       0.520 &  0.419 \\
	Insertion length mean        &  5.085 &  4.540 &  5.786 &       5.446 &  4.589 \\
	Insertion length SD          &  5.798 &  5.533 &  5.219 &       7.291 &  4.708 \\
	\bottomrule
\end{tabular}
\end{center}
\caption{
	Fitted parameters in the adult fish as well as some other fish embryos.
	The parameters above the line are related to target cut rates and the ones below the line are related to the nucleotide deletion and insertion process.
}
\label{table:fitted_params}
\end{table}

\section{Discussion}

In this manuscript, we have proposed a statistical model for the mutation process for GESTALT, a new cell lineage tracing technology that inserts a synthetic barcode composed of CRISPR/Cas9 targets into the embryo.
Our method, GAPML, estimates the cell lineage tree and the mutation parameters from the sequenced modified barcode.
Unlike existing methods, our method estimates branch lengths and the ordering between children nodes that share the same parent.
We demonstrate that our method outperforms existing methods on simulated data, provides more consistent results across biological replicates, and outputs trees that better concord with our understanding of developmental biology.
We have answered the question "Is it possible to reconstruct an accurate cell lineage tree using CRISPR barcodes?'' in \citet{Salvador-Martinez2018-dw} in the affirmative: The cell lineage tree can be estimated to a high degree of accuracy as long as appropriate methods are used.

Our method provides a number of technical contributions to the phylogenetics literature.
The GESTALT mutation process violates many of the classical phylogenetic assumptions that existing methods rely upon for computational tractability.
Thus we determined the most appropriate assumptions that are most suitable in this new setting and developed new methods so that the likelihood is computationally tractable and the estimated trees are accurate.
We believe these techniques could be useful for other phylogenetic problems where the independent-site assumption does not hold.
In addition, our methods may be useful as a jumping off point for analyzing other CRISPR-based cell lineage tracing technologies, such as that using homing CRISPR barcodes \citep{Kalhor2017-ls, Kalhor2018-jb}.
There are still many areas of improvements for the current method, such as quantifying the uncertainty of our estimates, estimating meta-properties about the cell lineage tree for organisms of the same species, and utilizing data sampled at multiple time points.

Finally, the biological results were relatively limited since the goal of this manuscript was primarily on methods development and the data in \citet{McKennaaaf7907} only provide tissue source information.
In future work, we plan to apply our method to analyze data where each allele is paired with much richer information, such as single-cell gene expression data \citep{Raj2018-ux}.

\section{Materials and Methods}
\label{sec:methods}

\subsection{Data}
The data processed in this paper are all from \citet{McKennaaaf7907} and are available at the Gene Expression Omnibus under GSE81713.
We use the aligned data where each allele was described with the observed insertion/deletions (indels) at each target.
Each CRISPR target can only be modified once and indels can only be introduced via a double-stranded break at a target cut site, so we further processed the aligned data accordingly: merging indels if there were more than one associated with a given target, and extending the deletion lengths and insertion sequences so that a target cut site was nested within each indel.
For this paper, we assume that the processed data is correct and do not attempt to model the effects of alignment error.

\subsection{Methodological contributions}
Here we highlight the key methodological contributions that we needed to develop in order to analyze GESTALT.
We needed to develop new methodology because the GESTALT mutation process violates many classical assumptions in phylogenetics.

Since GESTALT is a new technology, we introduce new mathematical abstractions for the biological process.
We then consider all statistical models that satisfy our proposed set of assumptions.
We carefully designed these assumptions to balance biologically realism and computationally feasibility.
We next show that such models are ``lumpable'' and use this to efficiently calculate the likelihood.
Even though lumpability has been used to reduce computation time for general Markov chains, this idea is rarely used in phylogenetics.
We show how lumpability can be combined with Felsenstein's algorithm if the mutation process is irreversible.

In addition, our method estimates trees at a finer resolution compared to other methods, which leads to better tree estimates.
In particular, we resolve the multifurcations as a caterpillar tree and show how to efficiently search over caterpillar tree topologies.
As far as we know, this is one of the few methods that tunes the tree topology, which is typically treated as a combinatorial optimization problem, by solving a continuous optimization problem.
The closest equivalent appears in the Bayesian phylogenetics literature where local changes to the topology may be introduced via a Subtree-Slide operator \citep{Hohna2008-jm}.

To handle the small number of barcodes, we improve the stability of the method by maximizing the \textit{penalized} log likelihood.
Previous phylogenetic methods that penalize branch lengths assume that the tree topology is known \citep{Kim2008-rk, Zhang2018-ky}.
However, the tree topology is unknown in GESTALT.
We show how penalization can be combined with tree topology search methods.
Combining the two is not trivial since a na\"ive approach will bias the search towards incorrect topologies.

Finally, we use an automatic differentiation framework to optimize the phylogenetic likelihood.
Automatic differentiation software has accelerated deep learning research since they allow researchers to quickly iterate on their models \citep{Baydin2018-py}.
Likewise, we found that this tool greatly accelerated our progress and, through this experiment, we believe that this tool may greatly accelerate the development of maximum likelihood estimation methods in phylogenetics.

\subsection{GESTALT framework and definitions}
In this section, we present mathematical definitions for the many components in GESTALT, though begin in this paragraph by giving an overview in words.
We begin with defining the \textit{barcode} and the individual \textit{targets} within it.
A barcode is mutated when nucleotides are inserted and/or deleted, which is referred to as an \textit{indel tract}.
We then define a possibly-mutated barcode or \textit{allele} as a collection of the observed indel tracts.
Finally, we use all these abstractions to define the barcode mutation process, which is framed as stochastic process where the state space corresponds to all possible alleles and the transitions correspond to indel tracts.
To aid the reader, Table~\ref{table:notation_reference} briefly summarizes the definitions used in the paper.

The unmodified barcode is a nucleotide sequence where $M$ disjoint subsequences are designated as \underline{targets} (Figure~\ref{fig:barcode}).
The targets are numbered from 1 to $M$ from left to right and the positions spanned by target $j$ are specified by the set $\pos(j)$.
Each target $j$ is associated with a single cut site $c(j) \in \pos(j)$.

\begin{figure}
\begin{subfigure}{0.45\textwidth}
\includegraphics[width=\linewidth]{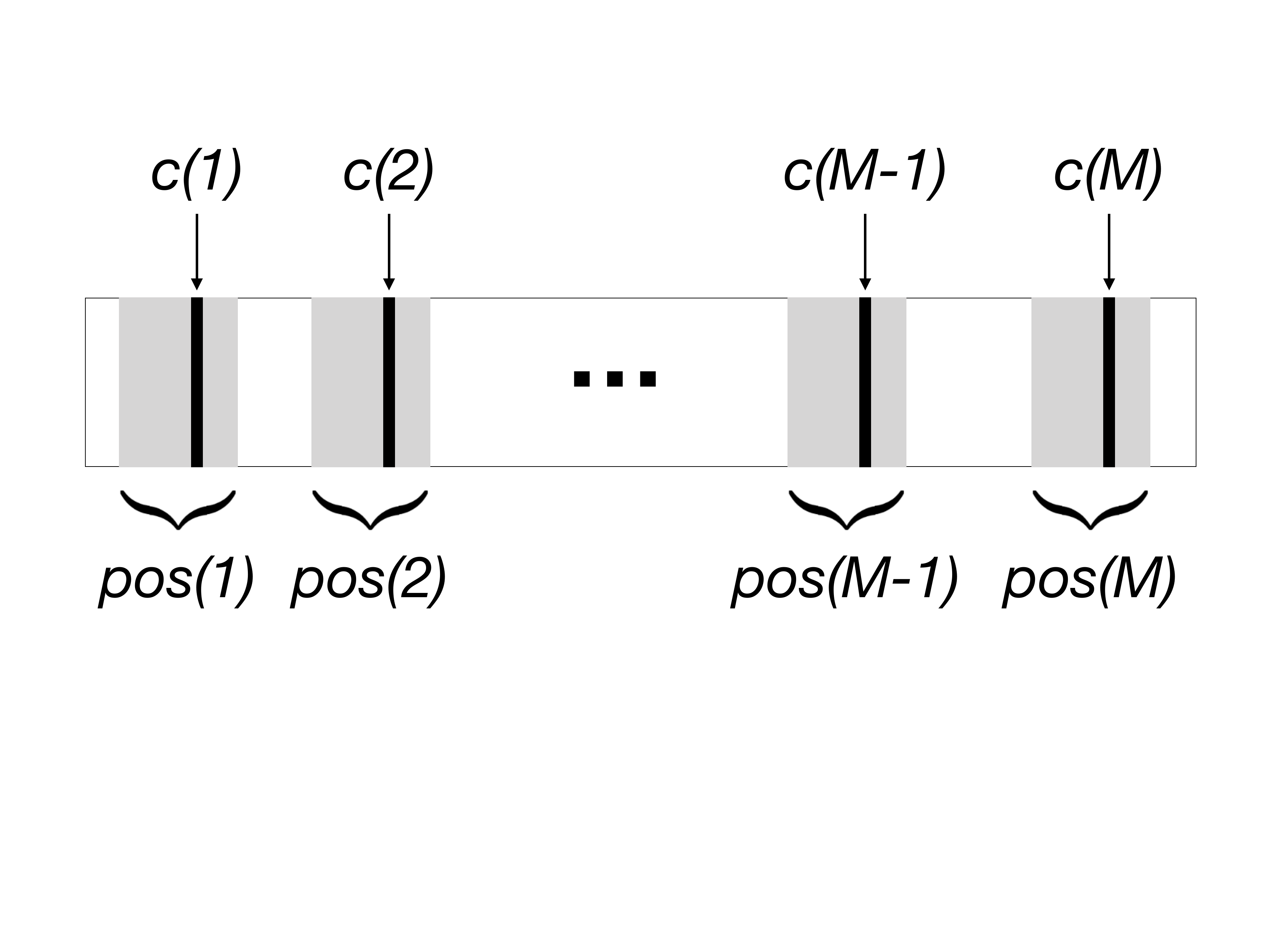}
\caption{
A barcode with $M$ targets.
The cut site of the targets, $c(j)$ for $j = 1,...,M$, are shown by the bolded black lines.
The positions associated with each target are highlighted in the gray boxes.
}
\label{fig:barcode}
\end{subfigure}
\hspace{0.01in}
\begin{subfigure}{0.45\textwidth}
\includegraphics[width=\linewidth]{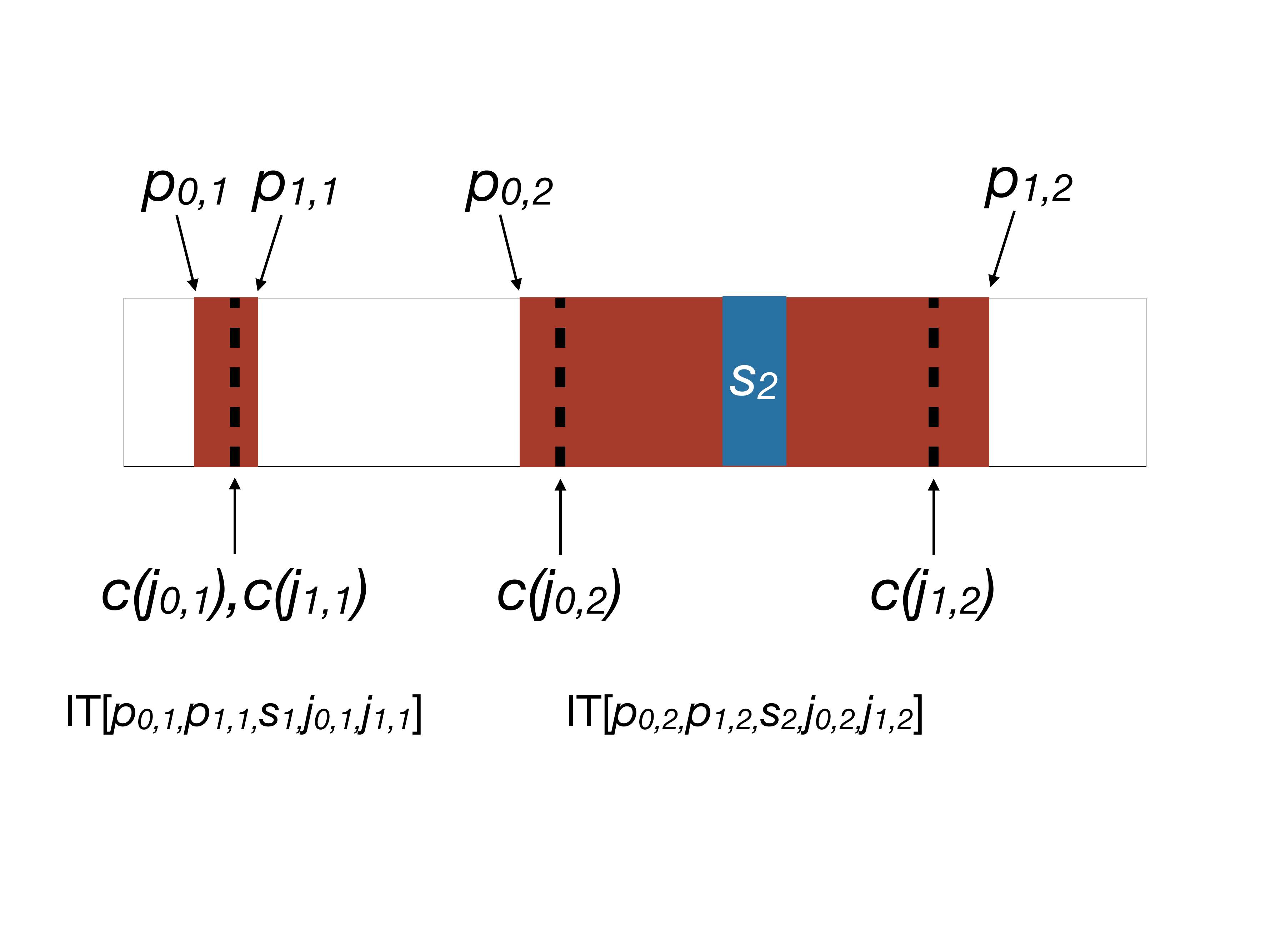}
\caption{
An example allele with two indel tracts.
The left indel tract was introduced by a cut at a single target and does not have an insertion, i.e. $s_1 = \emptyset$.
The right indel tract was introduced by cuts at two targets.
Red indicates deletion and blue indicates insertion.
}
\label{fig:allele}
\end{subfigure}
\caption{Illustration of GESTALT definitions}
\label{fig:barcode_math}
\end{figure}

A barcode can be modified by the introduction of an \underline{indel tract}.
An indel tract, denoted by $\IT[p_0, p_1, s, j_0, j_1]$, is a mutation event in which targets $j_0$ and $j_1$ are cut ($j_0 \le j_1$), positions $p_0, p_0 + 1,..., p_1 - 1$ in the unmodified barcode are deleted, and a nucleotide sequence $s$ is inserted.
If $j_0 = j_1$, only a single target is cut.
When $p_0 = p_1$, then no positions are deleted.
If $s$ is of length zero, then no nucleotides are inserted.
We only consider indel tracts that modify the sequence, i.e. either $p_0 < p_1$ or $s$ has positive length, and nest at least one target cut site between positions $p_0$ and $p_1$.

A possibly-modified barcode, also referred to as an \underline{allele}, is a sequence of disjoint indel tracts associated with a single barcode (Figure~\ref{fig:allele})
\begin{align}
a \equiv \left \{\IT[p_{0,k}, p_{1,k}, s_k, j_{0,k}, j_{1,k}]: k \in \{1,...,m\} \right \}
\label{eq:allele_def}
\end{align}
where $m \ge 0$ and $p_{1,k} < p_{0,k+1}$ for $k = 1,...,m - 1$.
The positions $p_{0,k}, p_{1,k}$ in the indel tracts always refer to the positions in the original unmodified barcode: deletions and insertions do not change the indexing.
Let $\Omega$ be the set of all possible alleles.

A target $j$ is active in allele $a$ if none of the nucleotides in $\pos(j)$ are modified.
That is, the \underline{status} of target $j$ in allele $a$ is
$$
\TargStat(j;a) = \mathbbm{1}\{\exists \IT[p_0, p_1 ,s, j^{'}_0, j^{'}_1] \in a \text{ and } \exists k \in \pos(j) \text{ s.t. } p_0 \le k \le p_1 \}.
$$
So $\TargStat(j;a)$ is 0 if target $j$ is active and 1 if it is inactive.
For convenience, denote the target status of allele $a$ as
\begin{align}
\TargStat(a) = (\TargStat(1;a), ..., \TargStat(M;a)).
\label{eq:targ_stat}
\end{align}

We now introduce the rules governing how alleles change through the introduction of indel tracts.
First, transitions between possible allele states $\Omega$ are constrained by the status of the targets.
For a given allele $a$, we can introduce the indel tract $d = \IT[p_0, p_1, s, j_0, j_1]$ if and only if targets $j_0$ and $j_1$ are active.
Note that the set of transitions allowed under this rule is a superset of biologically-plausible transitions.
For example, even if position $p_0$ is deleted, introducing indel tract $d$ is formally allowed in our scheme.
However in order to exclude biologically-implausible transitions, our models assign near-zero probability to such transitions.

Let $\Apply(a, d)$ be the resulting allele from introducing indel tract $d$ into allele $a$.
If indel tract $d$ does not overlap any other indel tract in $a$, then $\Apply(a, d)$ is simply the union $a \cup \{d\}$.
If $d$ completely masks other indel tracts, i.e. all indel tracts in $a$ are either completely within the range of $p_0$ to $p_1 - 1$ or are completely outside of the range, then $\Apply(a, d)$ is the resulting allele after removing the masked indel tracts and adding $d$.
The last possibility is that $d$ is adjacent or overlaps, but does not fully mask, other indel tract(s) in $a$; then $\Apply(a, d)$ is the resulting allele after properly merging overlapping indel tracts (Figure~\ref{fig:overlap_indels}).
From a biological perspective, it is impossible to introduce overlapping but non-masking indel tracts.
However the likelihood model is much simpler if we allow such events to happen.
Since deletion lengths tend to be short in the empirical data (75\% quantile = 10, \citet{McKennaaaf7907}), we will estimate that long deletion lengths occur with small probability, which means that these overlapping but non-masking indel tracts have very small probabilities.
Thus we believe our model closely approximates the GESTALT mutation process.
We will also discuss an assumption that removes these overlapping indel tracts from the likelihood calculation entirely in more detail in the following section.
\begin{figure}
\begin{center}
\includegraphics[width=0.4\textwidth]{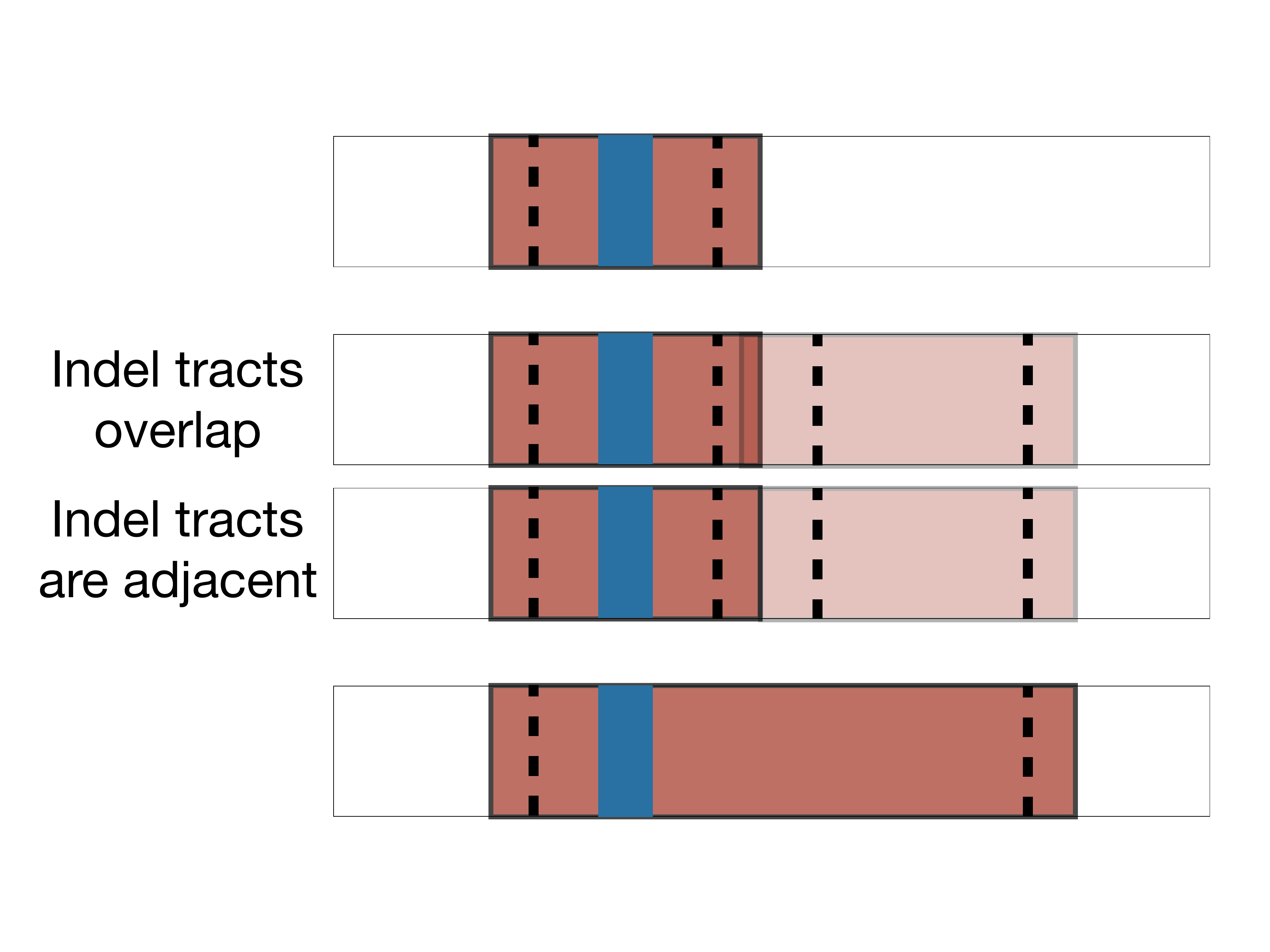}
\end{center}
\caption{
	Possible transitions from the top allele and the bottom allele:
	either an indel tract is introduced that overlaps with an existing indel tract (middle top) or an adjacent indel tract is introduced (middle bottom).
}
\label{fig:overlap_indels}
\end{figure}

\subsection{GESTALT Model and Assumptions}

\begin{table}
\begin{center}
\begin{tabular}{c|p{8cm}|c}
Symbol & Description & Eq. \\
\midrule
$X_{\N}(t)$ & Markov process along branch ending with node $\N$&\\
$X_{S}(t)$ & Markov process along branches ending with nodes in set $S$&\\
$a_{\N}$ & Observed allele at leaf $\N$&\\
$ a_{S}$ & Observed alleles at leaves in set $S$&\\
$\pos(j)$ & Positions of target $j$ in the unmodified barcode&\\
$c(j)$ & Cut site of target $j$&\\
$\Leaves(\N)$ & Leaves of node $\N$&\\
$\Desc(\N)$ & Descendants of node $\N$ &\\
$\TargStat(a)$ & Status of targets in allele $a$. 1 in position $j$ indicates target $j$ is inactive & \eqref{eq:targ_stat}\\
$\IT[p_0, p_1, s, j_0, j_1]$ & Indel tract that cuts targets $j_0$ and $j_1$, deletes positions $p_0$ to $p_1 - 1$, inserts $s$&\\
$\TT[j_0', j_0, j_1, j_1']$ & Target tract: all indel tracts that cut targets $j_0$ and $j_1$ and deactivate targets $j_0'$ to $j_1'$& \eqref{eq:tt} \\
$\TT(d)$ & The target tract that indel tract $d$ belongs to & \\
$\WC[j_0, j_1]$ & Wildcard: any indel tract that only deactivates targets with indices $j_0$ to $j_1$ & \eqref{eq:wc}\\
$\SGWC[p_0, p_1, s, j_0, j_1]$ & Singleton-wildcard: union of an indel tract and its inner wildcard & \eqref{eq:sgwc}\\
$\AncState(\N)$ & The set of likely ancestral states for node $\N$ & \eqref{eq:anc_state}
\end{tabular}
\end{center}
\caption{Notation used in this paper}
\label{table:notation_reference}
\end{table}

\begin{figure}
\begin{center}
\includegraphics[width=0.8\textwidth]{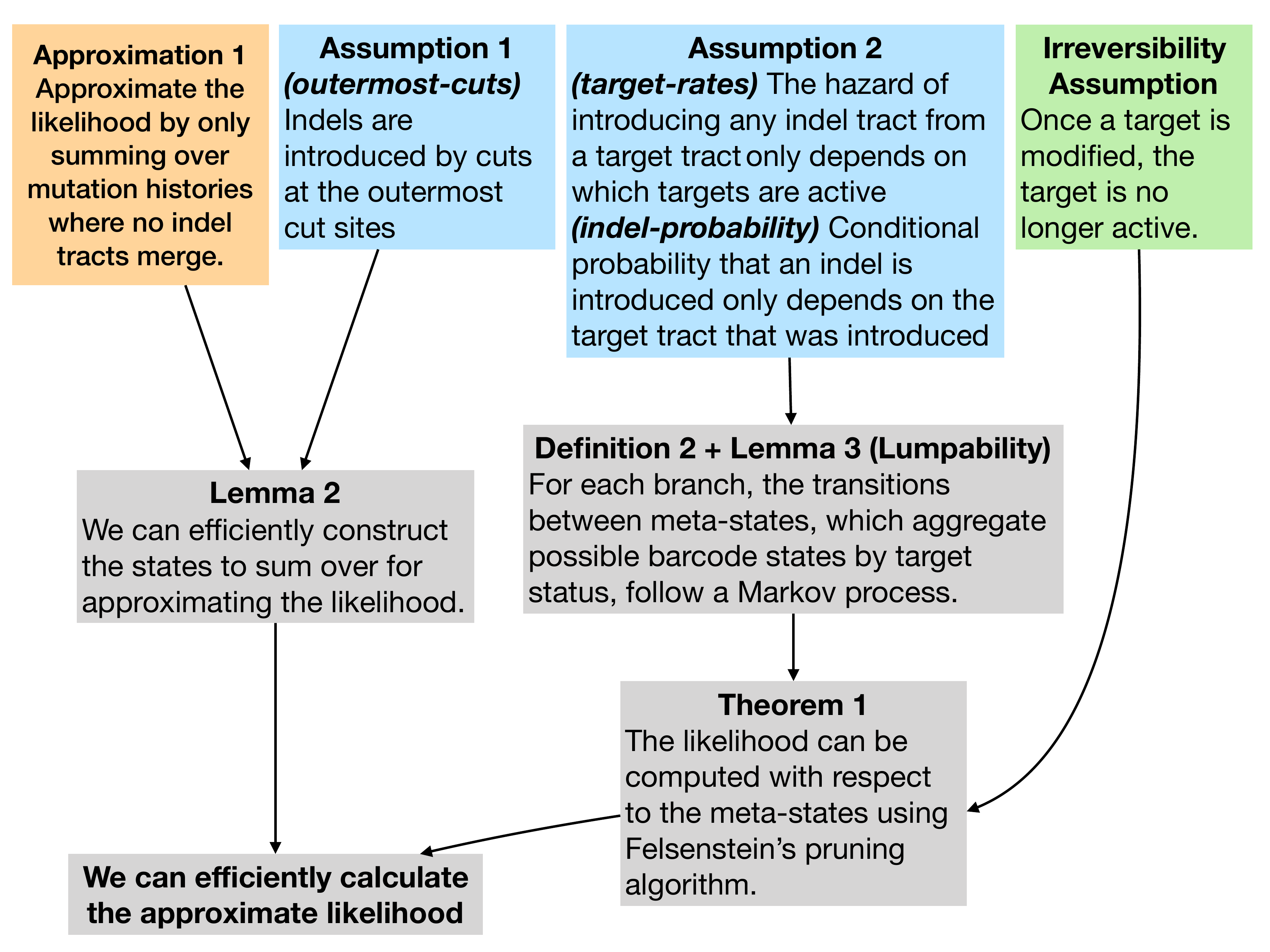}
\end{center}
\caption{
A guide for how assumptions, approximations, and derived results connect and lead to our final algorithm for approximating the likelihood.
The flowchart also maps the formal assumptions here to Assumptions A, B, and C in the introduction.}
\label{fig:math_flow}
\end{figure}

Here we define the GESTALT model more concretely and formalize the model assumptions presented in the Introduction.
Recall the three model assumptions.
First, as we rarely observe very long deletion lengths in the GESTALT data, we assume that indel tracts are introduced by cuts at the outermost cut sites (assumption A).
In addition, based on our biological understanding of the CRISPR/Cas9 mutation process, we assume that mutations occur in two stages: first the targets are cut according to the rates of the active targets (assumption B) and then nucleotides are deleted/inserted according to a process that only depends on which targets were cut (assumption C).
Figure~\ref{fig:math_flow} presents a flowchart of how the assumptions are used to derive later results.

The barcode mutation process up to time $T$ is formulated as a continuous time Markov chain $\{X(t): 0 \le t \le T\}$ with state space $\Omega$.
As mentioned before, the choice of using state space $\Omega$ implicitly assumes that indel tracts are introduced instantaneously, i.e. nucleotides are inserted and/or deleted immediately after target(s) are cut.

For a given tree $\mathbb{T}$, let $t_{\N}$ be the length of the branch ending with node $\N$ and let $\{X_{\N}(t): 0 \le t_{\N}\}$ be the Markov process along the branch.
In addition, let $a_{\N}$ be the allele observed at that leaf, and let $\Leaves(\N)$ denote all leaves with ancestral node $\N$.
The set of leaves in the entire tree $\mathbb{T}$ is denoted $\Leaves(\mathbb{T})$.
As notational shorthand, we denote the Markov process for the branches with end nodes  in the set $S$ as $X_{S}$.
In addition, the observed alleles in the leaf set $S$ are denoted $a_{S}$.

If there are multiple barcodes, we use the notation $X_{\N}^{(i)}(\cdot)$ to represent the process for the $i$th barcode and $a_{\N}^{(i)}$ to represent the allele observed on the $i$th barcode.
In the manuscript, we assume the barcodes are independently and identically distributed.
Therefore we will typically discuss the model and assumptions in the context of a single barcode and omit the index of the barcode.

Unfortunately, calculating the likelihood of the tree for a general model where the mutation rate depends on the entire sequence is computationally intractable: We would need to compute the transition rates between an infinite number of barcode states.
Instead we introduce a model assumption so that we can aggregate the possible barcode states into lumped states indexed by target statuses.
Then we can compute the likelihood using transition matrices of dimension at most $2^M \times 2^M$.
As $M$ is typically small ($M = 10$ in \citet{McKennaaaf7907}), the assumption makes the likelihood computationally feasible.

First we formalize the \emph{outermost-cuts} assumption, which states that the cuts for an indel tract occur at the outermost cut sites.
We define this mathematically by requiring that for any indel tract where targets $j_0$ and $j_1$ are cut, the deletions to the left and right cannot extend past the cut site of the neighboring targets $j_0 - 1$ and $j_1 + 1$.
\begin{assumption}[\emph{outermost-cuts}]
	All indel tracts are of the form $\IT[p_0, p_1, s, j_0, j_1]$ where
	\begin{align*}
	c(j_0 - 1) < p_0 \le c(j_0) & \quad \text{ if } j_0 \ge 1\\
	c(j_1) \le p_1 < c(j_1 + 1) & \quad  \text{ if } j_1 \le M.
	\end{align*}
	\label{assump:deletion_not_too_long}
\end{assumption}
\noindent This assumption limits the possible mutation histories of the alleles.
Note that Assumption~\ref{assump:deletion_not_too_long} still allows indel tracts to deactivate targets immediately neighboring the cut site.
That is, an indel tract $d = \IT[p_0, p_1, s, j_0, j_1]$ can either have a \textit{short} deletion to the left so that target $j_0 - 1$ is unaffected or a \textit{long} deletion to the left such that target $j_0 - 1$ is deactivated, i.e.
\begin{align}
d \text{ has a short left deletion if } & p_0 \in \pos(j_0)
\label{eq:short_long1}
\\
d \text{ has a long left deletion if } & p_0 \in \pos(j_0 - 1)
\label{eq:short_long2}
\end{align}
We can have similar short and long deletions to the right.

For the second assumption, let us introduce the concept of a \underline{target tract}, which is a set of indel tracts that cut the same target(s) and deactivate the same target(s).
A target tract, denoted $\TT[j_0', j_0, j_1,  j_1']$, is the set of all indel tracts that cut targets $j_0$ and $j_1$ and delete nucleotides such that targets $j_0'$ through $j_1'$ are inactive, i.e.
\begin{align}
\TT[ j_0', j_0, j_1, j_1'] = \left\{
\IT[p_0, p_1, s, j_0, j_1]: p_0 \in \pos(j_0'), p_1 \in \pos(j_1')
\right\}.
\label{eq:tt}
\end{align}
Note that we always have that $j_0' \le j_0 \le j_1 \le j_1'$.
Each indel tract $d$ belongs to a single target tract; we denote its associated target tract by $\TT(d)$.

This second assumption decomposes the mutation process into a two-step process where targets are cut and then indels are introduced; and combines the \emph{target-rate} and \emph{indel-probability} assumptions.
In particular, the assumption states that the instantaneous rate of introducing an indel tract can be factorized into the rate of introducing any indel tract from a target tract, which depends on the target status of the current allele, and the conditional probability of introducing an indel tract which only depends on the target tract.
\begin{assumption}[\emph{rate-rate, indel-probability}]
	Let $a$ be any allele, $d$ be any indel tract that can be introduced into $a$, and $\tau$ be the target tract $\TT(d)$.
	The instantaneous rate of introducing indel tract $d$ in allele $X(t) = a$ can be factorized into two terms: first, a function $h$ that only depends on $\tau$, $\TargStat(a)$, and time $t$, then second, the conditional probability of introducing $d$ given $\tau$:
	\begin{align*}
    \begin{split}
	q(a, \Apply(a, d), t)
    & := \lim_{\Delta \rightarrow 0}
	\frac{\Pr \left (X(t + \Delta) = \Apply(a, d) | X(t) = a \right )}{\Delta} \\
	& = h \left (\tau, \TargStat(a), t \right )
	\Pr \left (d | \tau \right).
    \end{split}
	\end{align*}
	\label{assump:factor_rate}
\end{assumption}
\noindent Thus the instantaneous rate of introducing $d$ only depends the allele $a$ through its target status.
Using this assumption, we will show that the mutation process is equivalent to a continuous time Markov chain where we lump together possible allele states that share the same target status.

\subsection{Likelihood approximation: summing over likely ancestral states}

The likelihood of a given tree and mutation parameters is the sum of the probability of the data over all possible mutation histories.
There are many possible ancestral states since inter-target deletions can mask previously-introduced indel tracts.
In this section, we present an approximation of the likelihood that only sums over the likely ancestral states and ignores those with very small probabilities.
We also provide a simple algorithm that efficiently specifies the set of these likely states, which is useful when we actually implement the (approximate) likelihood calculation.

We first address the problem that inter-target indel tracts have too many possible histories for brute-force computation.
Not only can inter-target deletions mask previously-introduced indel tracts, but they can also result from merging overlapping multiple indel tracts.
To simplify the likelihood calculation, we ignore any mutation history where indel tracts merge.
We are motivated to do so by observing that indel tracts rarely merge -- merging only occurs when many nucleotides are deleted whereas we mostly observe short deletions in the experimental data.
Thus, mutation histories involving merging indel tracts probably contribute a negligible fraction to the total likelihood.
The approximation is therefore as follows:
\begin{approximation}
We approximate the likelihood with the probability of the data over all possible mutation histories where no two indel tracts are ever merged:
\begin{align}
\Pr\left( X_{\Leaves(\N)}(T) = a_{\Leaves(\N)} \right)
\approx \Pr \left ( X_{\Leaves(\N)}(T) = a_{\Leaves(\N)}, \text{no indel tracts merged} \right ).
\label{eq:approx}
\end{align}
\label{approx:no_merging}
\end{approximation}
\noindent Note that this approximation strictly lower bounds the full likelihood since we sum over a subset of the possible histories.

Based on Approximation~\ref{approx:no_merging}, we define a \underline{partial ordering between alleles}.
Given two alleles $a, a' \in \Omega$, we use $a' \succeq a$ to indicate that $a$ can transition to the allele $a'$ without merging indel tracts, i.e. there is a sequence of indel tracts $\{d_i\}_{i=1}^m$ such that
$$
a' = \Apply \left (d_m, \Apply \left (d_{m-1}, ....\Apply (d_1, a) \right ) \right )
$$
where no indel tracts merge.

Now we present an algorithm to concisely express the set of alleles that are summed over at each internal node under Approximation~\ref{approx:no_merging}.
For the tree $\mathbb{T}$ and observed alleles $a_{\Leaves(\mathbb{T})}$, define the set of likely ancestral states at each internal node $\N$ as
\begin{align}
\AncState\left (\N; \mathbb{T}, a_{\Leaves(\mathbb{T})} \right )
= \left \{
a \in \Omega:
a \preceq a_{\LL} \forall \LL \in \Leaves(\N)
\right \}.
\label{eq:anc_state}
\end{align}
We define $\AncState(\cdot)$ over leaf nodes in the same way as \eqref{eq:anc_state}, but we interpret this set as the alleles that likely preceded the observed allele.
Henceforth, we use the shorthand notation $\AncState(\N) \equiv \AncState\left (\N; \mathbb{T}, a_{\Leaves(\mathbb{T})} \right )$ whenever the context is clear.
The approximate likelihood from Approximation~\ref{approx:no_merging} is equal to summing over $\AncState(\N)$ at each internal node $\N$.

To construct the sets $\AncState(\N)$, we also need sets of indel tracts of the following forms (Figure~\ref{fig:sgwc}):
\begin{itemize}
	\item \underline{wildcard}\footnote{In software systems, a wildcard is a symbol used to represent one or more characters (e.g. ``*''). Similarly, we define wildcard here as all indel tracts that only deactivate targets within a specified range.} $\WC[j_0, j_1]$: the set of all indel tracts that only deactivate targets within the range $j_0$ to $j_1$
	\begin{align}
	\WC[j_0, j_1] = \left\{
	\IT[p_0', p_1', s', j_0', j_1']: \pos(j_0 - 1) < p_0',  p_1' < \pos(j_1 + 1)
	\right\}
	\label{eq:wc}
	\end{align}
	\item \underline{singleton-wildcard} $\SGWC[p_0, p_1, s, j_0, j_1]$: the union of the singleton set $\{\IT[p_0, p_1, s, j_0, j_1]\}$ and its inner wildcard, if it exists:
	\begin{align}
    \begin{split}
	& \SGWC[p_0, p_1, s, j_0, j_1] = \\
    & \qquad
	\begin{cases}
	\{\IT[p_0, p_1, s, j_0, j_1]\} \cup \WC[j_0 + 1, j_1 - 1] & \text{if } j_0 + 1 \le j_1 - 1\\
	\{\IT[p_0, p_1, s, j_0, j_1]\} & \ow
	\end{cases}
    \end{split}
	\label{eq:sgwc}
	\end{align}
	The singleton of singleton-wildcard \eqref{eq:sgwc} refers to $\{\IT[p_0, p_1, s, j_0, j_1]\}$ and the inner wildcard of a singleton-wildcard refers to $\WC[j_0 + 1, j_1 - 1]$ if it exists and $\emptyset$ otherwise.
\end{itemize}
Two singleton-wildcards are disjoint if the range of their positions don't overlap.
A wildcard is disjoint from a wildcard/singleton-wildcard if the range of their targets don't overlap.
\begin{figure}
	\begin{subfigure}{0.6\linewidth}
		\includegraphics[width=\linewidth]{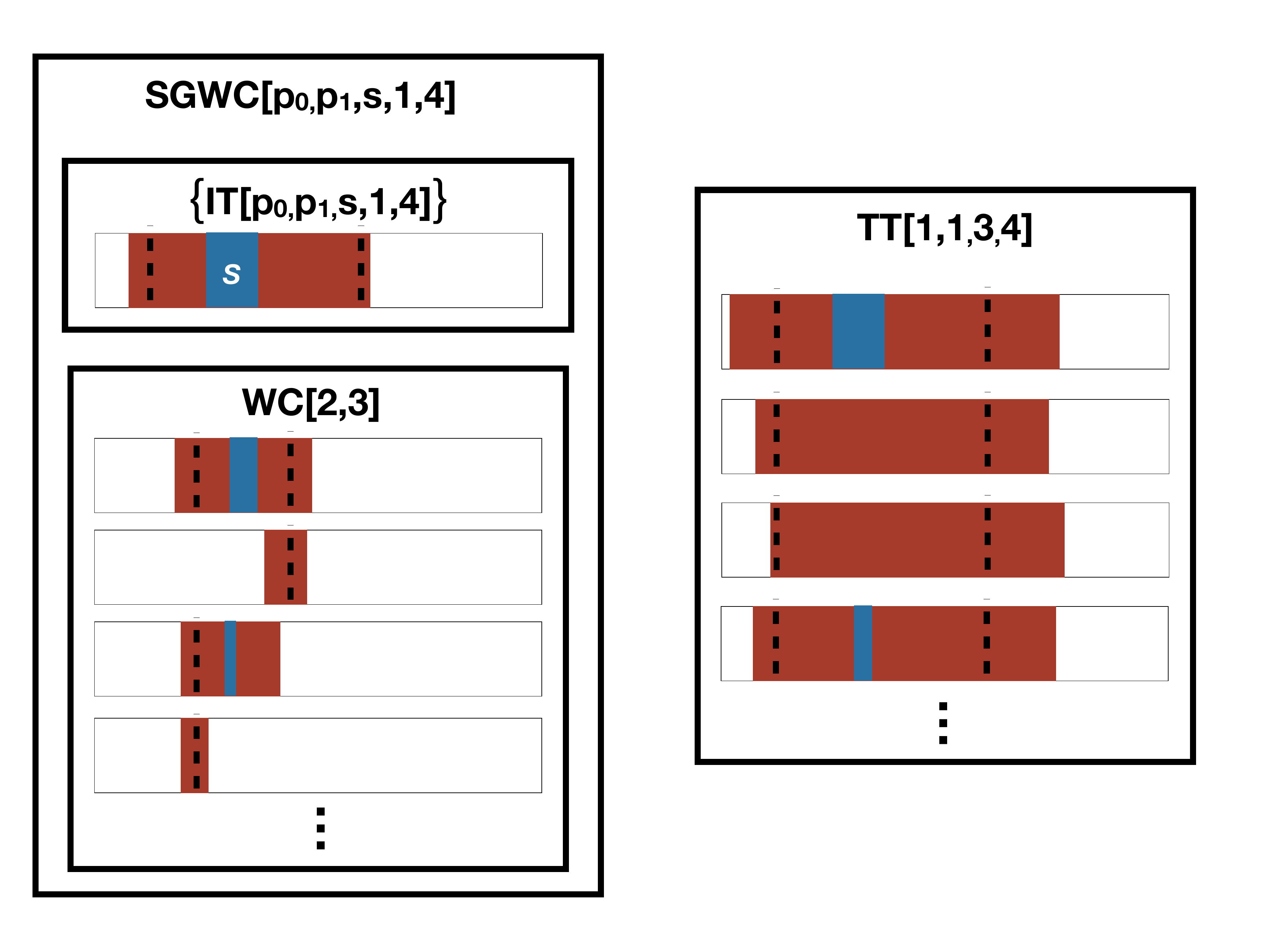}
		\caption{
		Relationship between indel tracts ($\IT$), target tracts ($\TT$), wildcards ($\WC$), and singleton-wildcards ($\SGWC$).
		Each indel tract is shown in the context of a barcode.
		Each box represents a set of indel tracts; we show the notation at the top of the box for describing that set of indel tracts.
		A singleton-wildcard is the union of a singleton set composed of an indel tract and an inner wildcard.
		}
		\label{fig:sgwc}
	\end{subfigure}
	\hspace{0.02in}
	\begin{subfigure}{0.38\linewidth}
		\includegraphics[width=\linewidth]{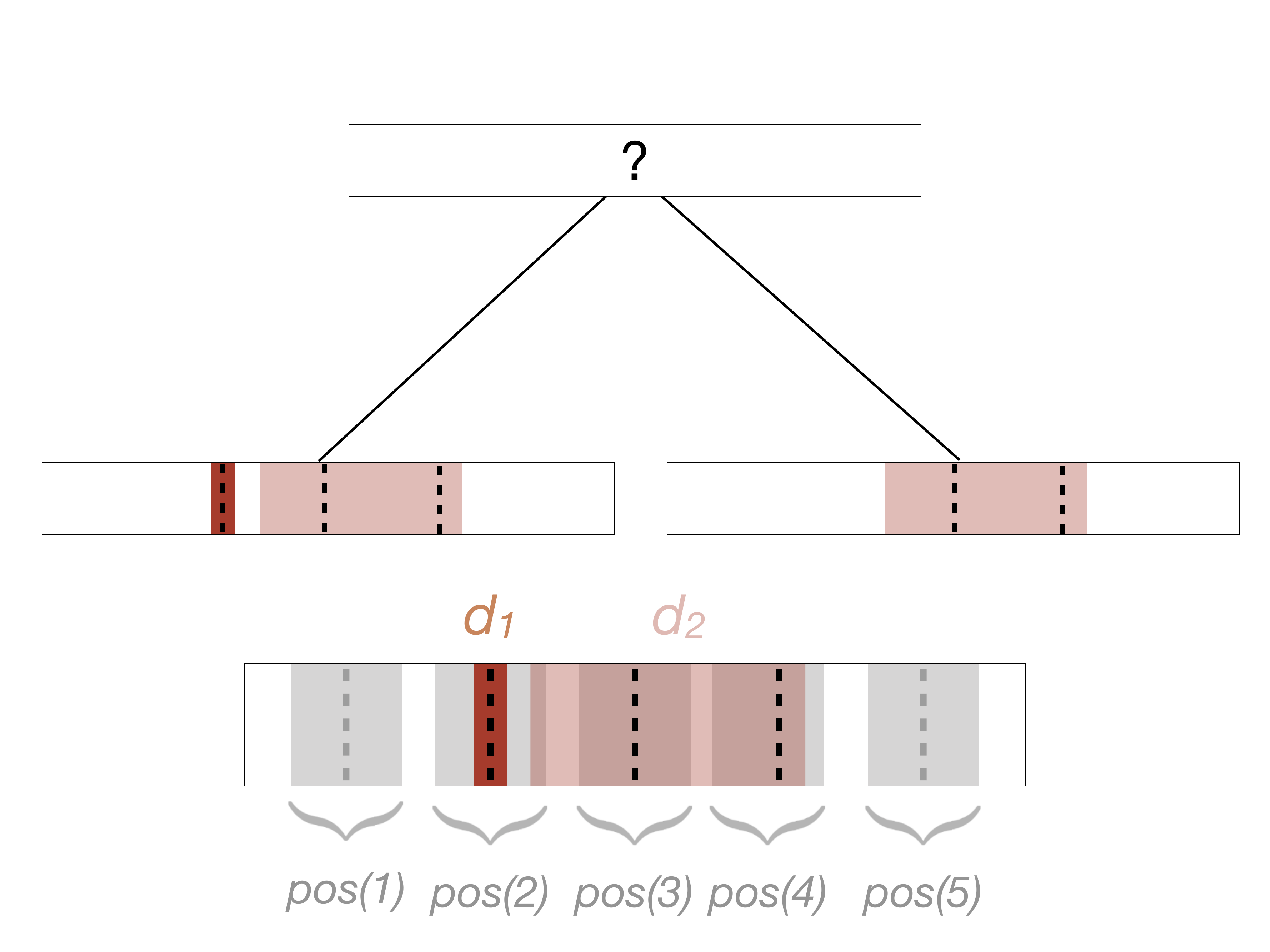}
		\caption{
			\textbf{Bottom}: An example of an allele with two indel tracts $d_1$ and $d_2$ where $d_1$ must have been introduced before $d_2$, because $d_1$ cuts target 2 while $d_2$ deactivates target 2 through 4.
			\textbf{Top}: A two-leaf tree where one leaf is the example allele and the other leaf is an allele with only $d_2$.
			Since $d_1$ must be introduced before $d_2$, the only possible ancestral state of the parent is an unmodified allele.
			On the other hand, if $d_2$ did not overlap with $\pos(2)$, we can simply take the intersection of the two alleles to get a possible ancestral state.
		}
		\label{fig:indel_tract_order}
	\end{subfigure}
\caption{Visual dictionary of indel tract sets (a) and example of ordering between indel tracts in an allele (b).}
\label{fig:indel_tract_relations}
\end{figure}

Given a set of indel tracts $D$, let the alleles generated by $D$, denoted $\Alleles(D)$, be the set of alleles that can be created using subsets of $D$, i.e. $\Alleles(D)$ is
\begin{align*}
\left\{
\{\IT[p_{0,k}, p_{1,k}, s_j, j_{0,k}, j_{1,k}]\}_{k=1}^m \subseteq D: m \in \mathbb{N}, p_{1,k} < p_{0,k+1} \forall k = 1,...,m - 1
\right\}.
\end{align*}
We are interested in wildcards and singleton-wildcards because for any leaf node $\LL$ with indel tracts $\IT[p_{0,m}, p_{1,m}, s_m, j_{0,m}, j_{1,m}]$ for $ m = 1,...,M_{\LL}$, a superset of $\AncState(\LL)$ is the alleles generated by the union of their corresponding singleton-wildcards, i.e.
\begin{align}
\AncState(\LL)
\subseteq
\Alleles \left(
\bigcup_{m = 1,...,M_{\LL}} \SGWC[p_{0,m}, p_{1,m}, s_m, j_{0,m}, j_{1,m}]
\right).
\label{eq:leaf_superset}
\end{align}

In fact, the following lemma states that we can use a recursive algorithm to compute supersets of $\AncState(\N)$.
The algorithm starts at the leaves and proceeds up towards the root.
We first use \eqref{eq:leaf_superset} to compute supersets of $\AncState(\LL)$ for all leaf nodes $\LL$.
Now consider any internal node $\N$ with two (direct) children nodes $\C_1$ and $\C_2$.
For mathematical induction, suppose that we have already computed supersets of $\AncState(\C_i)$ for $i = 1$ and $2$ that are the alleles generated by unions of wildcards/singleton-wildcards.
We compute the superset of $\AncState(\N)$ as the alleles generated by the nonempty pairwise intersections of wildcards and/or singleton-wildcards corresponding to $\AncState(\C_1)$ and $\AncState(\C_2)$.
Since the intersection of wildcards and/or singleton-wildcards is always a wildcard or singleton-wildcard, we can also write this superset as the alleles generated by the union of disjoint wildcards and/or singleton-wildcards.
We can repeatedly intersect wildcards/singleton-wildcards for internal nodes with multiple children nodes.
The proof for the lemma is straightforward so we omit it here.
\begin{lemma}
	\label{lemma:anc_state_superset}
	Consider any internal node $\N$ with children nodes $\C_1$, ..., $\C_K$.
	Suppose for each child $\C_k$, we have that
	\begin{align}
	\AncState(\C_k)
	&\subseteq \Alleles\left(
	\bigcup_{m=1}^{M_{\C_k}} D_{\C_k, m}
	\right)
	\label{eq:anc_state_superset}
	\end{align}
	where $\left\{
	D_{\C_k, m}
	\right\}_{m = 1}^{M_{\C_k}}$ are pairwise disjoint wildcards and/or singleton-wildcards.
	Then, we can also write $\AncState(\N)$ in the form of \eqref{eq:anc_state_superset} where $\left\{
	D_{\N, m}
	\right\}_{m = 1}^{M_{\N}}$ are the non-empty intersections of $D_{\C_1, m_1} \cap ... \cap D_{\C_K, m_K}$, i.e.
	\begin{align}
	\left\{
	D_{\N, m}
	\right\}_{m = 1}^{M_{\N}}
	& = \left\{
	D' = D_{\C_1, i_1} \cap ... \cap D_{\C_K, m_K}:
	D' \ne \emptyset,
	m_1 \in \{ 1,\cdots,M_{\C_1}\},
	\cdots,
	m_K \in \{ 1,\cdots,M_{\C_K}\}
	\right \}.
	\label{eq:intersect_sgwc_nonempty}
	\end{align}
\end{lemma}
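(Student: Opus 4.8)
The plan is to reduce the lemma to a purely set-theoretic statement about how the operator $\Alleles(\cdot)$ interacts with intersections, and then to verify separately that intersections of wildcards and singleton-wildcards stay within this family. First I would establish the structural identity
\[
\AncState(\N) = \bigcap_{k=1}^K \AncState(\C_k).
\]
This is immediate from definition \eqref{eq:anc_state}: since $\N$ is internal with children $\C_1,\dots,\C_K$, its leaf set partitions as $\Leaves(\N) = \bigcup_{k} \Leaves(\C_k)$, and membership $a \in \AncState(\N)$ is exactly the condition $a \preceq a_{\LL}$ for every $\LL \in \Leaves(\N)$, i.e. for every $\LL$ below some child. Combining this with the hypothesis \eqref{eq:anc_state_superset} for each child gives
\[
\AncState(\N) \subseteq \bigcap_{k=1}^{K} \Alleles\left( \bigcup_{m=1}^{M_{\C_k}} D_{\C_k, m} \right).
\]

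Next I would prove the containment $\bigcap_k \Alleles(\bigcup_m D_{\C_k,m}) \subseteq \Alleles(\bigcup_{\vec m} \bigcap_k D_{\C_k,m_k})$, which is the heart of the reduction. Take an allele $a$ in the left-hand side; by definition $a$ is a set of pairwise-disjoint indel tracts, and each $d \in a$ lies in $\bigcup_m D_{\C_k,m}$ for every $k$. Because the sets $\{D_{\C_k,m}\}_m$ are pairwise disjoint for each fixed $k$, the index $m$ with $d \in D_{\C_k,m}$ is unique; call it $m_k(d)$. Then $d$ lies in the single set $\bigcap_k D_{\C_k,m_k(d)}$, which is one of the nonempty intersections listed in \eqref{eq:intersect_sgwc_nonempty}. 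Since this holds for every $d \in a$ and the tracts of $a$ were already disjoint, $a$ is generated by $\bigcup_{\vec m} \bigcap_k D_{\C_k,m_k}$, establishing $\AncState(\N) \subseteq \Alleles(\bigcup_m D_{\N,m})$.

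It then remains to certify that the collection $\{D_{\N,m}\}$ defined by \eqref{eq:intersect_sgwc_nonempty} really consists of pairwise-disjoint wildcards and/or singleton-wildcards, so that the recursion can continue. Pairwise disjointness is easy: two distinct tuples $\vec m \neq \vec m'$ differ in some coordinate $k$, where $D_{\C_k,m_k}$ and $D_{\C_k,m_k'}$ are disjoint by hypothesis, hence the corresponding intersections are disjoint (and in particular no two distinct nonempty intersections coincide). The genuinely technical step — and the one I expect to be the main obstacle — is closure under intersection. I would reduce to binary intersections and check the three cases $\WC \cap \WC$, $\WC \cap \SGWC$, and $\SGWC \cap \SGWC$ against definitions \eqref{eq:wc} and \eqref{eq:sgwc}. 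The first two are routine target-range clippings that return a wildcard, or a singleton-wildcard when the distinguished indel tract survives and its inner wildcard is already contained in the clipping range. The delicate case is $\SGWC \cap \SGWC$: a priori the intersection could retain two distinct distinguished indel tracts, which would be neither a wildcard nor a singleton-wildcard. The crux is to observe that this cannot occur, since a distinguished indel tract lies in the other singleton-wildcard only if its deactivated target range is \emph{strictly} nested inside the other's, and two distinct indel tracts cannot each be strictly nested inside the other. Ruling out this configuration forces the intersection to keep at most one distinguished indel tract sitting over an intersected inner wildcard, which is exactly a singleton-wildcard (or a plain wildcard when no distinguished tract survives), completing the argument.
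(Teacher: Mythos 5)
Your proof is correct and takes essentially the approach the paper intends — the paper omits the proof as ``straightforward,'' but its surrounding text describes exactly your route: reduce via $\AncState(\N) = \bigcap_{k=1}^K \AncState(\C_k)$, distribute intersection over the disjoint unions, and invoke closure of wildcards/singleton-wildcards under pairwise intersection. Your mutual-nesting argument ruling out two surviving distinguished tracts in a $\SGWC \cap \SGWC$ intersection correctly supplies the one genuinely nontrivial detail that the paper leaves implicit in its claim that ``the intersection of wildcards and/or singleton-wildcards is always a wildcard or singleton-wildcard.''
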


For efficiency reasons, we are not satisfied with computing supersets of $\AncState(\cdot)$; rather, we would like to concisely express the set of alleles that is exactly equal to $\AncState(\cdot)$.
The only case in which the algorithm computes a strict superset of $\AncState(\N)$ is when the alelles observed at the leaves of $\N$ imply that the observed indel tracts must be introduced in a particular order.
For example, if an allele has indel tracts $d_1$ and $d_2$, we know that $d_1$ must be introduced before $d_2$ if $d_1$ cuts target $j$ and $d_2$ deactivates target $j$ (Figure~\ref{fig:indel_tract_order}).
Due to this ordering, we may find that the same indel tract observed in two alleles must have been introduced independently (also known as homoplasy in the phylogenetics literature).
To indicate such orderings, we use the notation $d \in a \Rightarrow d' \in a$ to denote that ``if indel tract $d$ is in allele $a$, then indel tract $d'$ must also be in $a$.``
The set of alleles respecting this ordering constraint is denoted
\begin{align*}
\Order\left(
d \Rightarrow d'
\right )
=
\left\{
a \in \Omega:
d \in a \implies d' \in a
\right\}.
\end{align*}
Per this definition, $\Order(d \Rightarrow d')$ contains all alleles that do not include $d$.

The following lemma builds on Lemma~\ref{lemma:anc_state_superset} and computes the sets exactly equal to $\AncState(\cdot)$.
The algorithm is similar as before, but the parent nodes also adopt ordering requirements from their children nodes.
Note that ordering requirements only ever involve observed indel tracts.
Again, the proof for the lemma is straightforward so we omit it here.
\begin{lemma}
	\label{lemma:anc_state_summary}
	For any leaf node $\LL$, suppose its observed allele is $\{d_m: m = 1,...,M_{\LL}\}$ for some $M_{\LL} \ge 0$, where $d_m = \IT[p_{0,m}, p_{1,m}, s, j_{0,m}, j_{1,m}]$.
	Denote its list of ordering requirements as
	\begin{align*}
	\Orderlist_{\LL} =
	& \left\{
	\Order \left(
	d_m
	\Rightarrow
	d_{m'}
	\right)
	: m,m' \in \{1,...,M_{\LL}\}, m \ne m', p_{1,m} \in \pos(j_{0,m'}) \text{ or } p_{0,m} \in \pos(j_{1,m'})
	\right\}.
	\end{align*}
	Then
	\begin{align}
	\AncState(\LL)
	&= \Alleles\left(
	\bigcup_{m=1}^{M_{\LL}} D_{\LL, m}
	\right)
	\cap
	\left[
	\bigcap_{\Order(d \Rightarrow d') \in \Orderlist_{\LL}}
    \! \Order(d \Rightarrow d')
	\right]
	\label{eq:leaf_anc_state}
	\end{align}
	where $D_{\LL, m} = \SGWC[p_{0,m}, p_{1,m}, s_{j}, j_{0,m}, j_{1,m}]$.

	Similarly, for any internal node $\N$, we can also write $\AncState(\N)$ in the form of \eqref{eq:leaf_anc_state}.
	If node $\N$ has children nodes $\C_1$, ..., $\C_K$, $\left\{
	D_{\N, m}
	\right\}_{m = 1}^{M_{\N}}$ are pairwise disjoint wildcards and/or singleton-wildcards satisfying \eqref{eq:intersect_sgwc_nonempty} and
	\begin{align*}
	\Orderlist_{\N} &=
	\left\{
	\Order(d \Rightarrow d') \in \left[
	\bigcup_{k = 1}^K
	\Orderlist_{\C_k}
	\right]:
	d \in \left[ \bigcup_{m=1}^{M_{\N}} D_{\N, m}\right]
	\right \}.
	\end{align*}
\end{lemma}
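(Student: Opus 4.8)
The plan is to prove each of the two claimed identities by establishing set equality through a pair of inclusions, using Lemma~\ref{lemma:anc_state_superset} to supply the wildcard/singleton-wildcard skeleton and then showing that the ordering sets $\Order(d \Rightarrow d')$ are exactly what is needed to sharpen that superset into an exact description of $\AncState(\cdot)$. Throughout I would work from the characterization $a \in \AncState(\LL)$ iff $a \preceq a_{\LL}$ (and $a \in \AncState(\N)$ iff $a \preceq a_{\LL}$ for every $\LL \in \Leaves(\N)$), so that every claim reduces to a statement about non-merging mutation histories from $a$ to the observed alleles.

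For the leaf node $\LL$, the forward inclusion into the right-hand side of \eqref{eq:leaf_anc_state} follows by combining two facts. First, \eqref{eq:leaf_superset} already gives $\AncState(\LL) \subseteq \Alleles(\bigcup_m D_{\LL,m})$ with $D_{\LL,m} = \SGWC[\ldots]$. Second, I would show $\AncState(\LL) \subseteq \Order(d_m \Rightarrow d_{m'})$ for each requirement in $\Orderlist_{\LL}$ by arguing the necessity of the ordering directly from the position geometry: the side condition $p_{1,m} \in \pos(j_{0,m'})$ (symmetrically $p_{0,m} \in \pos(j_{1,m'})$) is exactly the statement that $d_m$ has a long deletion \eqref{eq:short_long2} that deactivates a target which $d_{m'}$ must cut while active. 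Hence if an ancestor $a$ already contains $d_m$, target $j_{0,m'}$ is already inactive in $a$, so $d_{m'}$ cannot be introduced during any later non-merging transition $a \to a_{\LL}$; since $d_{m'} \in a_{\LL}$, it must already lie in $a$, i.e. $a \in \Order(d_m \Rightarrow d_{m'})$. Intersecting these gives the forward inclusion.

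The reverse inclusion for the leaf is where the real work sits: given an arbitrary $a$ in the right-hand side, I would construct an explicit non-merging sequence carrying $a$ to $a_{\LL}$. Each tract of $a$ is drawn from some $\SGWC[\ldots]$, hence is either an observed tract $d_m$ itself or a tract nested inside its inner wildcard $\WC[j_{0,m}+1, j_{1,m}-1]$, whose positions lie within the span of $d_m$. I would then introduce the observed tracts $d_1,\dots,d_{M_{\LL}}$ one at a time in a topological order of the relation $\Rightarrow$ encoded by $\Orderlist_{\LL}$ (the relation is acyclic because it records a genuine temporal precedence). Introducing $d_m$ deletes its whole span and therefore \emph{masks} — rather than merges with — any nested wildcard tracts of $a$ sitting inside it, which is an allowed transition; and because we respect the $\Rightarrow$ order, no tract that would deactivate $j_{0,m}$ or $j_{1,m}$ has yet been applied, so both cut targets are still active at that step. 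Verifying that this schedule keeps every cut target active, never forces a merge, and terminates exactly at $a_{\LL}$ is the bookkeeping-heavy but routine core of the argument, and it is precisely the ordering constraints in $\Orderlist_{\LL}$ that make the schedule feasible.

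For an internal node $\N$ with children $\C_1,\dots,\C_K$, I would use $\AncState(\N) = \bigcap_k \AncState(\C_k)$ (since $\Leaves(\N) = \bigcup_k \Leaves(\C_k)$) and handle the two ingredients separately. The skeleton $\{D_{\N,m}\}$ is obtained exactly as in Lemma~\ref{lemma:anc_state_superset}, taking the nonempty pairwise intersections \eqref{eq:intersect_sgwc_nonempty} and noting that intersections of wildcards/singleton-wildcards are again wildcards/singleton-wildcards. For the ordering part, I would observe that intersecting the children's descriptions intersects their $\Order(\cdot)$ constraints, and that a constraint $\Order(d \Rightarrow d')$ inherited from a child is vacuous on $\Alleles(\bigcup_m D_{\N,m})$ precisely when $d$ is no longer \emph{generatable} at $\N$, i.e. $d \notin \bigcup_m D_{\N,m}$ (then no allele in that set contains $d$). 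Dropping exactly those vacuous constraints leaves the intersection unchanged, which is why $\Orderlist_{\N}$ retains only the children's constraints whose antecedent tract $d$ still lies in $\bigcup_m D_{\N,m}$; because these antecedents are observed tracts, the inherited constraints again involve only observed indel tracts. I expect the main obstacle to be the leaf-level reverse inclusion — producing a concrete non-merging history and confirming that $\Orderlist_{\LL}$ is neither too strong nor too weak — while the internal-node step is then essentially formal, the only care being to check that no genuinely restrictive constraint is discarded when passing from children to parent.
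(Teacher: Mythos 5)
There is no paper proof to compare against here: the authors explicitly state that the proof is ``straightforward'' and omit it. Judged on its own terms, your outline has the right architecture --- double inclusion at the leaf, citing \eqref{eq:leaf_superset} for the skeleton and deriving necessity of each ordering constraint from irreversibility of target deactivation along non-merging histories; an explicit non-merging schedule for the reverse inclusion; and the internal-node case reduced to the identity $\AncState(\N) = \bigcap_{k} \AncState(\C_k)$ followed by discarding constraints that are vacuous on $\Alleles\left(\bigcup_m D_{\N,m}\right)$. That last observation (a child's constraint matters iff its antecedent tract is still generatable at the parent, and retaining a constraint whose consequent is no longer generatable correctly excludes its antecedent) is exactly what makes the stated $\Orderlist_{\N}$ correct, so the internal-node step is, as you say, essentially formal.

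The one genuine soft spot is your justification of acyclicity: ``the relation is acyclic because it records a genuine temporal precedence'' is circular in this setting, since the lemma is a combinatorial claim about arbitrary alleles of the form \eqref{eq:allele_def}, and the existence of a valid non-merging temporal realization of the constraints is precisely what the reverse inclusion is trying to establish. Acyclicity must instead come from the geometry, using Assumption~\ref{assump:deletion_not_too_long} together with position-disjointness of the observed tracts: (i) no two observed tracts cut a common target, since each tract's span contains the cut sites of all targets it cuts; (ii) a constraint $\Order(d_m \Rightarrow d_{m'})$ with $d_m$ to the left of $d_{m'}$ forces $j_{0,m'} = j_{1,m}+1$, i.e. edges only join tracts with adjacent cut-target ranges, so in any directed cycle the leftmost tract's in- and out-neighbours would both cut its right-adjacent target and hence coincide, collapsing the cycle to length two; and (iii) a $2$-cycle would require $p_{1,m} \in \pos(j_{0,m'})$ and $p_{0,m'} \in \pos(j_{1,m})$ simultaneously, which contradicts $p_{1,m} < p_{0,m'}$ because every position of target $j_{1,m}$ precedes every position of target $j_{0,m'}$. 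With that fact supplied, your schedule is sound: constraint satisfaction by $a$ guarantees that no observed tract already present blocks an introduction, wildcard tracts can never deactivate a cut target of an observed tract other than their host, and nested wildcard tracts lie strictly inside the deleted span of their host singleton, so they are masked rather than merged.
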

Now that we've shown that $\AncState(\N)$ can be written in terms of disjoint wildcards and singleton-wildcards, we introduce one more notation that will be useful later.
Define $\SG(\N)$ to be the singletons from the singleton-wildcards in $\AncState(\N)$.

\subsection{Likelihood calculation: aggregating states}

Here we show how to use the concept of ``lumpability`` to calculate the approximate likelihood \eqref{eq:approx}, even when there are an infinite number of ancestral states.
Recall that we previously proposed calculating \eqref{eq:approx}, which sums over a subset of all possible ancestral states.
Though this has decreased the set of states to sum over, the number of ancestral states under consideration at each tree node is still infinite.
Even if we ignore the insertion sequences, the number of possible ancestral states still grows at a rate of $O(p^2)$ where $p$ is the number of positions in the unmodified barcode and the likelihood calculation has complexity $O(p^6)$ since we need to construct transition matrices.
Since the barcode is 300 nucleotides long in \citet{McKennaaaf7907}, we cannot calculate the likelihood by considering all possible states separately.

To handle Markov processes with infinite state spaces, one solution is to partition the states into lumped states and show that the behavior of the original Markov process is equivalent to that of an aggregate Markov process over the lumped states \citep{Kemeny1976-ll, Hillston1995-hk}.
This property, called ``lumpability,`` is defined as follows.
\begin{definition}
Let $X(t)$ be a continuous time Markov chain with state space $\Omega$.
If there exists a partition $\{A_1,...,A_M\}$ of $\Omega$ and a continuous time Markov chain $Y(t)$ with state space $\{A_1,...,A_M\}$ such that
\begin{align*}
\Pr(X(t) \in A_i) = \Pr(Y(t) = A_i) \quad \forall i = 1,...,M,
\end{align*}
then $X$ is lumpable.
\end{definition}
\noindent Although lumpability is a well-established technique for Markov chains, the practical difficulty is typically in constructing the appropriate partition \citep{Ganguly2014-dm}.

There is relatively little work on applying these ideas of lumpability in phylogenetics.
(The one application in \citet{Davydov2017-ac} calculates the likelihood of a codon model approximately by assuming states are lumpable, even though this is not necessarily true in their model; here we will show that the states are indeed lumpable.)
Here we extend lumpability to the phylogenetics setting where we have different partitions of the state space at each tree node.
In particular, for some indexing set $B$, define a partition $\{g(b; \N): b\in B\}$ of $\Omega$ at every node $\N$.
Lumpability is only useful for efficient phylogenetic likelihood computation if these partitions are compatible with Felsenstein's pruning algorithm \citep{Felsenstein1981-zs}.
For any $b\in B$ and node $\N$, let $p_{\N}(b)$ be the component of the likelihood for the subtree below $\N$ for states in partition $b$:
\begin{align}
p_{\N}(b) = \Pr\left(
X_{\Leaves(\N)}(T) = a_{\Leaves(\N)}
\middle |
X_{\N}(t_{\N}) \in g(b; \N)
\right )
\label{eq:down_prob}
\end{align}
By Felsenstein's algorithm, we have
\begin{align}
p_{\N}(b)
& = \prod_{\C \in \children(\N)}
\left \{
\sum_{b' \in B}
p_{\C}(b')
\Pr\left (X_{\C}(t_{\C}) \in g(b'; \C) | X_{\C}(0) \in g(b; \N) \right )
\right \}.
\label{eq:felsen_recurs}
\end{align}
For lumpability to be useful, we must be able to show that there exists an easy-to-compute weight function $w(b, \N, b', \C)$ such that
\begin{align}
p_{\N}(b)
& = \prod_{\C \in \children(\N)}
\left \{
\sum_{b' \in B}
p_{\C}(b')
w(b, \N, b', \C)
\Pr\left (X_{\C}(t_{\C}) \in g(b'; \C) | X_{\C}(0) \in g(b; \C) \right )
\right \}
\label{eq:lump_recurs}
\end{align}
and that for each tree node $\C$, $X_{\C}(\cdot)$ is indeed lumpable over the partition $g(\cdot; \C)$.
Obviously if the partitions are the same across all tree nodes, then we can just set all weights to one.
However we will need to construct a different partition for each tree node for the GESTALT likelihood.

We propose partitioning $\Omega$ at tree node $\N$ based on whether or not the allele is a likely ancestral state (i.e. is it in $\AncState(\N)$) and its target status (Figure~\ref{fig:lumpability}):
\begin{definition}
	Define the indexing set $B$ to be $\{0, 1\}^M \cup \{ \other \}$.

	For internal tree node $\N$, partition the state space $\Omega$ into
	\begin{align}
	\begin{cases}
	g(b; \N) = \left\{
	a \in \AncState(\N): \TargStat(a) = b
	\right\}
	& \forall \, b \in \{0, 1\}^M
	\label{eq:grouping}
	\\
	g(\other; \N) = \Omega - \AncState(\N).
	\end{cases}
	\end{align}

	For leaf node $\N$, partition the state space $\Omega$ into
	\begin{align}
	\begin{cases}
	g(b; \N) = \{a_{\N}\} & \text{if } b = \TargStat(a_{\N})\\
	g(b; \N) = \emptyset & \text{if }  b \in \{0, 1\}^M \text{ and } b \ne \TargStat(a_{\N}) \\
	g(\other; \N) = \Omega - \{a_{\N}\}.
	\end{cases}
	\end{align}
\label{def:meta_states}
\end{definition}

To prove that the Markov process over the branch with end node $\N$ is lumpable with respect to the proposed partition, we show that the instantaneous transition rate from any allele in $g(b; \N)$ to the set $g(b'; \N)$ is the same.
Therefore we use $q_{\lump}$ to denote the transition rates between the lumped states $\{g(b; \N)\}$.
The results show that there are two types of transitions between the lumped states, which determines the appropriate formula for calculating $q_{\lump}$.
Either the transition corresponds to an observed indel tract and there is only one indel tract that is a valid for transitioning between the lumped states; or the transition corresponds to a masked indel tract, in which case all indel tracts from the possible target tracts are valid transitions between the lumped states.
\begin{lemma}
	\label{lemma:state_groups}
	Suppose Assumption~\ref{assump:factor_rate} holds.
	Consider any branch with child node $\C$, and target statuses $b, b' \in \{0, 1\}^M$ where the sets $g(b;\C)$ and $g(b';\C)$ are nonempty.
	For any alleles $a, a' \in g(b; \C)$, we have
	\begin{align}
    \begin{split}
	& q_{\lump} \left (g(b; \C), g(b'; \C), t \right )\\
	& =
	\lim_{\Delta \rightarrow 0}
	\frac{
		\Pr\left(
		X_{\C}(t + \Delta) \in g(b'; \C)
		| X_{\C}(t) = a
		\right)
	}{\Delta}\\
	&=
	\lim_{\Delta \rightarrow 0}
	\frac{
		\Pr\left(
		X_{\C}(t + \Delta) \in g(b'; \C)
		| X_{\C}(t) = a'
		\right)
	}{\Delta}.
    \end{split}
	\label{eq:group_the_states}
	\end{align}
	If the only transition from an element in $g(b; \C)$ to $g(b'; \C)$ is via the unique indel $d \in \SG(\C)$ that deactivates the targets $b' \setminus b$, then
	\begin{align*}
	q_{\lump} \left (g(b; \C), g(b'; \C), t \right ) = h(\TT(d), b, t) \Pr(d|\TT(d))
	\end{align*}
	where $h$ is defined in Assumption~\ref{assump:factor_rate}.
	Otherwise, we have
	\begin{align*}
	q_{\lump} \left (g(b; \C), g(b'; \C), t \right ) = \sum_{\tau: \exists d \in \tau = \TT(d) \text{ s.t. } \Apply(d, a) \in g(b'; \N)} h(\tau, b, t).
	\end{align*}
\end{lemma}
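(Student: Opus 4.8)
The plan is to compute the total instantaneous rate out of an arbitrary state $a \in g(b;\C)$ into the lumped set $g(b';\C)$ and to show that it depends on $a$ only through its target status $b$; this simultaneously proves the equality \eqref{eq:group_the_states} and identifies the two formulas for $q_{\lump}$. First I would expand this rate into a sum over individual allele transitions and apply Assumption~\ref{assump:factor_rate}:
\[
\sum_{a' \in g(b';\C)} q(a, a', t)
= \sum_{d:\, \Apply(a,d) \in g(b';\C)} h\bigl(\TT(d), b, t\bigr)\,\Pr\bigl(d \mid \TT(d)\bigr),
\]
where I have used that every $a \in g(b;\C)$ has $\TargStat(a) = b$, so the rate factor $h$ no longer depends on the particular $a$. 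It then remains to characterize the indel tracts $d$ that carry $a$ into $g(b';\C)$ and to evaluate the sum.

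Second, I would invoke the structural description of $\AncState(\C)$ from Lemma~\ref{lemma:anc_state_summary}, namely that it is the set of alleles generated by a family of pairwise-disjoint wildcards and singleton-wildcards, intersected with the ordering constraints. Because target status is monotone under the process (irreversibility: a deactivated target stays deactivated), any transition from $g(b;\C)$ to $g(b';\C)$ requires $b \le b'$ and must deactivate exactly the contiguous block $b' \setminus b$. A valid $d$ must then (i) be applicable to $a$, (ii) leave $\Apply(a,d)$ inside $\AncState(\C)$, and (iii) produce target status $b'$. The disjoint wildcard/singleton-wildcard decomposition, together with the definitions \eqref{eq:wc}--\eqref{eq:sgwc}, forces a dichotomy on the set of such $d$.

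Third comes the case analysis. If the block $b' \setminus b$ is contributed by a singleton component of the decomposition, i.e. by an observed indel tract in $\SG(\C)$, then requirement (ii) pins $d$ down to that unique observed tract, and the sum reduces to the single term $h(\TT(d), b, t)\,\Pr(d \mid \TT(d))$. If instead $b' \setminus b$ falls within a wildcard component, then every indel tract deactivating exactly the targets in $b' \setminus b$ keeps the allele in $\AncState(\C)$, so the admissible $d$ range over one or more complete target tracts $\tau$ (those with deactivation range $b' \setminus b$ and any permissible cut targets). Invoking $\sum_{d \in \tau} \Pr(d \mid \tau) = 1$ for each contributing $\tau$, the sum collapses to $\sum_{\tau} h(\tau, b, t)$, which is exactly the stated ``otherwise'' formula. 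In both cases the answer depends on $a$ only through $b$, yielding \eqref{eq:group_the_states} together with the two rate formulas.

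The step I expect to be the main obstacle is the dichotomy in the third paragraph: I must verify carefully, from the disjointness of the generating sets and the precise wildcard/singleton-wildcard definitions, that the admissible indels are genuinely either a single observed tract or an exhaustive union of entire target tracts. In particular I need that a wildcard region admits \emph{every} indel tract of the corresponding target tracts --- so that the conditional probabilities really do sum to one --- while never admitting an indel that merges with or masks an existing tract (which Approximation~\ref{approx:no_merging} excludes), and while remaining consistent with the ordering constraints in $\AncState(\C)$, which constrain only the observed indels and hence cannot remove any masked wildcard transition.
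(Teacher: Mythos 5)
Your proposal is correct and follows essentially the same route as the paper's proof: expand the rate out of $a$ as a sum over admissible indel tracts using Assumption~\ref{assump:factor_rate}, observe it depends on $a$ only through $\TargStat(a)=b$, then split into the singleton versus wildcard cases of the decomposition from Lemma~\ref{lemma:anc_state_summary}, with the wildcard case collapsing via $\sum_{d\in\tau}\Pr(d\mid\tau)=1$. The dichotomy you flag as the main obstacle is exactly the case analysis the paper carries out (it argues uniqueness forces the singleton case and multiplicity forces a full wildcard/target-tract case), so your plan is sound as written.
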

\begin{proof}
The instantaneous transition rates for an allele $a \in g(b; \C)$ to the set $g(b'; \C)$ is
\begin{align*}
\lim_{\Delta \rightarrow 0}
\frac{
\Pr\left(
X(t + \Delta) \in g(b'; \C)
| X(t) = a
\right)
}{\Delta}
& =
\sum_{a' \in g(b'; \C)}
q(a, a', t)\\
&=
\sum_{d: \Apply(d, a) \in g(b'; \N)}
h(\tau, b, t)
\Pr(d|\tau).
\end{align*}
If $d$ is an indel tract that can be introduced to the allele $a \in g(b; \C)$ and $\Apply(a, d)$ has target status $b'$, then we can introduce the same indel tract to any other allele $a' \in g(b; \C)$ and $\Apply(a', d)$ will also have the target status $b'$.
Therefore we have proven that \eqref{eq:group_the_states} must hold for all $a, a' \in g(b; \C)$.

To calculate the hazard rate between these lumped states, we rewrite the summation by grouping indel tracts with the same target tract:
\begin{align}
q_{\lump} \left (g(b; \C), g(b'; \C), t \right )
& =
\sum_{\substack{\tau: \exists d \in \tau = \TT(d)\\ \text{ s.t. } \Apply(d, a) \in g(b'; \N)}}
\left\{
\sum_{d \in \tau: \Apply(d, a) \in g(b'; \N)}
h(\tau, b, t)
\Pr(d|\tau)
\right\}.
\label{eq:group_by_TT}
\end{align}
One of the following two cases must be true:
\begin{enumerate}
	\item From the decomposition \eqref{eq:leaf_anc_state} of $\AncState(\C)$, there is only one indel tract $d$ in the sets $D_{\C,m}$ for $m = 1,..., M_{\C}$ such that $\Apply(a, d) \in g(b', \C)$ for all $a\in g(b, \C)$.
	$d$ cannot be from a wildcard or the inner wildcard of a singleton-wildcard since this would contradict the fact that  $d$ is the only indel tract in $\{D_{\C,m}\}$ such that $\Apply(a, d) \in g(b', \C)$ for all $a\in g(b, \C)$.
	Therefore $d$ must be the singleton for some singleton-wildcard $D_{\C,m}$.
	In other words, the only possible transition from $g(b; \C)$ to $g(b'; \C)$ is via the indel tract $d$.
	\item Otherwise, for some target tract $\tau$, there are at least two indel tracts in $d, d' \in \tau$ in the sets $D_{\C,m}$ for $m = 1,..., M_{\C}$ that deactivate targets $b' \setminus b$  such that $\Apply(a, d) \in g(b', \C)$ and $\Apply(a, d') \in g(b', \C)$ for all $a\in g(b, \C)$.
	In this case, $d$ and $d'$ must be from a wildcard or the inner wildcard of a singleton-wildcard ($d$ and $d'$ cannot both be from a singleton of a singleton-wildcard since $d \ne d'$).
	Therefore every indel tract $d$ in $\tau$ satisfies $\Apply(a, d) \in g(b', \C)$ for all $a \in g(b, \C)$.
\end{enumerate}
Therefore \eqref{eq:group_by_TT} simplifies to
\begin{align*}
q_{\lump} \left (g(b; \C), g(b'; \C), t \right )
= \begin{cases}
h(\tau, b, t) \Pr(d|\tau) & \text{if case (1)}\\
\sum_{\tau: \exists d \in \tau = \TT(d) \text{ s.t. } \Apply(d, a) \in g(b'; \N)} h(\tau, b, t) & \text{if case (2)}.
\end{cases}
\end{align*}

Note that to construct the entire instantaneous transition rate matrix of the aggregated process, we can easily calculate the total transition rate away from a target status and then calculate the transition rate to sink state $g(\other; \C)$ using the fact that each row sums to zero.
The transition rate away from $g(\other; \C)$ is zero.
\end{proof}

We are finally ready to combine lumpability with Felsenstein's pruning algorithm.
The following theorem provides a recursive algorithm for calculating \eqref{eq:approx}, using results from above.
\begin{theorem}
	\label{thrm:tree_lik_tt}
	Suppose the above model assumptions hold.
	Consider any tree node $\N$, target status $b$, and nonempty allele group $g(b; \N)$.
	Denote
	\begin{align}
	p_{\N}(b) = \Pr\left(
	X_{\Leaves(\N)}(T) = a_{\Leaves(\N)},
	\left \{
	\{X_{\N'}(t): 0 \le t \le t_{\N'}\} \subseteq \AncState(\N')
	: \N' \in \Desc(\N)
	\right \}
	\middle |
	X_{\N}(t_{\N}) \in g(b; \N)
	\right ).
	\label{eq:approx_down_prob}
	\end{align}
	If $N$ is an internal node, then
	\begin{align}
	p_{\N}(b)
	& = \prod_{\C \in \children(\N)}
	\left \{
	\sum_{
		\substack{
			b' \in \{0,1\}^M\\
			g(b'; \C) \ne \emptyset
		}
	}
	p_{\C}(b')
	\Pr\left (X_{\C}(t_{\C}) \in g(b'; \C) | X_{\C}(0) \in g(b; \C) \right )
	\right \}.
	\label{eq:cond_child}
	\end{align}
where $\Pr\left (X_{\C}(t_{\C}) \in g(b'; \C) | X_{\C}(0) \in g(b; \C) \right )$ is calculated using the instantaneous transition rates given in Lemma~\ref{lemma:state_groups}.
\end{theorem}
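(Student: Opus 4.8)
The plan is to derive the recursion \eqref{eq:cond_child} by combining the generic Felsenstein factorization \eqref{eq:felsen_recurs} with the lumpability established in Lemma~\ref{lemma:state_groups}, exploiting the fact that the node-specific partitions of Definition~\ref{def:meta_states} are nested along each branch. First I would unfold the definition \eqref{eq:approx_down_prob}: conditioning on $X_{\N}(t_{\N}) \in g(b;\N)$ and using the Markov property at node $\N$, the events recorded in $p_{\N}(b)$ (the leaf data together with the requirement that $X_{\N'}(\cdot) \subseteq \AncState(\N')$ for every $\N' \in \Desc(\N)$) become conditionally independent across the children $\C \in \children(\N)$, yielding a product over children. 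Within each factor I would condition on the target status $b'$ of $X_{\C}(t_{\C})$ and split the branch ending at $\C$ from the subtree strictly below $\C$; the subtree-below-$\C$ piece is exactly $p_{\C}(b')$, while the branch piece is a transition probability of the form $\Pr(X_{\C}(t_{\C}) \in g(b';\C) \mid X_{\C}(0) \in g(b;\N))$. This reproduces \eqref{eq:felsen_recurs} with the AncState constraints attached.

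The heart of the argument is to replace the conditioning set $g(b;\N)$ by $g(b;\C)$, i.e.\ to show that the weight function $w(b,\N,b',\C)$ appearing in \eqref{eq:lump_recurs} is identically one here. Because $\Leaves(\C) \subseteq \Leaves(\N)$, the defining intersection \eqref{eq:anc_state} for $\AncState(\N)$ ranges over a larger leaf set than that for $\AncState(\C)$, so $\AncState(\N) \subseteq \AncState(\C)$ and hence $g(b;\N) \subseteq g(b;\C)$ for every $b \in \{0,1\}^M$. Lemma~\ref{lemma:state_groups} shows that the instantaneous rate from any allele of $g(b;\C)$ into $g(b';\C)$ equals the single constant $q_{\lump}(g(b;\C), g(b';\C), t)$; by the standard lumpability theorem this constancy passes to the finite-time transition probabilities obtained as the matrix exponential of $q_{\lump}$, so $\Pr(X_{\C}(t_{\C}) \in g(b';\C) \mid X_{\C}(0) = a)$ is the same for every $a \in g(b;\C)$. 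Since conditioning on $X_{\C}(0) \in g(b;\N)$ is a weighted average over the sub-collection $g(b;\N) \subseteq g(b;\C)$ of a quantity that does not depend on the starting allele, the transition probability is unchanged, which is precisely the replacement I want and shows that no reweighting is needed.

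Two bookkeeping points finish the proof. To restrict the inner sum to $b' \in \{0,1\}^M$ with $g(b';\C) \neq \emptyset$, I would note that $p_{\C}(\other) = 0$: landing in $g(\other;\C) = \Omega - \AncState(\C)$ at time $t_{\C}$ violates the containment $X_{\C}(\cdot) \subseteq \AncState(\C)$ recorded inside $p_{\C}$, and empty groups contribute nothing. To see that the lumped transition probability correctly enforces the \emph{interior} path constraint $\{X_{\C}(t):0\le t\le t_{\C}\}\subseteq \AncState(\C)$ rather than only the endpoint, I would invoke irreversibility: an allele outside $\AncState(\C)$ can never transition (without merging) back into $\AncState(\C)$, so $g(\other;\C)$ is absorbing in the lumped chain (its outgoing rate to the other lumps is zero, as noted at the end of the proof of Lemma~\ref{lemma:state_groups}); consequently ending in $g(b';\C)$ with $b' \neq \other$ is equivalent to never having left $\AncState(\C)$. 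I expect this last step---reconciling the whole-branch containment event with an endpoint-only transition probability through the absorbing ``other'' state---to be the main obstacle, since it is the point where irreversibility and the specific shape of the AncState partition, rather than generic lumpability alone, are genuinely needed.
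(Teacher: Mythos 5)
Your proposal is correct and takes essentially the same route as the paper's proof: write down the Felsenstein factorization conditioned on $g(b;\N)$, establish the inclusion $g(b;\N) \subseteq g(b;\C)$, and invoke the rate constancy \eqref{eq:group_the_states} of Lemma~\ref{lemma:state_groups} to replace the conditioning set $g(b;\N)$ by $g(b;\C)$. Your extra bookkeeping---deriving the inclusion from $\Leaves(\C) \subseteq \Leaves(\N)$ and handling the path constraint via the absorbing state $g(\other;\C)$---merely fills in details the paper leaves implicit (it attributes the inclusion to irreversibility and dismisses the $\other$ partition in a single parenthetical).
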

\begin{proof}
For any internal node, we know that
\begin{align*}
p_{\N}(b)
& = \prod_{\C \in \children(\N)}
\left \{
\sum_{
	\substack{
		b' \in \{0,1\}^M\\
		g(b'; \C) \ne \emptyset
	}
}
p_{\C}(b')
\Pr\left (X_{\C}(t_{\C}) \in g(b'; \C) | X_{\C}(0) \in g(b; \N) \right )
\right \}.
\end{align*}
(We do not need to sum over the partition $g(\other; \N)$ since it contributes zero probability.)
By irreversibility of the mutation process, $g(b ; \N) \subseteq g(b ; \C)$ if $\C$ is a child of $\N$.
By \eqref{eq:group_the_states} in Lemma~\ref{lemma:state_groups}, $$\Pr\left (\C(t_{\C}) \in g(b'; \C) | \C(0) \in g(b ; \N) \right ) = \Pr\left (\C(t_{\C}) \in g(b'; \C) | \C(0) \in g(b; \C) \right ),$$
which means \eqref{eq:cond_child} also holds for node $\N$.
\end{proof}
\noindent Note that \eqref{eq:cond_child} requires enumerating the possible target statuses.
We can do this quickly using Lemma~\ref{lemma:anc_state_summary}.

Since each node may have a different partition of the state space, we compute a separate instantaneous transition rate matrix for each branch.
If we have multiple barcodes, we need to compute a separate matrix for each branch and for each barcode.
Calculating the likelihood and its gradient can therefore become memory-intensive when there are many branches and/or barcodes.
One way to reduce the amount of memory is to sum over subsets of $\AncState(\cdot)$ instead.
This is often reasonable since it is unlikely for the barcode to have many hidden events.
For the analyses of the zebrafish data, we only sum over states that can be reached when at most one masked indel tract occurs along each branch.
If there are more than 20 such states at a node, we only sum over the possible states when no masked indel tracts occur along that branch.

\subsection{Caterpillar trees}

As discussed in the main manuscript, we resolve the multifurcations in the tree as caterpillar trees to estimate the ordering of events.
Recall that a caterpillar tree is a tree where all the leaves branch off of a central path, which we call the ``caterpillar spine.``
Thus for each multifurcating node $\N$ with children nodes $\C_1,...,\C_K$, we resolve the multifurcation as a tree where the children nodes branch off of the caterpillar spine (Figure~\ref{fig:caterpillar}).
We do not consider all possible resolutions of the multifurcations since there are a super-exponential number of them and we likely do not have enough information to choose between all the possible trees (recall that they are parsimony-equivalent).

We need an efficient method to select the best ordering in each caterpillar tree since the number of possible orderings for $K$ children nodes is $K!$, which is also super-exponential.
Since it is computationally intractable to calculate the likelihood for each ordering, we take an alternate approach where we introduce another approximation of the likelihood.
This approximate likelihood can be computed using the same mathematical expression regardless of the ordering of the children nodes, which means we can tune over all possible orderings in the caterpillar trees by solving a single continuous optimization problem.
\begin{approximation}
We approximate the likelihood by considering only the mutation histories that have a constant allele along the caterpillar spines:
\begin{align}
\Pr \left (
X_{\Leaves(\mathbb{T})}(T) = a_{\Leaves(\mathbb{T})}
\right )
\approx
\Pr \left (
X_{\Leaves(\mathbb{T})}(T) = a_{\Leaves(\mathbb{T})},
\text{alleles are constant on all spines} \right ).
\label{eq:caterpillar_approx}
\end{align}
\label{approx:caterpillar}
\end{approximation}

To construct a mathematical expression for \eqref{eq:caterpillar_approx} that is independent of the ordering along caterpillar trees, we first re-parameterize the branch lengths for children of multifurcating nodes.
For each child $\C$ of a multifurcating node, let $\ell_{\C}$ indicate the distance between the child node and the multifurcating node and $\beta_{\C} \in (0,1)$ indicate the proportion of distance $\ell_{\C}$ that is located on the caterpillar spine (Figure~\ref{fig:caterpillar_math}).
We can capture all possible orderings in a caterpillar tree by varying the values of these two sets of parameters across the children of a multifurcating node.
\begin{figure}
\begin{center}
\includegraphics[width=0.55\textwidth]{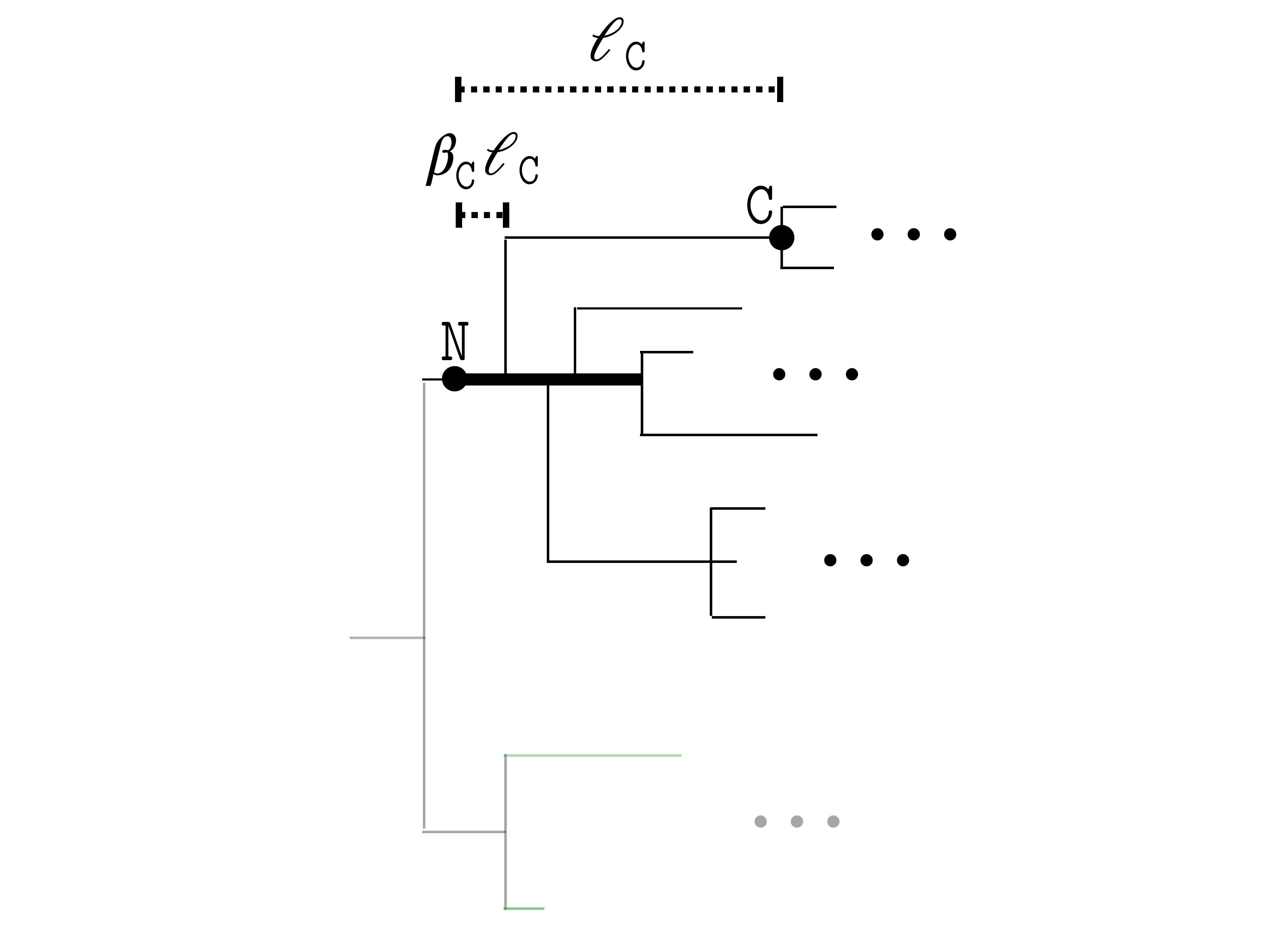}
\end{center}
\caption{
	Parameterization of branch lengths in a caterpillar tree within the context of the entire tree $\mathbb{T}$.
	The rest of the tree is greyed out to draw focus to the caterpillar tree.
	The bolded path is the caterpillar spine.
	Each child node $\C$ of the caterpillar tree is associated with parameters $\ell_{\C}$ and $\beta_{\C}$.
	$\ell_{\C}$ is the length of the path from the start of the caterpillar spine to $\C$.
	$\beta_{\C}$ is the proportion of this path is along the caterpillar spine.
	 The length of the caterpillar spine, $t_{\text{spine}}$, is the maximum value of $\beta_{\C}\ell_{\C}$ over all children nodes $\C$.
}
\label{fig:caterpillar_math}
\end{figure}

Next we extend the likelihood calculation in Theorem~\ref{thrm:tree_lik_tt} where the multifurcations are resolved as caterpillar trees and we want to calculate the approximate likelihood \eqref{eq:caterpillar_approx}.
We do this by recursing on the analogous quantity
\begin{align}
\tilde{p}_{\N}(a) = \Pr \left (X_{\N}(T), \text{alleles are constant on all spines} | \N(0) = a \right ).
\label{eq:caterpillar_prob}
\end{align}
To calculate \eqref{eq:caterpillar_prob}, we use the recursive relation that $\tilde{p}_{\N}(a)$ is equal to
\begin{align}
\begin{cases}
\Pr \left (X_{\N}(t_{\text{spine}}) = a | X_{\N}(0) = a \right)
\prod_{\C \in \children(\N)}
\left \{
\sum_{a' \in \Omega}
\Pr \left (X_{\C}(\ell_{\C} (1 - \beta_{\C}) ) = a' | X_{\C}(0) = a \right)
\tilde{p}_{\C}(a')
\right \}
 &
\text{if } |\children(\N)| > 2\\
\prod_{\C \in \children(\N)}
\left \{
\sum_{
a' \in \Omega
}
\Pr\left (X_{\C}(t_{\C}) = a' | X_{\C}(0) = a \right )
\tilde{p}_{\C}(a')
\right \}
& \ow
\end{cases}
\label{eq:caterpillar}
\end{align}
where $t_{\text{spine}}  = \max\{\ell_{\C} \beta_{\C} : \C \in \children(\N)\}$.
Using the same machinery in Lemma~\ref{lemma:state_groups} and Theorem~\ref{thrm:tree_lik_tt}, we can then calculate \eqref{eq:caterpillar} efficiently by grouping ancestral allele states into lumped states.

\subsection{Penalization}
\label{sec:penalization}
Our algorithm fits the tree and mutation parameters by maximizing the penalized log likelihood.
Penalization improves accuracy when the number of observations is small relative to the number of parameters; GESTALT exhibits this problem because the number of parameters is large and the number of independent barcodes is small (\citet{McKennaaaf7907} only has one barcode).

We propose a tree-based penalty that discourages large differences in the branch lengths $\boldsymbol{\ell}$ and the target rates $\boldsymbol{\lambda}$.
For multifurcating nodes, the branch lengths include the length of the spine as well as the lengths of branches off of the spine, i.e. $\ell_{\C} (1 - \ell'_{\C})$.
Let $L$ be the number of branch lengths in $\boldsymbol{\ell}$.
The penalty is then
\begin{align*}
\Pen_{\boldsymbol{\kappa}}(\theta) =
\kappa_1 \left \|\log\boldsymbol{\lambda} - \frac{1}{M} \sum_{i=1}^M \log(\lambda_i) \right \|_2^2
+ \kappa_2 \left \|\log{\boldsymbol{\ell}} - \frac{1}{L} \sum_{i=1}^L \log(\ell_i) \right \|_2^2
\end{align*}
where  $\kappa_1, \kappa_2 > 0$ are penalty parameters.

Our penalty on the branch length was also considered in the penalized likelihood framework in \citet{Kim2008-rk}.
However the focus of \citet{Kim2008-rk} was to encourage development of methods that were more flexible and had fewer assumptions, rather than to improve estimation in high-dimensional settings.
In particular, their work focused on the standard phylogenetic setting in which there are multiple independent sites.
However, in the GESTALT setting, the current available datasets were generated using a single barcode and we cannot tune the penalty parameters using the common approach of cross-validation \citep{Hastie2009-cb, Arlot2010-rv}.
In addition, \citet{Kim2008-rk} only discussed penalized likelihood in the context of a fixed topology.
In our setting the true tree is unknown and we must consider various tree topologies.

We found that a major hurdle in applying this penalized likelihood framework is that some topologies will naturally have larger penalties.
Therefore we cannot simply choose the tree with the highest penalized log likelihood.
Our solution is to perform a hill-climbing iterative search and score topology updates based on the penalized log likelihood where the penalty is restricted to the shared subtree.
We found that our method tends to choose topology updates that improve the tree estimate (see Figure~\ref{fig:chad_tuning}).

Alternatively, we could have applied regularization methods tailored for the setting where the topology is unknown.
These methods typically regularize the tree towards a prespecified tree \citep{Wu2013-gk, Dinh2018-ww}.
However we would like to place minimal assumptions on the developmental process and we have little to no knowledge about the true tree.

\subsubsection{Tuning penalty parameters}
By varying the value of the penalty parameters $\kappa_1$ and $\kappa_2$, we can control the trade-off between minimizing the penalty versus maximizing the log likelihood.
Choosing appropriate values is crucial for estimation accuracy.
A common approach for tuning penalty parameters is to use cross-validation \citep{Arlot2010-rv}; we use this procedure whenever possible.
Note that we keep the tree topology fixed when tuning the penalty parameter.

We can perform cross-validation when there are multiple barcodes.
First we partition the barcodes into training and valdation sets $T$ and $V$, respectively.
Next we fit tree and mutation parameters $\hat{\ell}_{\boldsymbol{\kappa}}$ and $\hat{\theta}_{\boldsymbol{\kappa}}$, respectively, for each $\boldsymbol{\kappa}$ using only the training data.
We choose the $\boldsymbol{\kappa}$ with the highest validation log likelihood
\begin{align*}
\frac{1}{|V|}\sum_{i \in V} \log \Pr
\left(
X^{(i)}_{\Leaves(\mathbb{T})}(T) = a_{\Leaves(\mathbb{T})}
;
\hat{\ell}_{\boldsymbol{\kappa}}, \hat{\theta}_{\boldsymbol{\kappa}} \right ).
\end{align*}
For our simulation studies with two and four barcodes, we used half of the barcodes for the validation set and half for training.

Unfortunately cross-validation cannot be utilized when there is a single barcode since we cannot split the dataset by barcodes.
Instead we propose a variant of cross-validation described in Algorithm~\ref{algo:cv_for_one}.
The main differences are that we partition the leaves instead of the barcodes into training and validation sets $S$ and $S^c$, respectively; and we select the best penalty parameter that maximizes the conditional probability of the observed alleles at $S^c$ given the observed alleles at $S$.

To partition the leaves, we randomly select a subset of leaf children of each multifurcating node to put in the validation set $S^c$.
We partition leaves in this manner, rather than simply dividing the leaves in half, because we must be able to evaluate (or closely approximate) \eqref{eq:cond_prob_cv} at the end of Algorithm~\ref{algo:cv_for_one} using the fitted branch length and mutation parameters.
That is, we must be able to regraft the leaves in the set $S^c$ onto the fitted tree.
Regrafting is easy for the leaves in our specially-constructed set: The parent node of each leaf in $S^c$ must be located somewhere along the caterpillar spine corresponding to its original multifurcating parent.
In our implementation, we chose to regraft the leaves to the midpoints of their corresponding caterpillar spines.
The regrafting procedure is illustrated in Figure~\ref{fig:cv_one_barcode}.
Note that we do not tune the branch lengths of these validation leaves since it amounts to peeking at the validation data.
In our simulations, we found that when tuning the branch lengths to maximize the unpenalized (or penalized) log likelihood, we nearly almost always choose the smallest penalty parameter since it prioritizes maximizing the likelihood and, therefore, \eqref{eq:cond_prob_cv}.

To assess each candidate penalty parameter $\boldsymbol{\kappa}$, we compare the conditional probability of the observed alleles at $S^c$ given the observed alleles at $S$.
Our motivation is similar to that in cross-validation: If the alleles are observed from the tree with branch and mutation parameters $\ell^*$ and $\theta^*$, we know that
\begin{align}
E\left [ \log \Pr(X_{S^c}(T)|X_{S}(T); \ell^*, \theta^* ); \ell^*, \theta^* \right ]
\ge E \left [\log \Pr(X_{S^c}(T)|X_{S}(T); \ell, \theta ) ; \ell^*, \theta^* \right ]
\qquad  \forall \ell, \theta
\label{eq:exp_cond_prob}
\end{align}
by Jensen's inequality.
(Note that this conditional probability is high only for if we have good estimates of both the mutation parameters and branch lengths of leaves $S^c$. It is not sufficient to only have an accurate estimate of the mutation parameters.)
Recall cross-validation is motivated by a similar inequality but uses $\Pr(X; \ell, \theta )$ rather than a conditional probability.

From a theoretical standpoint, using \eqref{eq:exp_cond_prob} to select penalty parameters makes the most sense if we have an unbiased estimate of the expected conditional probability.
Unfortunately, in our setting, the conditional probability in \eqref{eq:cond_prob_cv} is actually a biased estimate since the fitted parameters depended on the observed alleles at leaves $S$.
Nonetheless, in simulations (where the truth is known), this biased estimate seemed to work well, as the selected penalty parameter was typically close to the best penalty parameter.

\begin{algorithm}
\caption{Cross validation for a single barcode}
\label{algo:cv_for_one}
	\begin{algorithmic}
		\STATE Initialize $S$ to be all the leaves in tree $\mathbb{T}$. Throughout, let $S^c$ denote all the leaves in $\mathbb{T}$ not in $S$.
		\FOR{each multifurcating node $\N$ where at least one children is a leaf}
		\STATE{Let $\C_1,...,\C_m$ be the children nodes of $\N$ that are leaves. Randomly select $m' \ge 1$ of them and remove these from $S$.}
		\ENDFOR
		\STATE{Let $\mathbb{T}_{S}$ be the subtree over the leaves $S$.}
		\FOR{each candidate penalty parameter $\boldsymbol{\kappa}$}
		\STATE{Maximize the penalized log likelihood of the tree $\mathbb{T}_{S}$ with respect to its branch lengths $\boldsymbol{\ell}$ and mutation parameters $\boldsymbol{\theta}$
			$$
			\hat{\boldsymbol{\ell}}_{\boldsymbol{\kappa}}, \hat{\boldsymbol{\theta}}_{\boldsymbol{\kappa}}
			= \argmax_{\boldsymbol{\ell}, \boldsymbol{\theta}}
			\log \Pr \left (X_{S}(T) = a_{S}; \boldsymbol{\ell}, \boldsymbol{\theta} \right)
			+ \Pen_{\boldsymbol{\kappa}}\left( \boldsymbol{\ell}, \boldsymbol{\theta} \right )
			.
			$$
		}
		\ENDFOR
		\STATE{Return the penalty parameter that maximizes the conditional probability:
			\begin{align}
			\hat{\boldsymbol{\kappa}}
			=
			\argmax_{\boldsymbol{\kappa}}
			\Pr \left (X_{S^c}(T) = a_{S^c}
			\mid
			X_{S}(T) = a_{S}; \hat{\boldsymbol{\ell}}_{\boldsymbol{\kappa}}, \hat{\boldsymbol{\theta}}_{\boldsymbol{\kappa}} \right ).
			\label{eq:cond_prob_cv}
			\end{align}
		}
	\end{algorithmic}
\end{algorithm}

\begin{figure}
\begin{center}
\includegraphics[width=0.6\linewidth]{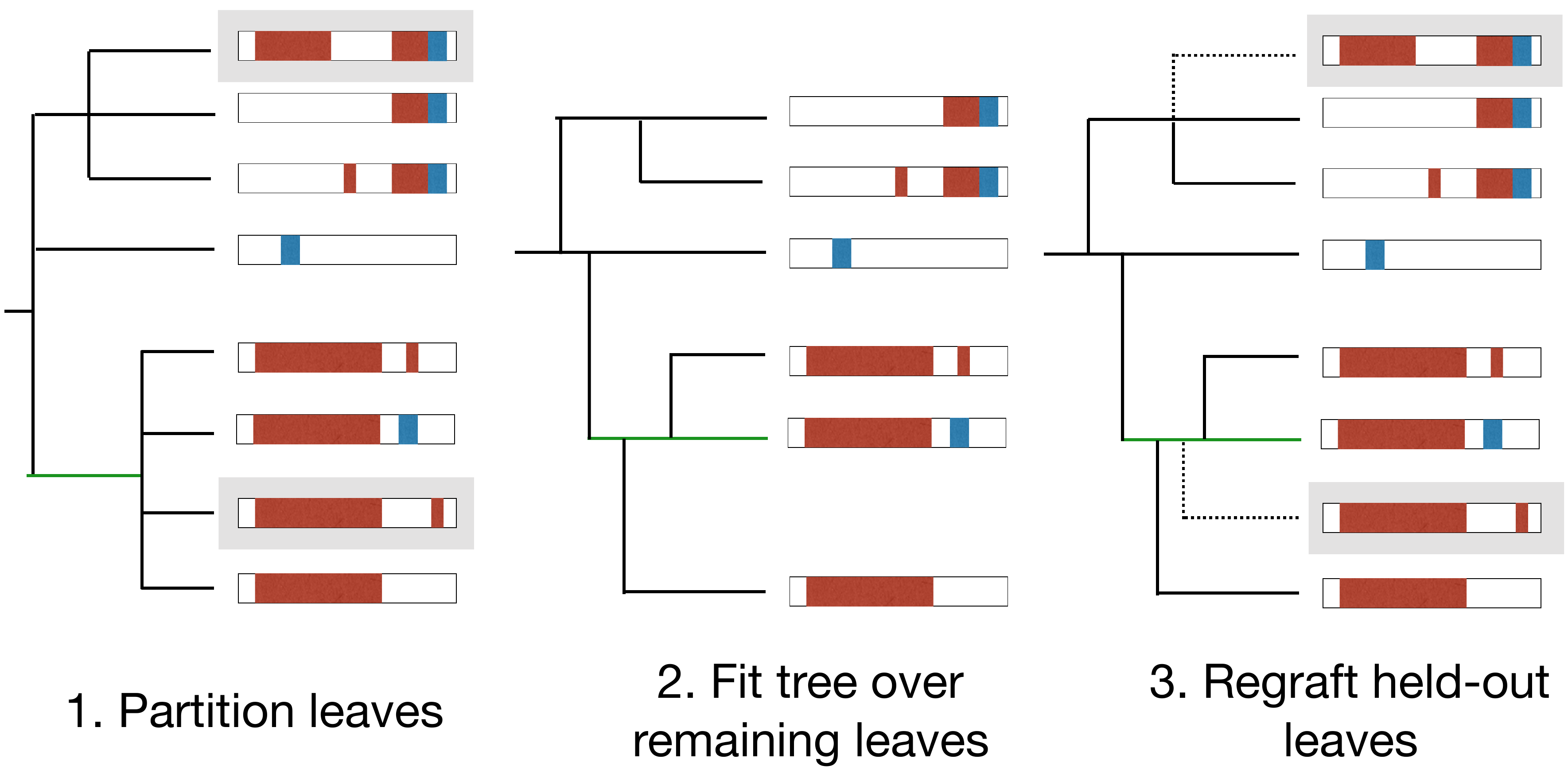}
\end{center}
\caption{
Cross-validation to tune penalty parameters with only one barcode.
We split leaves into training and validation sets $S$ and $S^c$, respectively as follows (left): For each multifurcating node, randomly select a subset of its children that are leaves to put in the ``validation`` set, denoted by the gray boxes.
Fit branch lengths and mutation parameters on the subtree over the remaining leaves (middle).
Regraft the leaves in the ``validation`` set back onto the fitted tree (right).
}
\label{fig:cv_one_barcode}
\end{figure}

Finally, we must simultaneously tune the penalty parameter and the topology of the tree (from Algorithm~\ref{algo:whole_thing}).
Our full algorithm alternates between tuning the penalty parameters for a fixed tree topology and running a single iteration of Algorithm~\ref{algo:whole_thing} for a fixed penalty parameter.
After the penalty parameters are stable, we keep them fixed and only run Algorithm~\ref{algo:whole_thing}.

\subsection{Specific model implementation}

Here we discuss the specific implementation we use to model the GESTALT data.
We suppose the mutation process is homogeneous and does not depend on $t$.
Therefore we will drop the time index $t$ in the model.
Recall that there are two major components of the mutation model: the rates at which target tracts are introduced and the conditional probability of an indel tract given the selected target tract.

To parameterize the rate at which a target tract $\tau = \TT[j_0', j_0, j_1, j_1']$ is introduced, we further decompose the rate into a rate $h_0$ that represents the rate at which the targets $j_0$ and $j_1$ are cut and various scaling factors that control how often deletions are short or long (recall the definition in \eqref{eq:short_long1} and \eqref{eq:short_long2}):
\begin{align*}
h \left (\tau, \TargStat(a) \right )
= h_0 \left (j_0, j_1, \TargStat(a) \right )
\prod_{i = 0}^1
\left[
\gamma_{i} \mathbbm{1} \{j_i \ne j_i' \} + \mathbbm{1} \{j_i = j_i' \}
\right],
\end{align*}
where $\gamma_0$ and $\gamma_1$ parameterize how often long deletions occur to the left and right, respectively.

We specify $h_0$ using the assumption that the cutting time for target $j$ follows an exponential distribution with rate of $\lambda_j > 0$.
For focal target cuts where $j_0 = j_1$, we define
$$
h_0 \left (j_0, j_0, \TargStat(a) \right ) = \lambda_{j_0} \mathbbm{1}\{ \TargStat(j_0, a) = 0 \}.
$$
For double cuts at targets $j_0$ and $j_1$, we suppose the cut time follows an exponential distribution with rate $\omega \cdot (\lambda_{j_0} + \lambda_{j_1})$, where $\omega$ is an additional model parameter that we estimate and does not depend on the targets.

Our parameterization of the double-cut rate is based on the assumption that an inter-target deletion is introduced when the cuts at both targets occur within a small time window of length $\epsilon$.
For random cut times $X_{j_0}$ and $X_{j_1}$ for targets $j_0$ and $j_1$, we approximate that the cut times occur within a small window $\epsilon$ with the distribution
\begin{align}
p \left (|X_{j_0} - X_{j_1} | \le \epsilon, \frac{X_{j_0} + X_{j_1}}{2} = t \right )
& \approx \Pr(|X_{j_0} - X_{j_1}| \le \epsilon) \, p(X_{j_0} = X_{j_1} = t | X_{j_0} = X_{j_1}).
\label{eq:double_cut_approx}
\end{align}
The values on the left and right hand sides approach each other as $\epsilon \rightarrow 0$.
The first component on the right-hand-side of \eqref{eq:double_cut_approx} approaches zero as $\epsilon \rightarrow 0$ and does not vary much for different values of $\lambda_{j_0}, \lambda_{j_1}$ if $\epsilon$ is sufficiently small.
Hence we use the same value of $\omega$ for all targets.
The second component on the right-hand-side of \eqref{eq:double_cut_approx} corresponds to an exponential distribution with the rate $\lambda_{j_0} + \lambda_{j_1}$.

We can interpret $\omega$ in two ways.
First, $\omega$ controls how often a double cut is introduced.
In an unmodified barcode, the relative rate that a double cut is introduced versus a single cut is $\omega \sum_{j_1 < j_2} (\lambda_{j_1} + \lambda_{j_2})$ versus $\sum_{j=1}^M \lambda_{j}$.
The second interpretation, based on \eqref{eq:double_cut_approx}, is that $\omega$ serves as a proxy for $\epsilon$: Larger $\omega$ indicates that an inter-target deletion can be introduced by two cuts spaced farther apart in time.

The second major component of the GESTALT mutation model specifies the conditional probability of introducing a particular indel tract given target tract $\tau = \TT[j_0', j_0, j_1, j_1']$.
An indel tract can be represented by its deletion lengths to the left and right and the insertion sequence.
We will suppose that the probability of a single insertion sequence is uniform over all possible nucleotide sequences of that length.
Let $X_0, X_1, X_2$ be the random variables parameterizing the lengths of the left deletion, right deletion, and insertion, respectively.
Let $x_{\tau, \min, i}$ and $x_{\tau, \max, i}$ for $i = 0$ and $1$ specify the minimum and maximum deletion lengths to the left and right, respectively, for target tract $\tau$.
(For example, if $j_0 = j_0'$, the left deletions must be short so $x_{\tau,\min, i} = 0$ and $x_{\tau,\max, i}$ is the longest deletion without deactivating target $j_0 - 1$.
As another example, if $j_0 = j_0' + 1$, the left deletion is long so $x_{\tau, \min, i}$ is the minimum deletion length to deactivate target $j_0'$ and $x_{\tau, \max, i}$ is the longest length without deactivating target $j_0' - 1$.)
For insertions, $x_{\tau, \min, 2} = 0$ and $x_{\tau, \max, i} = \infty$ regardless of the target tract.

We parameterize the conditional probability of indel $d$ with left deletion, right deletion, and insertion lengths $x_0, x_1, x_2$ given target tract $\tau$ as
\begin{align*}
\Pr\left( X_0 = x_0, X_1 = x_1, X_2 = x_2 | \tau \right)
= p\left(x_0, x_1, x_2 | \tau; x_{\tau, \min, 0}, x_{\tau,\min, 1}, x_{\tau,\min, 2} \right)
\end{align*}
where
\begin{align*}
& p\left( x_0, x_1, x_2 | \tau; x_{\min, 0}, x_{\min, 1}, x_{\min, 2} \right)\\
& =
\begin{cases}
p_{0, \boost}
p\left( x_0,  x_1, x_2 | \tau; x_{\min, 0} + 1, x_{\min, 1}, x_{\min, 2} \right) &\\
+
p_{1, \boost}
p\left( x_0, x_1, x_2 | \tau; x_{\min, 0}, x_{\min, 1} + 1, x_{\min, 2} \right) & \text{ if } x_{\min, 0} = x_{\min, 1} = x_{\min, 2} = 0\\
+
p_{2, \boost}
p\left(x_0, x_1, x_2 | \tau; x_{\min, 0}, x_{\min, 1}, x_{\min, 2} + 1 \right)&\\
& \\
\Pr\left( X_0 = x_0 | \tau; x_{\min, 0} \right)
\Pr\left( X_1 = x_1 | \tau; x_{\min, 1} \right)
\Pr\left( X_2 = x_2 | \tau; x_{\min, 2} \right)
& \ow \\
\end{cases}
\end{align*}
where $p_{0, \boost} + p_{1, \boost} + p_{2, \boost} = 1$.
The probabilities $p_{i, \boost}$ ensure that we can never introduce an indel tract that deletes and inserts nothing.
When the minimum insertion and deletion lengths are zero (in the case of focal target tracts), we use the probabilities $p_{i, \boost}$ to randomly pick whether to boost the minimum left deletion, right deletion, or insertion length by one.

We assume that the deletion lengths follow a zero-inflated, truncated negative binomial distribution; and the insertion lengths follow a zero-inflated negative binomial distribution.
Let $\NBinom(m,q)$ denote the negative binomial distribution, which is the distribution for the number of successes until $m$ failures are observed and $q$ is the probability of success.
The zero-inflation factor for deletion lengths for focal indel tracts is $p_{i, 0}$ and inter-target indel tracts is $p_{i, 1}$, where left and right are indicated by $i = 0$ and $i = 1$, respectively.
The zero-inflation factor for insertion lengths is $p_{2,0} = p_{2,1}$.
Then for $i = 0,1, 2$, we define
\begin{align*}
& \Pr\left( X_i = x_i | \tau; x_{\min, i} \right)\\
& =
\begin{cases}
p_{i, \mathbbm{1}\{j_0 = j_1\}} & \text{if } x_{i} = x_{\min, i} = 0\\
(1 - p_{i, \mathbbm{1}\{j_0 = j_1\}})
\left[
\Pr(X = x - x_{\min, i}; \NBinom(m_i, q_i))
+ \frac{\Pr(X > x_{\max, i}- x_{\min, i}; \NBinom(m_i, q_i))}{x_{\max, i} - x_{\min, i}}
\right]
& \text{if } x_{i} > x_{\min, i}.\\
\end{cases}
\end{align*}

\subsection{Implementation}
The code is implemented in Python using Tensorflow.
We maximize the penalized log likelihood using Adam \citep{Kingma2014-oy}.

\subsection{Comparison Methods}
We use PHYLIP version 3.697 \citep{Felsenstein_undated-wn}, the neighbor-joining algorithm in Bio.Phylo (Biopython version 1.72) \citep{Talevich2012-rt}, and the chronos function in R package \texttt{ape} version 5.2 \citep{ape_package}.

\subsection{Evaluation metrics}
\label{sec:appendix_metrics}
Given ultrametric trees 1 and 2 with the same set of leaves, the \underline{internal node height correlation} between the two trees is calculated as follows (Figure~\ref{fig:internal_node_height}):
\begin{enumerate}
	\item For each internal node in tree 1, find the matching node in tree 2 that is the most recent common ancestor of the same set of leaves.
	\item Calculate the Pearson correlation of the heights of matched nodes.
	\item Do the same swapping tree 1 and 2.
	\item Average the two correlation values.
\end{enumerate}
A correlation of 1 means that the trees are exactly the same; the smaller the correlation is, the less similar the trees are.
\begin{figure}[h!]
\begin{center}
\includegraphics[width=0.8\textwidth]{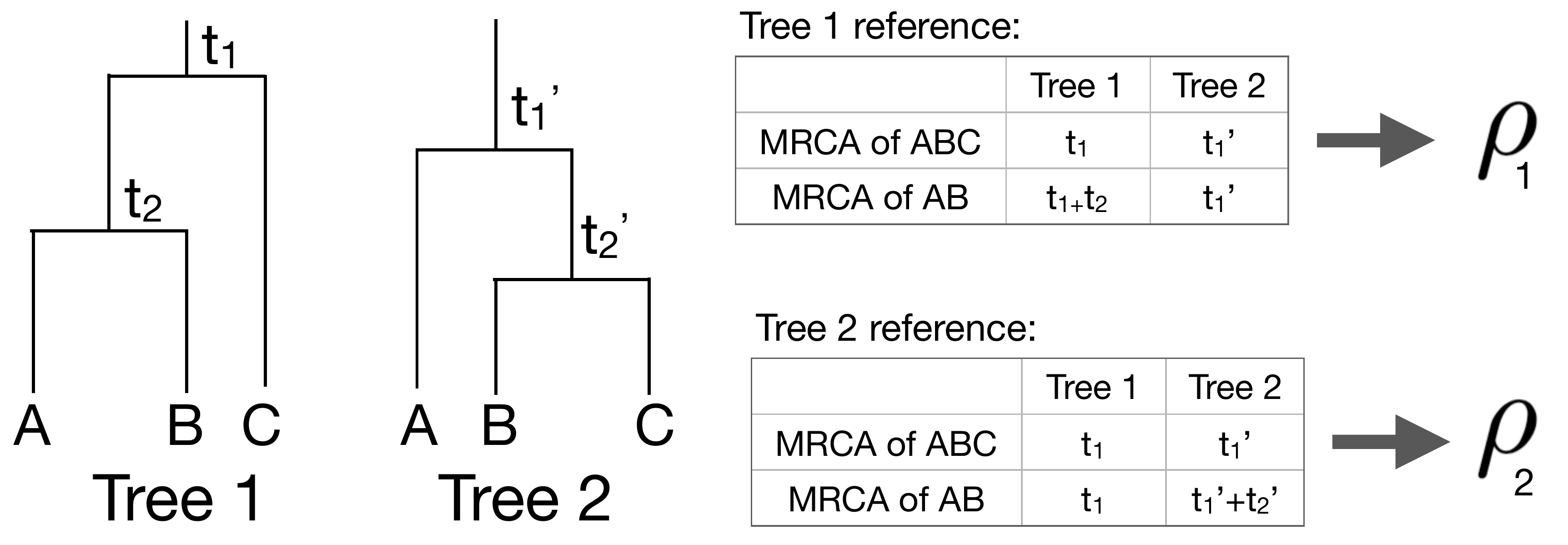}
\end{center}
\caption{
Example calculation of the internal node height correlation.
For each tree, define the groups of leaves based on its internal nodes and calculate the correlation of the time of the most recent common ancestors (MRCAs) of the leaf groups.
The internal node height correlation is the average of the two correlation values.
}
\label{fig:internal_node_height}
\end{figure}

\subsection{Simulation setup and additional results}
\label{sec:supp_simulation_res}

For the results in Figure~\ref{fig:consistency}, the data was simulated with 5 synchronous cell division cycles followed by a birth-death process where the birth rate decayed at a rate of $\exp(-18t)$.
The barcode was composed of six targets with $\boldsymbol{\lambda} = 0.9, 0.85, 0.8, 0.75, 0.7, 0.65, 0.6$.
The weight $\omega$ was set to 0.06 so that 20\% of the unique observed indel tracts were due to double cuts.
We sampled 8\% of the leaves so that the average number of unique observed alleles was around 100 leaves.
We refer to this simulation setup as Simulation A.
We ran 20 replicates of Simulation A.

The results in Figure~\ref{table:big_sim} are from a larger simulation, which we will refer to as Simulation B, that is closer to the data collected in \citet{McKennaaaf7907}.
Since zebrafish data undergo around 11 synchronous cell division cycles, this larger simulation entailed 9 synchronous cell division cycles followed by a birth-death process.
We simulated with a barcode composed of ten targets.
The resulting tree had on average around 200 leaves.
We ran GAPML for 8 topology tuning iterations; at each iteration, we consider at most 15 SPR moves.
The displayed results are from 20 replicates.

For this larger simulation, we also compared the runtimes of the methods on a server with an Intel Xeon 2x8 core processor at 3.20GHz and 256 GB RAM.
Obtaining tree topologies from C-S parsimony and neighbor-joining runs on the order of minutes.
Branch length estimation using \texttt{chronos} runs on the order of seconds.
In contrast, GAPML required up to three hours.
Though the runtime of our method is much longer, it is still reasonable compared to the amount of time spent on data collection, which includes waiting until the organism is a certain age.

Using our simulation engine, we compare two very simple barcode design ideas: a single barcode with many targets, recommended in \citet{Salvador-Martinez2018-dw}, or many identical barcodes.
However we believe the latter is more effective since spreading the targets over separate barcodes tends to create more unique alleles.
In particular, the inter-target deletions tend to be shorter, which means fewer existing mutations are deleted and fewer targets are deactivated.
To test this idea, we compared to the two design options in a simulation setup where we iteratively increased the number of targets by six, i.e. add six targets to the existing barcode or add a new barcode with six targets.
Here we observe all 1024 leaves of a full binary tree with 10 levels.
All targets had the same single-cut rate.
We calibrated the double-cut weight $\omega$ to be around 18\% for both barcode designs -- this slightly favors the single-barcode design since it would have a higher rate of double cuts \textit{in vivo} compared to a multiple-barcode design.
Nevertheless, we find in our simulations that splitting the targets over separate barcodes tends to result in a much larger number of unique alleles than using a single barcode (Figure~\ref{fig:many_vs_one}).
At 30 targets, the multiple-barcode design has roughly 200 more unique alleles on average than the single-barcode design.
Another reason we prefer the multiple-barcode design is that our model and tree estimates improve as the number of independent and identical barcodes increases, as illustrated in Figure~\ref{fig:consistency}.

\begin{figure}
\begin{center}
\includegraphics[width=0.5\linewidth]{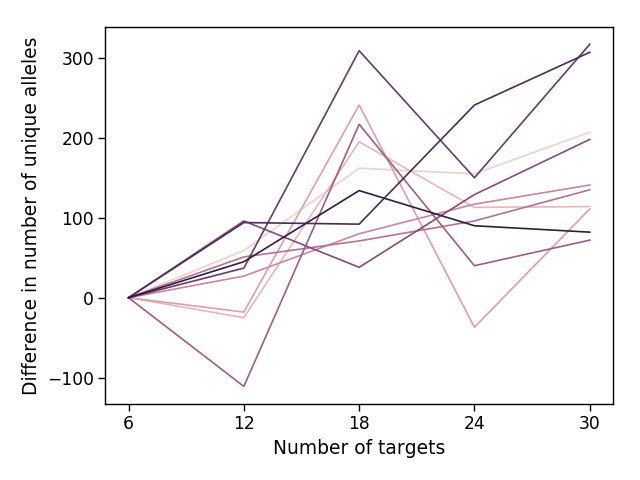}
\end{center}
\caption{
We compare the number of unique alleles obtained GESTALT using a single barcode with many targets versus splitting the targets over multiple independent barcodes.
The alleles are simulated on a full binary tree with 1024 leaves.
Each line corresponds to a simulation where we iteratively add six targets, either by extending the single barcode or adding another barcode with six targets.
A positive difference that the multiple-barcode design has more unique alleles, and vice versa.
}
\label{fig:many_vs_one}
\end{figure}

Next, to better understand our algorithm GAPML, we show in-depth simulation results from a single replicate (Figure~\ref{fig:examples_algo_results}).
Here we use the settings from Simulation B.
Starting from the initial tree topology, the algorithm tunes the branch lengths and mutation parameters to maximize the penalized likelihood.
During the gradient descent algorithm, the BHV distance of the tree estimate decreases  (Figure~\ref{fig:grad_descent}).
In addition, we see that the BHV distance of the tree estimate decreases as Algorithm~\ref{algo:whole_thing} iteratively performs SPR moves to update the tree topology.
\begin{figure}
\begin{subfigure}{0.49\textwidth}
\begin{center}
\includegraphics[width=0.8\textwidth]{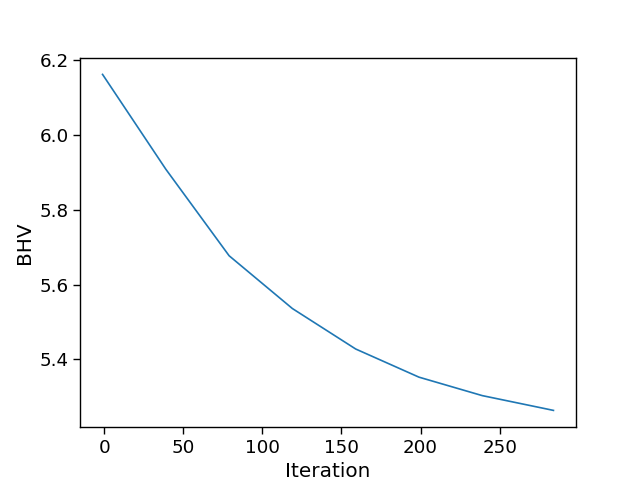}
\end{center}
\caption{
Example of how the BHV distance changes as the branch lengths and mutation parameters are updated using gradient descent to maximize the penalized likelihood.
}
\label{fig:grad_descent}
\end{subfigure}
\begin{subfigure}{0.49\textwidth}
\begin{center}
\includegraphics[width=0.8\textwidth]{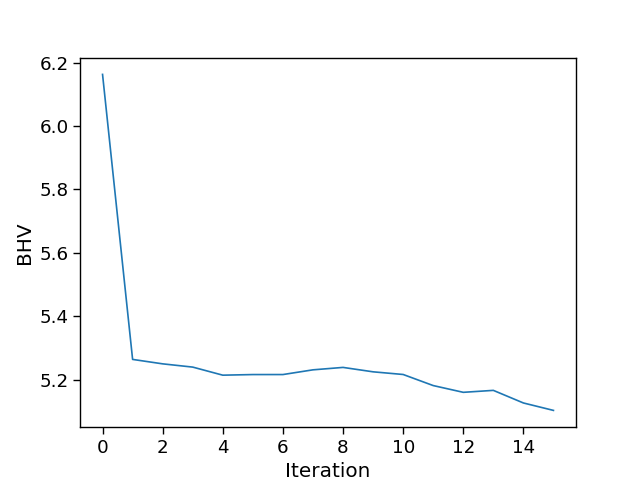}
\end{center}
\caption{
Example of how the BHV distance changes at each SPR iteration, where we select the SPR with the highest likelihood with penalizaton over only the shared tree.
}
\label{fig:chad_tuning}
\end{subfigure}
\caption{
Examples of how the BHV distance changes as the algorithm proceeds for one simulation replicate from the ten-target setting.
}
\label{fig:examples_algo_results}
\end{figure}

Our method searches over the maximally parsimonious trees since they tend to have the highest penalized log likelihood.
To justify this restricted search, we compared the penalized log likelihood for tree topology candidates of different parsimony scores, where the data was generated using Simulation A.
To generate tree topologies with different parsimony scores, we started with the maximally parsimonious tree fit from Camin-Sokal and iteratively applied random SPR moves.
For each of tree rearrangement, we fit a model by maximizing the penalized log likelihood.
The penalty parameter is the same across all rearrangements.
As seen in Figure~\ref{fig:parsimony_log_lik}, the most parsimonious trees have the highest penalized log likelihoods.
Since our method aims to select a tree topology that maximizes the penalized log likelihood, it would not benefit from considering SPR moves that make the tree less parsimonious; instead, considering these additional moves would make the method much slower.
\begin{figure}
\begin{center}
\includegraphics[width=0.5\linewidth]{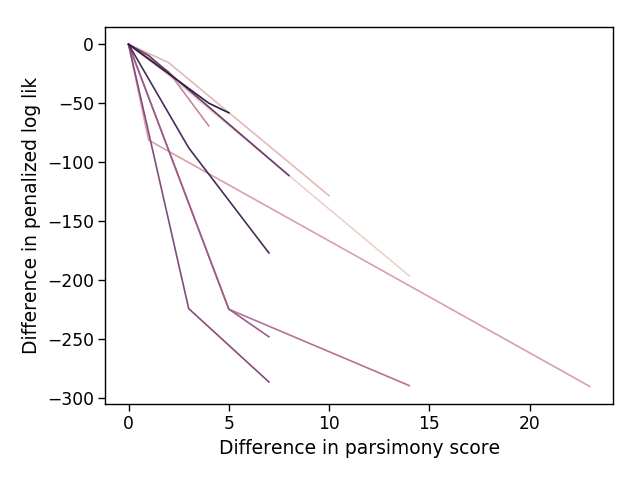}
\end{center}
\caption{
We compare the maximized penalized log likelihood of maximally parsimonious trees to less parsimonious trees.
Each simulation replicate, represented by each line, shows four candidate tree topologies, starting from the most parsimonious one ($x = 0$) to increasingly less parsimonious ones (large differences in parsimony score).
The y-value is the maximized penalized log likelihood of the candidate tree topology minus that of the maximally parsimonious tree.
}
\label{fig:parsimony_log_lik}
\end{figure}

\subsection{Zebrafish data analysis}
For the zebrafish analyses, we estimated the tree over at most 400 randomly selected alleles (without replacement).
50\% of the fish in this dataset had fewer than 400 alleles and the median number of unique alleles over the zebrafish datasets was 443.
25\% of the fish in this dataset had more than 1000 alleles.
We limit the number of alleles analyzed due to runtime restrictions.

To test if the fitted trees are recovering similar developmental relationships across fish rather than random noise, we ran a permutation test comparing the correlation between tissue distances from the estimated trees to that from randomly-estimated trees over randomly-shuffled data.
More specifically, for a given tree topology, we randomly permute the observed alleles at the leaves.
Each allele is associated with the number of times it is observed in each tissue type; we randomly shuffle these abundances over the possible tissue types within each allele.
Finally, we randomly assign branch lengths along the tree by drawing samples from a uniform distribution and using the $t$-parameterization of \citet{Gavryushkin2016-nz} to assign branch lengths.
The correlation between tissue distances in these random trees is close to zero.
All permutation tests were performed using 2000 replicates.

We also tested if the Pearson correlation between the number of tissue types/cell types and the internal node times is different from that of random trees.
The random trees were generated using the same procedure as above.

We conclude by noting that the random trees are generated using the estimated tree topology from each method.
Thus the null distributions are different and the p-values are not directly comparable between methods.
Though this slightly complicates interpretation, we prefer this approach since the estimated tree topology may naturally induce correlation between tissue distances.
For most validation tests, the mean of the null distribution was similar across the different methods, and therefore the p-values are somewhat comparable.
The major exception was the tests that checked recovery of cell-type and germ-layer restriction: here the mean of the null distribution were very different and we abstain from comparing p-values across methods.

For Figure~\ref{table:target_lam_correlation}, we bootstrapped fish replicates to estimate confidence intervals for the average correlation between estimated target cut rates.

\bibliographystyle{plain}
\bibliography{gestalt_arxiv_clean}

\hspace{0.01\textwidth}

\end{document}